\documentclass[11pt]{article}
\pdfoutput=1 

\usepackage[sc]{mathpazo}

\usepackage{palatino,microtype,dsfont}
\usepackage[margin=15pt,font=small,labelfont={bf}]{caption}
\usepackage[margin=1in]{geometry}
\usepackage[english]{babel}
\usepackage[utf8]{inputenc}
\usepackage[compact]{titlesec}
\usepackage{cmap}
\usepackage[T1]{fontenc}
\usepackage{bm}
\usepackage{tikz}
\usepackage{colortbl,booktabs} 
\usepackage{mathtools}
\usepackage{amsfonts}
\usepackage{amsmath}
\usepackage{amssymb}
\usepackage{amsthm}
\usepackage{float}
\usepackage{graphics}
\usepackage[numbers,sort]{natbib}
\usepackage[backref=page]{hyperref}
\usepackage[svgnames]{xcolor}
\hypersetup{colorlinks={true},urlcolor={blue},linkcolor={DarkBlue},citecolor=[named]{DarkGreen},linktoc=all}
\renewcommand*{\backref}[1]{}
\renewcommand*{\backrefalt}[4]{%
    \ifcase #1 (Not cited.)%
    \or        (Cited on page~#2)%
    \else      (Cited on pages~#2)%
    \fi}
\usepackage{framed}
\usepackage[capitalise,nameinlink,noabbrev]{cleveref}
\usepackage{doi}
\usepackage{fontawesome5}

\usepackage{paralist}

\newcommand{\BibTeX}{\rm B\kern-.05em{\sc i\kern-.025em b}\kern-.08em\TeX}
\usepackage[justification = centering]{subcaption}
\usepackage{tabularx}
\usepackage{thmtools,thm-restate}
\usepackage[linesnumbered, ruled,vlined]{algorithm2e}
\SetKwComment{Comment}{/* }{ */}

\newtheorem{definition}{Definition}[section]
\newtheorem{example}{Example}[section]
\newtheorem{theorem}{Theorem}[section]
\newtheorem{lemma}[theorem]{Lemma}%

\newtheorem{remark}{Remark}[section]
\newtheorem{observation}{Observation}[section]

\newtheorem{proposition}[theorem]{Proposition}

\usepackage[capitalise,nameinlink,noabbrev]{cleveref}
\Crefname{claim}{Claim}{Claims}
\Crefname{corollary}{Corollary}{Corollaries}
\Crefname{definition}{Definition}{Definitions}
\Crefname{example}{Example}{Examples}
\Crefname{lemma}{Lemma}{Lemmas}
\Crefname{property}{Property}{Properties}
\Crefname{proposition}{Proposition}{Propositions}
\Crefname{remark}{Remark}{Remarks}
\Crefname{theorem}{Theorem}{Theorems}

\crefname{algocf}{Algorithm}{Algorithms}
\Crefname{claim}{Claim}{Claims}
\Crefname{corollary}{Corollary}{Corollaries}
\Crefname{definition}{Definition}{Definitions}
\Crefname{example}{Example}{Examples}
\Crefname{lemma}{Lemma}{Lemmas}
\Crefname{proposition}{Proposition}{Propositions}
\Crefname{remark}{Remark}{Remarks}
\Crefname{theorem}{Theorem}{Theorems}

\DeclareMathOperator*{\argmax}{arg\,max}
\DeclareMathOperator*{\argmin}{arg\,min}
\newcommand{\ess}{\mathcal{S}}

\colorlet{myred}{red!25}
\colorlet{myblue}{blue!25}
\colorlet{mygreen}{green!25}
\colorlet{mygrey}{gray!15}

\usepackage{multirow,array}
\newcommand{\Instance}{\textup{\textsc{Menu Selection}}}

\usepackage[textwidth=1.2cm]{todonotes}

\title{Menu Selection:\\ A Computational Approach to Minimizing Food Waste}

\author{
\begin{tabular}{m{0.25\linewidth}m{0.2\linewidth}m{0.2\linewidth}m{0.25\linewidth}}
		  \multicolumn{2}{c}{\textbf{Haris Aziz}}                                                                       & \multicolumn{2}{c}{\textbf{Nicholas Mattei}}\\
		\multicolumn{2}{c}{\small{UNSW Sydney, Australia}}                                                            & \multicolumn{2}{c}{\small{Tulane University, USA}}\\
		\multicolumn{2}{c}{\href{mailto:haris.aziz@unsw.edu.au}{\small{\texttt{haris.aziz@unsw.edu.au}}}\quad}        & \multicolumn{2}{c}{\quad\href{mailto:nsmattei@tulane.edu}{\small{\texttt{nsmattei@tulane.edu}}}} \\
        &&&\\
        \multicolumn{2}{c}{\textbf{Shivika Narang}}                                                                   & \multicolumn{2}{c}{\textbf{Sanjukta Roy}}\\
		\multicolumn{2}{c}{\small{UNSW Sydney, Australia}}                                                            & \multicolumn{2}{c}{\small{Indian Statistical Institute, India}}\\
		\multicolumn{2}{c}{\href{mailto:s.narang@unsw.edu.au}{\small{\texttt{s.narang@unsw.edu.au}}}\quad}            & \multicolumn{2}{c}{\quad\href{mailto:sanjukta@isical.ac.in}{\small{\texttt{sanjukta@isical.ac.in}}}} \\
	\end{tabular}
}

\date{}

\begin{document}
\begin{titlepage}

\maketitle

\begin{abstract}
\noindent We introduce a novel collective decision making problem that captures the ubiquitous issue of ordering food to cater for varied dietary preferences and requirements.  Our settings involve agents with  diverse dietary requirements over menu options with varied serving sizes. The goal is to select a menu where everyone has enough food they can consume and wastage of food is minimized. We introduce two different consumption models: optimistic and pessimistic. Optimistic consumption assumes a situation when a central planner can optimally allocate the food ordered among the agents to maximize the number of people who get enough to eat. Pessimistic considers the worst case guarantee on consumption when agents fill their own plates in an arbitrary order. Under either consumption model, we  seek valid menus (under which all agents are sufficiently fed) of minimum size.

Our work provides two sets of characterizations: (1) we characterize valid menus under either consumption model and (2) we characterize the space of instances that admit polynomial-time algorithms to find minimum sized menus. Our results also help us design Integer Linear Programs to find minimum sized menus in general settings. Furthermore, we present polynomial-time algorithms for important special cases. 
We then consider the worst case discrepancy between the size of minimum sized optimistic and pessimistic menus. We call this the {\em waste of pessimism} captured by the ratio of the minimum sized pessimistic menu to that of the minimum sized optimistic menu. We show tight upper bounds on this ratio. Our results also provide additional insights on the problem of finding a minimum sized maximal matching, which may be of independent interest. 

\end{abstract}

\end{titlepage}

\sloppy

\section{Introduction}

Deciding on a suitable menu catering to varied dietary requirements is a ubiquitous problem when sharing food 
--- a quintessential human activity that transcends borders and divisions. 
Everyday, millions of people debate on how to choose a non-wasteful menu order for a group to jointly consume at an eatery of some sort.  It is important to decide on an order which ensures enough food for everyone but at the same time is not so large that it leads to wastage.  
The choice of food to be shared may be affected by the various dietary restrictions and requirements among the diners. Dietary restrictions can stem from religious, ethical, health or many other considerations. For example, some people only eat vegan. Others avoid nuts because of allergies. Some cannot tolerate spice. 
In the broader sense, menu composition also comes into play when designing humanitarian food packages as outlined by United Nations High Commissioner for Refugees (UNHCR) with different standards for children, the elderly, infirm, and other groups.\footnote{\url{https://emergency.unhcr.org/emergency-assistance/food-security/emergency-food-assistance-standard}} When ordering food, we want to ensure that people have \textit{enough food} that they find feasible to eat but \textit{not so much that it leads to unnecessary waste}. The need for small sized menus may be from a variety of factors such as the food waste or the storage requirements from leftovers being minimized.  
We refer to the problem of finding a minimum sized menu that feeds everybody as the \emph{Menu Selection} problem.


For a given menu, what food is actually consumed, depends on the order in which people fill their plates and what they choose. If a central planner allocates the selected food among the agents, it is easier to ensure the consumption is maximized and wastage, in turn, is minimized. In contrast, if the agents fill their own plates coming in an {\em arbitrary order}, much more food may be wasted. A less restricted person may end up consuming food intended for someone who can only eat from a limited number of options. This may leave the person with more food restrictions with nothing to eat. When ordering food, one could potentially order a separate acceptable dish or menu option for each person. However, such an approach can be quite wasteful especially if the menu options serve multiple people. The central problem of this paper is how do we design a menu order that ensures that wastage is minimized and everyone is well-fed under pessimistic and/or optimistic perspectives. To this end, we introduce a formal conceptual framework which captures how the ordered menu is consumed. The smallest menu that feeds all agents will have the least food waste. Consequently, we focus on finding minimum sized menus that feed all agents. 

\begin{example}[Motivating Example]
	Consider an example where there are four agents $1, 2, 3, 4$ who want to  order something to eat together. The first three agents are vegetarian whereas agent 4 can also eat meat. 
    There are two food options $o_1$ (a vegetarian pizza) and $o_2$ (a chicken burger) with serving sizes 3 and 1 respectively.  Since agents $1, 2, 3$ are vegetarians, they can only eat the pizza and hence only find $o_1$ acceptable. On the other hand,  agent 4 finds both food options acceptable. 
    This acceptability relation can be represented as a bipartite graph (Figure~\ref{fig:graph}). 
    Note that since agents 1, 2, and 3 are vegetarian, they have a green leaf icon next to them and they only have an edge with option $o_1$ representing pizza (with a green pizza slice next to it). Agent 4 has edges with both food options. 
    
    The goal is to order sufficient food for the agents. If the order is $\{o_1,o_2\}$, it provides sufficient food and guarantees no food wastage for the agents as long as agent 4 does not eat any part of option $o_1$. There is risk that if agent 4 eats some of the pizza, then the other agents will not have a sufficient amount of acceptable food. 
On the other hand, if two units of $o_1$ are ordered, then the agents have sufficient food to eat irrespective of who eats what. In this case, 2 servings of food are wasted.
Ordering two units of $o_1$ is risk averse but it leads to food wastage. 
\label{example:basic}
\end{example}

\begin{figure}[t]
\centering
\scalebox{0.75}{
\begin{tikzpicture}[>=stealth]

    \tikzset{
        leaf/.pic={
            \fill[green!70!black] (0,0) .. controls (0.1,0.2) and (0.2,0.2) .. (0.2,0) 
                                  .. controls (0.2,-0.2) and (0.1,-0.2) .. (0,0) -- cycle;
        }
    }

    \def\yc{1}
    \def\xc{3}

    \filldraw[green!70!black] (\xc-2.4,\yc+1.0) node {\faLeaf};
    \node (1) at (\xc -2.0,\yc+1.0) {1};
 
    \filldraw[green!70!black] (\xc-1.1,\yc+1.0) node {\faLeaf};
    \node (2) at (\xc-0.7,\yc+1.0) {2};

    \filldraw[green!70!black] (\xc+0.3,\yc+1.0) node {\faLeaf};
    \node (3) at (\xc+0.7,\yc+1.0) {3};

    \node (4) at (\xc+1.8,\yc+1.0) {4};

    \filldraw[green!70!black] (\xc-2.4,\yc-1.0) node {\faPizzaSlice};
    \node (o1) at (\xc-2,\yc-1.0) {$o_1$};
    \node[below=0pt of o1] {pizza \quad $s_{o_1}=3$};

    \filldraw[orange!90!brown] (\xc+1.4,\yc-1.0) node {\faDrumstickBite};
    \node (o2) at (\xc+1.8,\yc-1.0) {$o_2$};
    \node[below=0pt of o2] {chicken \quad $s_{o_2}=1$};


    \draw[thick] (1) -- (o1);
    \draw[thick] (2) -- (o1);
    \draw[thick] (3) -- (o1);
    \draw[thick] (4) -- (o2);
    \draw[thick] (4) -- (o1);
    \draw (\xc,\yc+1.0) ellipse (3.2 and 0.5);
    \node at (\xc-5.5,\yc+1.0) {\textbf{Agents}};

    \draw (\xc,\yc-1.2) ellipse (4.0 and 1);
    \node at (\xc-5.5,\yc-1.0) {\textbf{Options}};

\end{tikzpicture}
}
\caption{A graphical representation of agents and options described in \cref{example:basic}, with vegetarian agents indicated by a leaf icon on the left.}
\label{fig:graph}
\end{figure}
Ordering more food than necessary invariably leads to food waste.  Minimizing food waste at the consumer end is a concern across the developed world. Whether due to food being quick to spoil, limited to no storage for leftovers or the increased cost associated with ordering more food, when ordering food collectively, it is imperative to minimize food waste. On a global scale, halving the total amount of food wasted is one of the United Nations' Sustainable Development Goals, specifically target 12.3.\footnote{https://sdgs.un.org/goals/goal12} 
Some prior work has gone into algorithms for selecting menus \cite{fish2025stable} and on computationally minimizing food waste by learning food preferences and restrictions \cite{panda2019minimizing,flynn2025dish}. To the best of our knowledge, the objectives have not been studied together. We study a complementary setting where we are already equipped with the knowledge of individual food restrictions and aim to select a menu that can minimize food wastage, while feeding everyone sufficiently. 



\subsection*{Our Contributions}

\textbf{Conceptual Contribution.} We detail a new formal model for a natural collective choice problem that concerns ordering food \emph{with the goal of feeding a group while minimizing waste}. Our setting has a set of $n$ agents, a set of $m$ options each with associated serving size $s$, and each agent $i$ has an associated acceptability set $A_i$ consisting of all options $i$ can consume.  We develop two models of consumption to capture two extremes of how people may eat: \emph{optimistic and pessimistic}. 

Intuitively, these concepts are based on whether agents' plates are filled in a centralized or a decentralized manner. We call the former optimistic consumption, where a central entity is able to serve each agent food they find acceptable. We call the latter as pessimistic consumptions, we consider the worst case where agents may fill their plates themselves, in an arbitrary order. This setting allows for some agents to ``adversarially'' fill their plates in order to minimize the food available to a specific set of agents. For either consumption model, we wish to ensure that there is enough food for all agents. We call a menu which ensures that every agent is fed as optimistically or pessimistically {\em valid}. These terms are all defined in \cref{sec:model}. We defer an extended discussion on how our model relates to existing work to \cref{app:relwork}.

Further, this \textit{introduces a new line of questioning}: given a fixed algorithm and partial control on the input instance, can we ensure optimal outcomes with minimum inputs? Specifically, optimistic consumption is captured by a maximum cardinality matching between agents and the menu. In turn, pessimistic consumption is captured by the minimum sized maximal matching. Consequently, the problem of finding min sized valid menus is the same as ensuring the respective matching algorithm produce agent-saturating matchings.  In  \cref{sec:characterizations}, we show the equivalence between these matching based definitions and the motivating picking sequence approach. These characterizations help us design algorithms to find minimum sized menus. They also inform our subsequent results. 
Further, we adapt the well studied notion of price of anarchy to this setting and call it the {\em waste of pessimism}.  Specifically, this is the ratio of the sizes of the minimum sized pessimistically valid menu to that of the minimum sized optimistically valid menu. Our tools and algorithms could inform food delivery apps, humanitarian aid packaging, and even recommendations at restaurants in order to minimize waste and satisfy participating agents.   

Our results can be classified along the following broad directions, and our algorithmic and waste of pessimism results  are also summarized in \cref{tab:contributions}. 
\begin{enumerate}
     \item Characterizing valid menus under optimistic and pessimistic consumption, 
     \item Classifying instances that admit polynomial-time algorithms for computing minimum sized valid menus, and
     \item Obtaining tight bounds on the worst-case waste of pessimism.
\end{enumerate}

 \noindent\textbf{Computational Results.} We prove in \cref{subsec:complexity} that checking if a given instance admits an optimistically or pessimistically valid menu of size at most $B$ is NP-complete. For the optimistic case, this intractability result requires that serving sizes are at least $3$. Meanwhile, for the pessimistic case, the intractability holds even when all options have a serving size of one.  Based on the characterizations of valid menus, we design mixed integer linear programs (MILPs) for computing minimum sized valid menus in \cref{subsec:mips}.  While these MILPs gives us practical tools for computing minimum sized menus, they do not give good worst case guarantees. We develop a significantly faster algorithm, parameterized with respect to the number of agents $n$, for computing a minimum sized optimistically valid menu in \cref{subsec:fpt}. Note that we typically expect the number of options available $m$ exceed the number of agents. Consequently, \cref{alg:fpt} running in time $O^*(2^n)$ will be faster, in practice, than a naive algorithm which is exponential in the number of options.  

In \cref{sec:algorithms} we complement the intractability results by providing polynomial-time algorithms for computing minimum sized valid menus under several natural restrictions. 
We find that the efficient algorithms apply to several common realistic settings. For the case of optimistic consumption, we find polynomial-time algorithms for the following cases: (a) each option has a serving size of $1$ or $2$ (presented in \cref{subsec:smallplates}) and (b) Acceptability sets are {\em laminar:} for any two agents, their acceptability sets must be either disjoint or one is contained in the other (provided in \cref{subsec:laminar}). Laminar food restrictions seem quite natural: e.g., those who eat meat can eat vegetarian food as well, similarly, those who prefer vegetarian food, can have vegan food as well -- satisfying laminarity. For pessimistic consumption, we find a polynomial-time algorithm when acceptability sets are laminar. To this end, we refine the characterization for pessimistically valid menus for the laminar setting. These results help us in analyzing the waste of pessimism. 
\begin{table}[t]
	\centering
    \crefname{theorem}{Th.}{Th.}
    \crefname{proposition}{Prop.}{Prop.}
    \crefname{corollary}{Cor.}{Cor.}
	\scalebox{0.85}
    {
		\begin{tabular}{lclclcl}
			\toprule
            & \multicolumn{3}{c}{ Min Sized Valid Menus }& &\quad  {Waste of }\\
			& \quad Optimistic&  &   \quad Pessimistic\quad  & &\quad Pessimism \\
			\midrule
            \multirow{2}{*}{General} & \quad  NP-comp  & (\cref{thm:opt-menu-npc})  & \quad  NP-comp \quad         & (\cref{thm:pess-menu-npc})\quad              & \quad \multirow{3}{*}{$\geq \frac{n+4}{4}$}\\
                                     & $O^*(2^n)$      &(\cref{th:fpt})              \\
          $s_o\leq 2$               &in P &(\cref{thm:small-plates})                & \quad NP-comp  \quad  & (\cref{thm:pess-menu-npc})\quad  & \\
          Identical                 &in P & \multirow{3}{*}{\Bigg\}(\cref{thm:laminar-opt})}  & in P        &  \multirow{3}{*}{\Bigg\}(\cref{thm:laminar-pess})}  & $=1$   &\multirow{3}{*}{\Bigg\}(\cref{thm:wop-laminar})}\\         
          Chained                   &in P &                                         & in P                  &                                                     &$\leq 2$\\
          Laminar                   &in P &                                         & in P                  &                                                     & $\quad \leq  \frac{n+4}{4}$\\
			\bottomrule
		\end{tabular}
	}
	\caption{Summary of results on (1) computational complexity of  finding minimum sized menus and (2) known bounds on waste of pessimism.  }
	\label{tab:contributions}
\end{table}

\noindent  \textbf{Waste of Pessimism.} In \cref{sec:wop}, we study the worst case discrepancy between the size of minimum sized optimistic and pessimistic menus. We call this the {\em waste of pessimism}, captured by the ratio of the size of the minimum sized pessimistic menu to that of the minimum sized optimistic menu. We show tight upper bounds on this ratio for structured settings. 
Our main result is a tight bound for the case of laminar acceptability sets. Specifically, we show that given an instance with laminar acceptability relations, waste of pessimism is at most $\frac{n+4}{4}$, where $n$ is the number of agents. Interestingly, there is an instance where this bound is tight and the serving sizes are at most $2$. In contrast, we prove that the worst case bound to be $2$ when we assume {\em chained} acceptability sets, i.e., for any two agents, the acceptability set of one must be a subset of the other. Observe that the laminar setting is a generalization of the chained setting. This shows that the worst case blowup in menu size is not very significant when agents do not have restrictive requirements. 

\textbf{Consequences For Computing Minimum Sized Maximal Matchings.} Throughout the paper we use matchings as a tool to both describe consumption models as well as find solutions. Specifically, we use minimum sized maximal matchings to define pessimistic consumption. As a result, our characterization for pessimistically valid menus also provides insights on deciding given a bipartite graph, whether the minimum sized maximal matching saturates one side of the bipartition. Further, a consequence of our algorithm for laminar acceptability sets shows that a minimum sized maximal matching can be computed when neighborhoods of vertices are laminar.






Any omitted proofs are deferred to the appendix.



\section{Model}\label{sec:model}
We consider a set of agents $N=\{1,2,\dots, n\}$ and a set of options $O=\{o_1,\dots, o_m\}$. Each option can be viewed as a dish that could potentially feed multiple people. We assume all agents need the same amount of food and will not eat more than what they need. 
The term $s_o\in \mathbb{Z}_+\setminus\{0\}$ indicates the serving size of $o\in O$, that is,  how many people can be fed by one quantity of $o$.  For example, a pasta dish may serve one person, but a large pizza is intended to be shared by multiple people. 

Each agent $i\in N$ has a set of acceptable options $A_i\subseteq O$ that do not violate $i$'s personal dietary requirements. We assume that all options in $O\setminus A_i$ are not consumable by/unacceptable to $i$. Each agent has a requirement of getting a serving of one unit of food from any of the options that are acceptable to them. Note that agents do not express preferences over dishes beyond the dichotomous preferences of  acceptable vs unacceptable. Without loss of generality, we assume that for each $o\in O$, there exists some $i\in N$ s.t. $o\in A_i$; otherwise we could discard the option.

Our aim is to select a minimum sized menu that can feed all agents something they find acceptable, i.e., a menu order or, more simply, menu $M$ is a multiset of elements in $O$ along with the quantity selected for each option. We use the notation $q_o(M)$ to denote the number of occurrences of $o\in O$ within $M$. That is, $q_o(M)$ is the quantity of $o$ ordered under $M$. We use $size(M)=\sum_{o\in O}q_os_o$ to denote the size of menu $M$.
Given a set $S\subseteq O$, we shall use $M|_S$ to denote the menu $M'$ where $q_o(M')=q_o(M)$ for each $o\in S$ and $q_o(M')=0$ for each $o\in O\setminus S$.
We call our menu planning problem as the \Instance{} problem and denote an instance of it by $I=\langle N,O,(s_o)_{o\in O},(A_i)_{i\in N}\rangle$. 

{
\begin{example}[\Instance{} instance]
We now revisit the informal Example~\ref{example:basic} and reframe it in the notation introduced. 
Consider a problem with $4$ agents $N=\{1,2,3,4\}$ and two options $O=\{o_1,o_2\}$ where $s_{o_1}=3$ and $s_{o_2}=1$, and $A_1=A_2=A_3=\{o_1\}$ and $A_4=\{o_1,o_2\}$. 
    \end{example}
}


\subsection*{Consumption Models}

When ordering food, we typically have some natural constraints. Our primary constraint is that the selected menu should ensure that {\em each agent has enough food to eat}. Further, it is important to note that if agents are allowed to fill their plates, however they wish, one at a time in some order, then a menu that might suffice for everyone under one ordering of the agents may not be sufficient under another. To this end, we study two consumption models: optimistic and pessimistic.
We say that a menu is valid under a specific model, if under that model, there is enough acceptable food for each agent. In order to capture the ways in which agents may consume from a menu, we introduce the following tool:

\begin{definition}[Consumption Graph]
    Given a \Instance{} instance $I=\langle N,O,(A_i)_{i\in N}$, $(s_o)_{o\in O}\rangle$  and a menu $M$, the consumption graph of $M$, denoted by $G_M(I)=(N,O_M,E_M)$, is a bipartite graph where there is a vertex for each agent $i$ and $q_o(M)s_o$ vertices for each option $o\in O$. There is an edge from $i$ to a copy of $o$ s.t. $q_o(M)\geq 1$ if and only if $o\in A_i$.
\end{definition}

We shall capture how agents fill their plates from the selected menu by a matching on this consumption graph. Given a menu $M$, $x$ is a \emph{feasible matching} if $x:N\times O\rightarrow [0,1]$, s.t. $\sum_{o\in A_i}x_{i,o}\leq 1$,  $\sum_{i\in N}x_{i,o}\leq q_o(M)s_o$ and $x_{i,o}>0$ only if $o\in A_i$. That is, $x_{i,o}$ denotes the amount of option $o$ that $i$ consumes, where $i$ only consumes acceptable options and the maximum amount of $o$ consumable is $q_o(M)s_o$, where $q_o(M)$ is the quantity of $o$ ordered. Note that we use a definition of matchings consistent with linear programs, in order to simplify our discussion of ILPs later. 
Furthermore, it is well known that the linear polytope is integral \cite{plummer1986matching}. Consequently, we shall assume, without loss of generality, that any feasible matching $x$ that we compute is integral.

\subsection*{Optimistic Consumption} Under the optimistic consumption model, we effectively assume that a central planner fills everyone's plates such that all agents have enough acceptable food to eat. This can be seen as a setting where agents' plates are being prepared beforehand and being served to the agent at the table. 

\begin{definition}[Optimistically Consumed Amount $\lambda$]
    Given a \Instance{} instance $I=\langle N,O,(A_i)_{i\in N}$, $(s_o)_{o\in O}\rangle$ and a menu $M$, let $x$ be the maximum sized matching of the consumption graph $G_M(I)$. The optimistically consumed amount of $M$ is $\lambda(M)=\sum_{(i,o)\in N \times O}x_{i,o}$, that is, the size of $x$.
\end{definition}

{Optimistic Consumption setting can be thought of as a generalization of set cover. In set cover, each set covers all its constituent elements but in our setup, one option can only cover the corresponding serving size. } As we need to feed all agents, we look for optimistically valid menus.

\begin{definition}[Optimistically Valid Menus]
    Given a \Instance{} instance $I=\langle N,O,(A_i)_{i\in N}, $$(s_o)_{o\in O}\rangle$, a menu $M$ is valid under optimistic consumption, or is {\em optimistically valid}, if $\lambda(M)=n$. We use $\Lambda(I)$ to denote the space of {\em all} optimistically valid menus. 
\end{definition}
That is, $M$ is optimistically valid if there exists a feasible matching $x:N\times O\rightarrow [0,1]$, s.t. $\sum_{o\in O} x_{i,o}=1$.
Observe that if there is such a matching, we can divide the food among the $n$ agents that each agent gets a sufficient  amount of something they find acceptable. 
 We shall call this the {\em optimistic consumption matching.}

Observe that optimistic consumption is not always feasible in practice, for instance at children's birthday parties or in airplanes. Consider when there are two passengers on an airplane and one of them is vegan. Optimistically, it would suffice to have only one vegan meal and one meat based meal. If the non-vegan passenger is asked for their meal choice first and they take the vegan meal, then the vegan passenger is left with nothing to eat. To this end, we also study pessimistic consumption. 

\subsection*{Pessimistic Consumption} Under pessimistic consumption, we allow agents to fill their plate in any manner, so long as they only eat from acceptable options and do not purposefully waste food. That is, we only consider those settings, where if an agent has a plate that is not full, there should be nothing left for them to eat. Allowing for complete autonomy in this manner, we wish to ensure that there should be enough food for each agent. We capture such consumption using \textit{maximal matchings}. 

Formally, we call a matching $x$ for the consumption graph $G_M$ \textit{maximal} if for each $i\in N$ s.t. $\sum_{o\in O}x_{i,o}<1$, we have that for each option $o\in \{o\in A_i|q_o(M)>0\}$, it holds $\sum_{i'\in N}x_{i',o}=q_o(M)s_o$.

\begin{definition}[Pessimistically Consumed Amount $\gamma$]
    Given a \Instance{} instance $I=\langle N,O,(A_i)_{i\in N}$, $(s_o)_{o\in O}\rangle$ and a menu $M$, let $x$ be the minimum sized maximal matching of the consumption graph $G_M(I)$. The pessimistically consumed amount of $M$ is $\gamma(M)=\sum_{(i,o)\in N \times O}x_{i,o}$, that is, the size of $x$.
\end{definition}

\begin{definition}[Pessimistically Valid Menus]
    Given a \Instance{} instance $I=\langle N,O,(A_i)_{i\in N}, (s_o)_{o\in O}\rangle$, a menu $M$ is valid under pessimistic consumption, or is {\em pessimistically valid}, if $\gamma(M)=n$. We use $\Gamma(I)$ to denote the space of {\em all} optimistically valid menus. 
\end{definition}
Analogous to the optimistic setting, we shall call the minimum sized maximal matching $x:N\times O\rightarrow [0,1]$ as the {\em pessimistic consumption matching}. 


{
\begin{example}[Optimistically and Pessimistically Valid Menus]\label{ex:menus}
Consider Example~\ref{example:basic} with $4$ agents $N=\{1,2,3,4\}$ and two options $O=\{o_1,o_2\}$ where $s_{o_1}=3$ and $s_{o_2}=1$, and   $A_1=A_2=A_3=\{o_1\}$ and $A_4=\{o_1,o_2\}$. 
Then $\{o_1,o_2\}$ is an optimistically value menu. However, it is not pessimistically valid. The menu $\{o_1,o_1\}$ is pessimistically valid. We illustrate these menus in \cref{fig:consumption-graph}.
    \end{example}
}

\begin{figure*}[t]
\centering

\scalebox{0.7}{
\begin{tikzpicture}[>=stealth]

    \def\yc{1}
    \def\xc{5}

    \filldraw[green!70!black] (-0.3,\yc+1.0) node {\faLeaf};
    \node (i1) at (0,\yc+1) {1};

    \filldraw[green!70!black] (-0.3,\yc+0.35) node {\faLeaf};
    \node (i2) at (0,\yc+0.35) {2};

    \filldraw[green!70!black] (-0.3,\yc-0.35) node {\faLeaf};
    \node (i3) at (0,\yc-0.35) {3};

    \node (i4) at (0,\yc-1) {4};

    \node (o1) at (\xc-2,\yc+0.5) {$o_1$};

    \node (o2) at (\xc-2,\yc-0.5) {$o_2$};

    \filldraw[thick,red!80!black] (i1) -- (o1);
    \filldraw[thick,red!80!black] (i2) -- (o1);
    \filldraw[thick,red!80!black] (i3) -- (o1);
    \filldraw[thick,red!80!black] (i4) -- (o2);
    \draw[dash pattern={on 3pt off 1pt}] (i4) -- (o1);
    \draw[thick] (0,\yc) ellipse (1.0 and 1.4);
    \node at (0,\yc+1.7) {\textbf{Agents}};

    \draw[thick] (\xc-2,\yc) ellipse (0.6 and 1.2);
    \node at (\xc-2,\yc+1.7) {\textbf{Optimistic Menu}};
    \node at (1.7,\yc-1.7) {$M_1=\{o_1,o_2\}$};

    \filldraw[green!70!black] (\xc+1.7,\yc+1.0) node {\faLeaf};
    \node (j1) at (\xc+2,\yc+1) {1};

    \filldraw[green!70!black] (\xc+1.7,\yc+0.35) node {\faLeaf};
    \node (j2) at (\xc+2,\yc+0.35) {2};

    \filldraw[green!70!black] (\xc+1.7,\yc-0.35) node {\faLeaf};
    \node (j3) at (\xc+2,\yc-0.35) {3};

    \node (j4) at (\xc+2,\yc-1) {4};

    \node (p1) at (\xc+5,\yc+0.5) {$o_1$};

    \node (p2) at (\xc+5,\yc-0.5) {$o_1$};

    \filldraw[thick,red!80!black] (j1) -- (p1);
    \filldraw[thick,red!80!black] (j2) -- (p1);
    \draw[dash pattern={on 3pt off 1pt}] (j3) -- (p1);
    \draw[dash pattern={on 3pt off 1pt}] (j4) -- (p1);
    \draw[dash pattern={on 3pt off 1pt}] (j1) -- (p2);
    \draw[dash pattern={on 3pt off 1pt}] (j2) -- (p2);
    \filldraw[thick,red!80!black] (j3) -- (p2);
    \filldraw[thick,red!80!black] (j4) -- (p2);
    \draw[thick] (\xc+2,\yc) ellipse (1.0 and 1.4);
    \node at (\xc+2,\yc+1.7) {\textbf{Agents}};

    \draw[thick] (\xc+5,\yc) ellipse (0.6 and 1.2);
    \node at (\xc+5,\yc+1.7) {\textbf{Pessimistic Menu}};
    \node at (\xc+3.7,\yc-1.7) {$M_2=\{o_1,o_1\}$};
\end{tikzpicture}
}
\caption{Consumption graphs of the minimum sized menus in \cref{ex:menus}. The corresponding optimistic and pessimistic matchings are drawn in {\color{red!80!black} red}. Edges not in the matching are dashed.}
\label{fig:consumption-graph}
\end{figure*}

\begin{remark}
     Consider an instance where for each agent $|A_i|= 1$. In this case, we need to order the one item this agent finds acceptable or they will not be able to eat. Specifically, for each option $o\in O$, we need to order $\left\lceil \frac{|\{i\in N|o\in A_i\}|}{s_o} \right\rceil$ quantities of $o$. Recall that for each option, we assume at least one agent finds it acceptable. In the case where $|A_i|=1$, each option must be ordered. This is a minimum sized optimistically and pessimistically valid menu. 
\end{remark}

\section{Characterizations of Valid Menus}\label{sec:characterizations} 
In this section, we provide characterizations of optimistically and pessimistically valid menus. These characterizations also help connect our graph theoretic definitions of consumption to the motivation: ensure that each agent is fed one unit of acceptable food.  To this end, we first define a consumption sequence to denote the order in which the options in a given menu are consumed.  
Recall that given $S\subset O$ and a menu $M$, $M|_S$ denotes the multiset that contains as many copies of $o\in S$ as $M$ and no copies of options not in $S$. For the purposes of this section and its proofs, we shall use the notation $\widehat{M}$ to denote the multiset containing $q_o(M)s_o$ quantities of each option $o\in O$. We say that $\pi$ is an ordering over $N$ or a {\em picking sequence} if $\pi:[n]\rightarrow N$ and $\pi$ is a bijection. 

\begin{definition}[Possible Outcome]
    Given a \Instance{} instance $I$ and a menu $M$ and a picking sequence $\pi$, we say that $\omega:[n]\rightarrow O\cup \emptyset$ is a possible outcome of $\pi$ on $M$ if  
    for each $k\in [n]$, we have that $\omega(k)\in \widehat{M}^k|_{A_{\pi(k)}}$ where $\widehat{M}^k=\widehat{M}\setminus \left(\bigcup_{k'<k} \omega(k')\right)$, i.e., $\widehat{M}^k$ is the set of available options after the first $k-1$ agents have picked and $\widehat{M}^k|_{A_{\pi(k)}}$ is the set of options in $\widehat{M}^k$ that are acceptable by the agent $\pi(k)$. Further, it must hold that $\omega(k)=\emptyset \Leftrightarrow \widehat{M}^k|_{A_{\pi(k)}}=\emptyset$. We use $\Omega(\pi,M)$ to denote the space of all possible outcomes of $\pi$.
    \end{definition}

In other words,  $\omega$ captures one possible outcome of agents picking food according to picking sequence $\pi$ s.t. at the $k^{\text{th}}$ turn agent $\pi(k)$ picks one unit of an acceptable option from $M$, if any are left. Recall that our motivation for the optimistic case was to ensure that there was some way to ensure that each agent finds something acceptable to eat. The following characterization shows that our definition does precisely this. 

\begin{restatable}{theorem}{optcharac}[Characterization of optimistically valid menus]
Given  a \Instance{} instance $I=\langle N,O,(A_i)_{i\in N}, (s_o)_{o\in O}\rangle$ and a menu $M$, the following conditions are equivalent: 
\begin{enumerate}
    \item $M$ is optimistically valid;
    \item The consumption graph $G_M$ admits matching of size $n$;
    \item There exists a picking sequence $\pi$ and a possible outcome $\omega$ of $\pi$ on $M$ s.t. for each $k\in [n]$, $\omega(k)\neq \emptyset$.  
\end{enumerate}
\label{th:charact-opt}
\end{restatable}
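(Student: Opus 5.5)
We prove the cycle (1)$\Leftrightarrow$(2) and (2)$\Leftrightarrow$(3).

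\textbf{(1)$\Leftrightarrow$(2).} This is essentially definitional. By definition $\lambda(M)$ is the size of a maximum matching of $G_M(I)$, and $M$ is optimistically valid iff $\lambda(M)=n$. Every agent has capacity one, so any matching has size at most $n$. Hence $\lambda(M)=n$ iff some matching has size exactly $n$. Integrality of the bipartite matching polytope lets us take this matching to be integral, so the fractional $x$ of the definition and a genuine graph matching coincide.

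\textbf{(2)$\Rightarrow$(3).} Take an integral matching of size $n$. It assigns to each agent $i$ a distinct vertex of $G_M$, i.e.\ a copy of some option $o_i\in A_i$. The number of agents assigned to copies of $o$ is at most $q_o(M)s_o$, the multiplicity of $o$ in $\widehat{M}$. Let $\pi$ be any picking sequence, say the identity, and set $\omega(k)=o_{\pi(k)}$.

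We check by induction on $k$ that $\omega$ is a possible outcome. Before turn $k$, the number of copies of $o_{\pi(k)}$ already removed equals the number of earlier agents matched to copies of that option. This is strictly less than the total number of agents matched to it, which is at most $q_{o}(M)s_{o}$. So a copy of $o_{\pi(k)}$ remains in $\widehat{M}^k|_{A_{\pi(k)}}$ and may be picked. The remaining condition, $\omega(k)=\emptyset\Leftrightarrow\widehat{M}^k|_{A_{\pi(k)}}=\emptyset$, holds trivially: $\omega(k)$ is never empty, and the available set is nonempty.

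\textbf{(3)$\Rightarrow$(2).} Given $\pi$ and $\omega$ with $\omega(k)\neq\emptyset$ for all $k$, define $x_{\pi(k),o}=1$ iff $\omega(k)=o$, and $0$ otherwise. Each agent then consumes exactly one unit, and only from an acceptable option, since $\omega(k)\in A_{\pi(k)}$. Each pick removes one element from the multiset $\widehat{M}$, so the number of $k$ with $\omega(k)=o$ is at most $q_o(M)s_o$. Thus $x$ is a feasible matching of size $n$.

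\textbf{Main obstacle.} Nothing here is deep. The only point needing care is the bookkeeping between multiset copies in $\widehat{M}$ and vertices of $G_M$. We must show that the counting argument in (2)$\Rightarrow$(3) really keeps a copy available at every turn; the induction above handles this. In that direction we must also confirm that the ``$\Leftrightarrow\emptyset$'' clause of a possible outcome is not violated, which holds because $\omega(k)$ is never empty.
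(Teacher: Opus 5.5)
Your proposal is correct and follows essentially the same route as the paper: (1)$\Leftrightarrow$(2) is definitional, a size-$n$ matching becomes a possible outcome under an arbitrary picking sequence by letting each agent pick its matched option, and conversely the picks of an outcome with no empty turns define a size-$n$ matching, since at most $q_o(M)s_o$ picks can land on option $o$. Your explicit count showing that a copy of each agent's matched option is still available at its turn, and that the $\emptyset$-clause of a possible outcome holds, is slightly more careful than the paper's treatment of that step.
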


\begin{remark}
    Note that in the proof of the above theorem, when constructing a consumption sequence for an optimistic menu, the exact ordering of agents is irrelevant, as long as their chosen option is consistent with a maximum cardinality matching of the consumption graph $G_M$.
\end{remark}

We now turn to pessimistic menus. Recall that a menu $M$ is pessimistically valid if {\em every} minimum sized maximal matching of $G_M$ has size $n$. Note that,  the exact  computation of a minimum sized maximal matching is not essential to check for pessimistic validity. It's only necessary to check its size is $n$. 
Specifically, even those maximal matchings that capture adversarial behavior towards any one agent need to have size $n$.  To this end, we need that for each agent $i\in N$, there should be more food ordered from $A_i$ than the other agents $N\setminus \{i\}$ combined can eat. We capture this risk to pessimistic validity as follows:

\begin{definition}[Pessimistic Risk $\rho$]
     Given a \Instance{} instance $I$ and a menu $M$, for an agent $i\in N$, the pessimistic risk under $M$ denoted by $\rho_i(M)$ is the size of the maximum cardinality matching between $N\setminus\{i\}$ and $M|_{A_i}$.
\end{definition}

We find that this definition helps us succinctly capture a necessary and sufficient condition for pessimistically valid menus. 

\begin{restatable}{theorem}{pesscharac}[Characterization of pessimistically valid menus]\label{lem:pess-valid}
Given  a \Instance{} instance $I=\langle N,O,(A_i)_{i\in N}, (s_o)_{o\in O}\rangle$ and a menu $M$, the following conditions are equivalent:
\begin{enumerate}
    \item $M$ is pessimistically valid; 
    \item For every agent $i\in N$, we have that $size(M|_{A_i})=\sum_{o\in A_i} q_o(M)s_o\geq 1+ \rho_i(M)$;
    \item For every picking sequence $\pi$ and every $\omega\in \Omega(\pi,M)$, we have that for each $k\in [n]$, $\omega (k)\neq \emptyset$.  
\end{enumerate}
\end{restatable}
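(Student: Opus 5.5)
We prove the cycle $(1)\Rightarrow(3)\Rightarrow(2)\Rightarrow(1)$, with contrapositives where convenient. The bridge throughout is a dictionary between possible outcomes and maximal matchings. Given $\pi$ and $\omega\in\Omega(\pi,M)$, setting $x_{\pi(k),\omega(k)}=1$ whenever $\omega(k)\neq\emptyset$ gives a feasible integral matching of $G_M$. It is maximal: if $\omega(k)=\emptyset$, every acceptable option was already exhausted at turn $k$, hence also at the end. Conversely, given a maximal integral matching $x$, list the matched agents first in any order, each taking its matched copy, and then the unmatched agents. Every matched agent's pick is available because each copy is used at most once. Each unmatched agent finds all acceptable options fully consumed by maximality, so it must pick $\emptyset$. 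This yields a possible outcome with exactly the unmatched agents receiving nothing.

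\textbf{$(1)\Rightarrow(3)$.} We argue the contrapositive. Suppose some $\pi,\omega$ leave an agent with $\omega(k)=\emptyset$. The first direction of the dictionary gives a maximal matching of size $<n$, so the minimum sized maximal matching has size $<n$ and $\gamma(M)<n$.

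\textbf{$(3)\Rightarrow(2)$.} We argue the contrapositive. Suppose that for some agent $i$ we have $size(M|_{A_i})\le\rho_i(M)$. Take a maximum matching $y$ between $N\setminus\{i\}$ and the copies in $M|_{A_i}$. Since it has size $\rho_i(M)\ge size(M|_{A_i})$ and cannot exceed the number of copies, it uses every copy of every option in $A_i$. Build $\pi$ by first placing the agents matched in $y$, each picking its $y$-partner, which is available and acceptable. Then place $i$. At $i$'s turn nothing in $A_i$ remains, so $\omega(\cdot)=\emptyset$ for $i$. Finally extend with the remaining agents, choosing arbitrary legal picks, which gives a valid possible outcome violating $(3)$.

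\textbf{$(2)\Rightarrow(1)$.} Suppose some maximal matching $x$ of $G_M$ has size $<n$, and let $i$ be an unmatched agent. By maximality, every copy of every option in $A_i$ with $q_o(M)>0$ is matched in $x$, and necessarily to agents other than $i$. Restricting $x$ to edges into copies of $A_i$ therefore gives a matching between $N\setminus\{i\}$ and $M|_{A_i}$ of size exactly $size(M|_{A_i})$. Hence $\rho_i(M)\ge size(M|_{A_i})$, contradicting $(2)$. So every maximal matching, in particular the minimum sized one, has size $n$, and $M$ is pessimistically valid.

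The expected subtle step is $(3)\Rightarrow(2)$. There one must check that the adversarial prefix is a legitimate possible outcome: each early agent picks a nonempty available acceptable option, as the definition requires, and the remaining agents can be completed consistently. Both hold because $y$ saturates the copies and the picks are distinct copies. We also rely on integrality of matchings, as assumed in the model, to pass between fractional $x$ and copies.
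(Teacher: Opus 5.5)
Your proof is correct. It uses the same two core ideas as the paper. First, an unmatched agent $i$ in a maximal matching certifies $\rho_i(M)\ge size(M|_{A_i})$, because every copy in $A_i$ has been eaten by others; your $(2)\Rightarrow(1)$ is essentially the paper's argument. Second, an adversarial order lets the agents of a maximum matching of $N\setminus\{i\}$ into $M|_{A_i}$ eat before $i$.

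The organization differs. The paper proves $(1)\Leftrightarrow(2)$ directly, then gets $(2)\Rightarrow(3)$ by counting: the picks in $A_{\pi(k)}$ made before turn $k$ form a matching of $N\setminus\{\pi(k)\}$ into $M|_{A_{\pi(k)}}$, so at most $\rho_{\pi(k)}(M)$ units are gone. That route makes the reading of $\rho$ as a worst-case bound explicit. You instead close a single cycle, obtaining $(1)\Rightarrow(3)$ from the observation that a possible outcome is itself a maximal matching, and you never prove $(1)\Rightarrow(2)$ directly.

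Your route is somewhat more careful than the paper's in two places. The paper's direct $(1)\Rightarrow(2)$ asserts that the maximum matching of $N\setminus\{i\}$ into $M|_{A_i}$ ``must not be maximal.'' That tacitly needs an extension to a maximal matching of all of $G_M$ that leaves $i$ unmatched, and your picking sequence in $(3)\Rightarrow(2)$ is exactly that extension. Also, the paper's $(3)\Rightarrow(2)$ step uses an arbitrary order with $i$ last and lets agents outside the matching pick arbitrarily. One of them may then take a copy that a matched agent is prescribed to pick. Placing all matched agents first and then $i$, as you do, guarantees that the constructed sequence is a genuine possible outcome.
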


\begin{proof}[Partial Proof]
    We defer the proof for the equivalence of (1) and (2) to \cref{app:charac}. Assuming this equivalence holds, we now prove the remaining case.

    \textbf{Equivalence of statement $(3)$ to pessimistic validity.} Let $M$ be pessimistically valid. Consider a consumption sequence $\omega$ of $M$ with a corresponding ordering $\pi$. Recall that for any $k\in [n]$, each preceding agent, under $\pi$, can consume at most one unit of an option acceptable to agent $i$.  Consequently, 
    we have that the difference between the number of units of acceptable options for agent $\pi(k)$ under $M$ and that under $\widehat{M}^k$ is $size(\widehat{M}|_{A_{\pi(k)}})-size(\widehat{M}^k|_{A_{\pi(k)}})\leq \rho_i$. As $M$ is pessimistically valid, from (2) we have that $size(M|_{A_i})=\sum_{o\in A_i} q_o(M)s_o\geq 1+ \rho_i(M)$. Consequently, $\widehat{M}^k|_{A_{\pi(k)}}$ must be non-empty. As a result, $\omega(k)\neq \emptyset$ for each $k\in [n]$.

    Finally, for the converse, let $M$ be a menu s.t. for every consumption sequence $\omega$ of $M$, we have that for each $k\in [n]$, $\omega (k)\neq \emptyset$. Fix agent $i\in N$. Let $x$ be maximum cardinality matching between $N\setminus \{i\}$ and $M|_{A_i}$ . Consider an arbitrary ordering $\pi$ under which $\pi(n)=i$. That is, $i$ is the last agent to pick from $M$. Let $N^i=\{j\in N\setminus \{i\}|\sum_{o\in A_j} x_{j,o}=1\}$.  Consider a choice of $\omega$ s.t. for $\pi(k)\in N^i$  and the option $o$ for which $x_{\pi(k),o}=1$, we set $\omega(k)=o$. For all $k\in [n]$ s.t. $k\notin N^i$, choose $\omega (k)$ arbitrarily from $\widehat{M}^k$, whenever $\widehat{M}^k$ is non-empty.     
    It is straightforward to see that $\omega$ is indeed a consumption sequence for $M$. Consequently, it must hold that $\omega(n)\neq \emptyset$, by assumption. As a result, we must have that $\widehat{M}^n|_{A_i}$ must be non-empty. This gives us that the size of $x$ if $\rho_i=\sum_{j\neq i}\sum_{o\in A_j}x_{j,o}<\sum_{o\in A_i} q_o(M)s_o$. 
    
    Hence, we have that for each $i\in N$, $size(M|_{A_i})>\rho_i$ and thus, $M$ is pessimistically valid.     
\end{proof}

\section{Computing Minimum Sized Valid Menus}\label{sec:complexity-ILPs}

We now explore the computational aspects of finding minimum sized menus that are optimistically or pessimistically valid. As mentioned before, this problem is a generalization of the $\textsc{SetCover}$ problem so it is unsurprising that it proves to be computationally intractable. We first discuss the complexity of finding minimum sized menus, then give exact algorithms for finding minimum sized optimistically valid menus. 

\subsection{Computational Complexity}\label{subsec:complexity}
Before settling the complexity of computing a minimum sized menu, we first discuss the complexity of verification of validity. Observe that one way of checking if a menu is optimistically or pessimistically valid is to compute the optimistically and pessimistically consumed amount.

\begin{restatable}{proposition}{validverify}\label{thm:verify}
    Given a \Instance{} instance $I$ and a menu $M$ 
    \begin{enumerate}
        \item  the  amount of $M$ that is  optimistically consumed $\lambda(M)$ can be computed in polynomial time, 
        \item  it is NP-hard to compute the amount of $M$ that is pessimistically consumed $\gamma(M)$, and
        \item  it can be checked in polynomial time if $M$ is optimistically and/or pessimistically valid. 
    \end{enumerate}   
\end{restatable}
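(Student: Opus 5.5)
Part (1) is a maximum bipartite matching computation, and (3) follows from the two characterization theorems. Part (2) needs a reduction from minimum maximal matching in bipartite graphs.

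\textbf{Part (1).} I would compute $\lambda(M)$ as the size of a maximum matching in the consumption graph $G_M(I)$. Materializing $q_o(M)s_o$ copies per option may be pseudo-polynomial, so I would avoid it. Instead I would solve a max-flow instance:
\begin{itemize}
\item a source to each agent, with capacity $1$;
\item agent $i$ to option $o$ whenever $o\in A_i$ and $q_o(M)\ge 1$, with capacity $1$;
\item option $o$ to the sink, with capacity $q_o(M)s_o$.
\end{itemize}
This is polynomial in $n$, $m$ and the bit-length of $M$. Integrality of max-flow gives an integral feasible matching, whose value is $\lambda(M)$.

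\textbf{Part (3).} For optimistic validity, I would compute $\lambda(M)$ as above and test whether $\lambda(M)=n$, using \cref{th:charact-opt}. For pessimistic validity, I would use the equivalence of (1) and (2) in \cref{lem:pess-valid}. For each agent $i$:
\begin{itemize}
\item compute $size(M|_{A_i})$ directly;
\item compute $\rho_i(M)$ with the same flow network, restricted to agents $N\setminus\{i\}$ and options in $A_i$;
\item check that $size(M|_{A_i})\ge 1+\rho_i(M)$.
\end{itemize}
This takes $n$ max-flow computations, so it is polynomial. The point is that deciding whether $\gamma(M)=n$ does not require computing $\gamma(M)$.

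\textbf{Part (2).} I would reduce from computing a minimum maximal matching (minimum edge dominating set) in bipartite graphs, which is known to be NP-hard (Yannakakis--Gavril). Given a bipartite graph $H=(U\cup W,E)$:
\begin{itemize}
\item create one agent per vertex of $U$;
\item create one option per vertex of $W$, with $s_o=1$;
\item set $A_u=N_H(u)$;
\item take the menu $M$ with $q_o(M)=1$ for every $o$.
\end{itemize}
Then $G_M(I)$ is exactly $H$. With serving size one and integrality, the paper's maximality condition for a fractional matching coincides with graph-theoretic maximality. Hence $\gamma(M)$ equals the size of a minimum maximal matching of $H$.

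One wrinkle is the model's assumption that every option is acceptable to some agent. Isolated vertices of $W$ can simply be dropped, since they do not affect maximal matchings. Decision and optimization are equivalent here, so computing $\gamma(M)$ is NP-hard.

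The main obstacle is confirming that the paper's LP-style definition of maximal matching matches the combinatorial notion in this reduction. In particular, a fractional $x$ could in principle be smaller than any integral maximal matching. I would handle this by invoking the paper's stated without-loss-of-generality integrality assumption, or by arguing that with unit capacities a minimum maximal fractional matching can be rounded to an integral one of no larger size.
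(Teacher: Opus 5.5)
Your proposal is correct and follows the paper's proof. Both use capacitated bipartite matching for $\lambda(M)$ and for optimistic validity, the equivalence (1)$\Leftrightarrow$(2) of \cref{lem:pess-valid} with one matching computation of $\rho_i(M)$ per agent for pessimistic validity, and NP-hardness of minimum maximal matching in bipartite graphs for $\gamma(M)$. Your version is a bit more careful than the paper's in three places: you use flow capacities $q_o(M)s_o$ instead of materializing copies, you spell out the instance with $s_o=q_o(M)=1$ so that $G_M(I)=H$, and you flag the fractional-versus-integral maximality issue, which the paper settles by its integrality assumption. Your rounding alternative also works, because fixing the saturated agents and options yields a totally unimodular system.
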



\begin{remark}
    Given a graph $G=(V,E)$ using \Cref{thm:verify}, we can check in polynomial time if the minimum sized maximal matching has size $|V|$.
\end{remark}

We now show that finding a minimum sized optimistically valid menu is NP-hard even when each agent finds at most two options acceptable and each option has serving size three. The proof is via a reduction from the \textsc{3DimensionalMatching} (3DM) problem. 

\begin{sloppypar}
\begin{restatable}{proposition}{optcomplexity}\label{thm:opt-menu-npc}
    Given a \Instance{} instance $I=\langle N,O, (A_i)_{i\in N},(s_o)_{o\in O}\rangle$ and size $B\geq n$ where $s_o\geq 3$, it is NP-complete to decide if there exists an optimistically valid menu of size at most $B$.
\end{restatable}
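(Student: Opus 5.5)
Membership in NP is immediate from \cref{thm:verify}: a certificate is a menu $M$. Since no option needs more than $n$ copies (indeed at most $\lceil n/s_o\rceil$), $M$ has polynomial encoding. By \cref{th:charact-opt} we can check in polynomial time, via a maximum matching in $G_M$, that $\lambda(M)=n$. We can also check that $size(M)\le B$. The bulk of the work is NP-hardness, via a reduction from \textsc{3DimensionalMatching} (equivalently, exact cover by $3$-sets, which remains NP-hard when every element lies in at most a bounded number of triples).

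Given a 3DM instance with disjoint ground sets $X,Y,Z$ of size $q$ each and a family $T$ of triples, we build the instance as follows. There is one agent per element of $X\cup Y\cup Z$, so $n=3q$. There is one option $o_t$ per triple $t\in T$, with $s_{o_t}=3$. Agent $e$ accepts exactly the options $o_t$ with $e\in t$. We set $B=3q=n$. To meet the stated restriction that each agent accepts at most two options, we would start from the variant of 3DM/X3C in which every element occurs in at most two triples (still NP-hard when occurrences are bounded by $3$; if bound two is needed, use the known restriction or split an element of degree three via a small gadget of auxiliary agents and serving-size-$3$ options). Every option has $s_o=3\ge 3$, as required.

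Forward direction: if a perfect matching $T'\subseteq T$ exists, order one copy of each $o_t$ with $t\in T'$. This gives $size(M)=3q=B$. Each agent $e$ is covered by exactly one $t\in T'$ and can be matched to one of the three servings of $o_t$, so $G_M$ has a matching of size $n$. Hence $M$ is optimistically valid by \cref{th:charact-opt}.

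Converse: suppose $M$ is optimistically valid with $size(M)\le 3q$. Since $n=3q$ servings must all be consumed, $size(M)=3q$ exactly, and every serving in $G_M$ is matched. Each copy of $o_t$ therefore feeds exactly three agents, and these must be the three elements of $t$, since only they accept $o_t$. Consequently no option is ordered twice, because a second copy would need three more distinct acceptors. The ordered triples are then pairwise disjoint and cover all $3q$ elements, i.e.\ they form a perfect 3D matching.

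The main obstacle is reconciling the "at most two acceptable options per agent" degree restriction with a hardness source. I would first try to cite an NP-hard bounded-occurrence variant of 3DM or X3C. If only occurrence-three hardness is available, I would replace each degree-three element by a gadget that keeps all serving sizes at $3$ and increases $B$ by an exact amount, so that the tight counting argument above still forces a zero-waste menu.
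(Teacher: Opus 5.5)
Your proof is correct and is essentially the paper's: one agent per element of $X\cup Y\cup Z$, one option of serving size $3$ per triple, and $B=n=3q$. The converse uses the same tight counting argument: every serving must be consumed, so no option is ordered twice and the ordered triples partition the elements. Your worry about degree two is unnecessary, because the proposition as stated puts no bound on $|A_i|$. Note also that 3DM with every element in at most two triples is polynomial-time solvable (it is a special case of X3C under that restriction), so that variant could not serve as the hardness source. The same caveat applies to the paper's remark that cites it.
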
 
\end{sloppypar}

We show the NP-hardness of finding minimum sized pessimistically valid menus via a reduction from the Exact Cover (X3C) problem, which is known to be NP-complete. \cite{garey1979computers}. 

\begin{restatable}{proposition}{pesscomplexity}\label{thm:pess-menu-npc}
    Given a \Instance{} instance $I=\langle N,O, (A_i)_{i\in N},(s_o)_{o\in O}\rangle$ and size $B\geq n$ where $s_o = 1$, it is NP-complete to decide if there exists a pessimistically valid menu of size at most $B$.
\end{restatable}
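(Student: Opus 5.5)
The plan is to show membership in NP using \cref{thm:verify}, and NP-hardness by a reduction from X3C in which the budget $B=n$ forces the chosen sets to be pairwise disjoint.

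\textbf{Membership in NP.} With $s_o=1$ we may assume $q_o(M)\le n$ for every option. Indeed, if $q_o(M)\ge n$ for some $o\in A_i$, then $size(M|_{A_i})\ge n\ge 1+\rho_i(M)$. So capping every quantity at $n$ preserves condition (2) of \cref{lem:pess-valid}. A menu is therefore a polynomial-size certificate, and its pessimistic validity can be checked in polynomial time by \cref{thm:verify}.

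\textbf{Reduction.} We reduce from X3C. The input is a universe $U$ with $|U|=3q$ and a family $\mathcal{C}=\{C_1,\dots,C_m\}$ of $3$-element subsets of $U$. We create one agent for each element $e\in U$, so $n=3q$. We create one option $o_C$ with $s_{o_C}=1$ for each $C\in\mathcal{C}$. We set $A_e=\{o_C : e\in C\}$ and $B=n$.

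\textbf{Forward direction.} Suppose $\mathcal{C}'$ is an exact cover. Order three copies of each $o_C$ with $C\in\mathcal{C}'$; this menu has size $3q=n$. For each agent $e$, the ordered options acceptable to $e$ consist only of $o_C$, where $C$ is the unique set of $\mathcal{C}'$ containing $e$. This gives $size(M|_{A_e})=3$. Because the chosen sets are disjoint, the only other agents that accept $o_C$ are the two other elements of $C$, so $\rho_e(M)=2$. Condition (2) of \cref{lem:pess-valid} therefore holds for every agent.

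\textbf{Backward direction.} Let $M$ be pessimistically valid with $size(M)\le n$. Let $\mathcal{C}_M$ be the family of sets $C$ with $q_{o_C}(M)>0$. Define $T_e=\bigcup\{C\in\mathcal{C}_M : e\in C\}$.

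The key claim is that $size(M|_{A_e})\ge |T_e|$ for every $e$. I would prove it through condition (3) of \cref{lem:pess-valid}. Let the agents of $T_e\setminus\{e\}$ pick first, and let $e$ pick right after them. Every $f\in T_e\setminus\{e\}$ accepts some ordered option that $e$ also accepts. So whenever units of $M|_{A_e}$ remain, $f$ can choose one of them, and we build an outcome $\omega$ in which $f$ does so. If $size(M|_{A_e})\le |T_e|-1$, these agents exhaust $M|_{A_e}$ and $e$ receives nothing, which contradicts validity. The claim also shows $T_e\neq\emptyset$, i.e., every element lies in some chosen set.

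\textbf{Counting step.} Summing the claim over all elements gives
\[
3n \;\ge\; 3\,size(M) \;=\; \sum_{C}3\,q_{o_C}(M) \;=\; \sum_{e\in U} size(M|_{A_e}) \;\ge\; \sum_{e\in U}|T_e| \;\ge\; 3n .
\]
The last inequality holds because each $T_e$ is a nonempty union of $3$-sets and so has $|T_e|\ge 3$. Equality must hold throughout, so $|T_e|=3$ for every $e$. Hence any two chosen sets containing a common element coincide, meaning the sets in $\mathcal{C}_M$ are pairwise disjoint. Since they also cover $U$, $\mathcal{C}_M$ is an exact cover.

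\textbf{Main obstacle.} The delicate step is the key claim: one must check that the greedy adversarial sequence really is a possible outcome in the sense of the definition. The point to verify is that each earlier agent may legitimately choose a unit from $M|_{A_e}$ while such units remain, even if other options are also available to it. This holds because the definition allows any available acceptable unit to be chosen.
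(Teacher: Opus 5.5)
\textbf{There is a genuine gap in your key claim} $size(M|_{A_e})\ge |T_e|$. The step ``whenever units of $M|_{A_e}$ remain, $f$ can choose one of them'' is false. Agent $f$ can only take units of options in $A_f\cap A_e$, i.e.\ of those $o_C$ with $\{e,f\}\subseteq C$. These can run out while other units of $M|_{A_e}$ remain.

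In fact the claim fails for pessimistically valid menus in general. Take $C_1=\{e,a,b\}$ and $C_2=\{e,c,d\}$ with $q_{o_{C_1}}(M)=3$ and $q_{o_{C_2}}(M)=1$. Then $size(M|_{A_e})=4<5=|T_e|$. Yet $\rho_e(M)=3$, because $a,b$ can absorb only two units of $o_{C_1}$ and $c,d$ compete for the single unit of $o_{C_2}$. So condition (2) of \cref{lem:pess-valid} holds at $e$. In your adversarial order, once $c$ has taken $o_{C_2}$, agent $d$ accepts nothing left in $M|_{A_e}$, and a unit of $o_{C_1}$ survives for $e$. Your argument never uses $size(M)\le n$, so it cannot prove the claim; the claim only becomes true a posteriori, once you know $M$ is an exact cover. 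The ``main obstacle'' you flagged is exactly where the proof breaks.

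\textbf{The repair.} Your reduction is sound, and the counting idea survives with a weaker per-agent bound.
\begin{itemize}
\item Each unit in $M|_{A_e}$ is acceptable to exactly two agents of $N\setminus\{e\}$. If $size(M|_{A_e})\le 2$, Hall's condition gives a matching from $N\setminus\{e\}$ saturating these units, so $\rho_e(M)=size(M|_{A_e})$, contradicting validity. Hence $size(M|_{A_e})\ge 3$ for every $e$.
\item Your chain $3n\ge 3\,size(M)=\sum_e size(M|_{A_e})\ge 3n$ then forces $size(M|_{A_e})=3$ for every $e$.
\item Suppose these three units came from two or more distinct options $o_{C_1},o_{C_2}$. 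Then $|(C_1\cup C_2)\setminus\{e\}|\ge 3$, and Hall again gives a matching saturating all three units. So $\rho_e=3$, a contradiction.
\item Thus exactly one chosen set contains each $e$, and $\mathcal{C}_M$ is an exact cover.
\end{itemize}
Equivalently, you can replace $T_e$ by the union of a Hall-violating subfamily $\mathcal{S}_e\subseteq\{C\in\mathcal{C}_M: e\in C\}$. Such a family satisfies $\sum_{C\in\mathcal{S}_e}q_{o_C}(M)\ge|\bigcup\mathcal{S}_e|\ge 3$, and the same chain of inequalities goes through.

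With this fix, your reduction (element agents, one unit-size option per set) is noticeably simpler than the paper's. The paper adds three set agents and a fourth option $o^t_4$ per set, and its backward direction needs a case analysis on those gadgets.
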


\begin{proof}[Proof Sketch]
We give a reduction from the \textsc{Exact3Cover} (X3C) problem, which is known to be NP-hard \cite{garey1979computers}. It is formally defined as follows: 

\begin{center}
\begin{tabularx}{1.0\columnwidth}{l}
\toprule
\multicolumn{1}{c}{\textbf{\textsc{Exact3Cover}} (X3C)} \\
\midrule
\parbox[t]{0.95\columnwidth}{\textbf{Given:}  A universe $\mathcal{U}$ of $3k$ elements and a family of $\ell$ subsets $\mathcal{S}$, where for each $S_t\in \mathcal{S}$, we have that $|S_t|=3$.}\\
[1em]

\parbox[t]{0.95\columnwidth}{\textbf{Question:} Do there exist $k$ mutually disjoint subsets $\{S_{i_1},\cdots,S_{i_k}\}$ which cover $\mathcal{U}$, that is, $\cup_{j\in [k]}S_{i_j}=\mathcal{U}$?}\\
\bottomrule
\end{tabularx}
\end{center}
     Given an instance of X3C $\langle \mathcal{U},\mathcal{S}\rangle$, we create a \Instance{} instance as follows: For each element $e\in \mathcal{U}$, create element agent $i_e$. For each subset $S_t\in \mathcal{S}$, create three set agents $j^t_1$, $j^t_2$ and $j^t_3$ and options $o^t_1,\cdots, o^t_4$. For each $o\in O$, we set $s_o=1$. For each $S_t\in \mathcal{S}$, let $S_t=\{e^t_1,e^t_2,e^t_3\}$. For each $z\in [3]$, $A_{j^t_{z}}=\{o^t_{z},o^t_4\}$. Furthermore, add $o^t_{z}$ to $A_{i_{e_{z}}}$ for each $z\in [3]$.   
   
   Consequently, we create an instance with $n=3(k+\ell)$ agents and $m=4\ell$ options. For each $e\in \mathcal{U}$, $A_{i_e}$ contains exactly one option corresponding to each subset containing $e$. We illustrate a set gadget for set $S_t=\{e_1,e_2,e_3\}$ in \cref{fig:gadget}. We show than an X3C exists if and only if there is a pessimistically valid menu of size $n=3(k+\ell)$. Note that if an X3C exists, say $\{S_{i_1},\cdots,S_{i_k}\}$, we can create a pessimistically valid menu $M$ of size $n$. For each set $S_t\in \{S_{i_1},\cdots, S_{i_k}$, we set $q_{o^t_1}(M)=q_{o^t_2}(M)=q_{o^t_3}(M)=2$ and $q_{o^t_4}(M)=0$. For each set $S_t\notin \{S_{i_1},\cdots,S_{i_k}\}$, we set $q_{o^t_1}(M)=q_{o^t_2}(M)=q_{o^t_3}(M)=0$ and $q_{o^t_4}(M)=3$.
\end{proof}

\begin{figure}[t]
\centering
\scalebox{1}{
\begin{tikzpicture}[>=stealth]

    \def\yc{0.5}
    \def\xc{6}

    \node (e1) at (\xc-4.5,\yc+0.5) {$i_{e_1}$};
 
    \node (oe1) at (\xc-3.2,\yc+0.5) {\color{red!80!black}$o^t_1$};

    \node (oj1) at (\xc-2.8,\yc+0.5)  {\color{red!80!black}$o^t_1$};

    \node (j1) at (\xc-1.5,\yc+0.5) {$j^t_1$};

    \node (o14) at (\xc-0.18,\yc+0.2)  {\color{green!50!black} $o^t_4$};
    \node (o24) at (\xc-0.18,\yc-0.25)  {\color{green!50!black} $o^t_4$};
    \node (o34) at (\xc+0.25,\yc)           {\color{green!50!black} $o^t_4$};

    \node (e2) at (\xc-4.5,\yc-0.5) {$i_{e_2}$};
 
    \node (oe2) at (\xc-3.2,\yc-0.5) {\color{red!80!black}$o^t_2$};
    \node (oj2) at (\xc-2.8,\yc-0.5) {\color{red!80!black}$o^t_2$};

    \node (j2) at (\xc-1.5,\yc-0.5) {$j^t_2$};

    \node (e3) at (\xc+4.5,\yc)     {$i_{e_3}$};
 
    \node (oe3) at (\xc+3.2,\yc)     {\color{red!80!black}$o^t_3$};
    \node (oj3) at (\xc+2.8,\yc) {\color{red!80!black}$o^t_3$};

    \node (j3) at (\xc+1.5,\yc)     {$j^t_3$};

    \filldraw[thick,red!80!black] (e1) -- (oe1);
    \filldraw[thick,red!80!black] (oj1) -- (j1);
    \filldraw[thick,green!60!black][dash pattern={on 2pt off 1pt}] (j1) -- (o14);
    \filldraw[thick,red!80!black] (e2) -- (oe2);
    \filldraw[thick,red!80!black] (oj2) -- (j2);
    \filldraw[thick,green!60!black][dash pattern={on 2pt off 1pt}] (j2) -- (o24);
    \filldraw[thick,red!80!black] (oe3) -- (e3);
    \filldraw[thick,red!80!black] (oj3) -- (j3);
    \filldraw[thick,green!60!black][dash pattern={on 2pt off 1pt}] (j3) -- (o34);

\end{tikzpicture}
}
\caption{Consumption graph for set gadget for a set $S_t=\{e_1,e_2,e_3\}$ with the case for $S_t$ in X3C shown in {\color{red!80!black} solid red} edges and not in $S_t$ shown in {\color{green!50!black} dashed green} edges.}
\label{fig:gadget}
\end{figure}

\subsection{Mixed Integer Linear Programs for Minimum Sized Valid Menus}\label{subsec:mips}
We now provide Mixed Integer Linear Program (MILP) formulations for finding minimum sized valid menus. MILP solvers allow us to find solutions quickly, in practice. Throughout this discussion, we shall slightly overload our notation and use $q_o$ to denote the chosen amount of option $o$. Observe that every choice of $(q_o)_{o\in O}$ has a bijection to the space of menus. Consequently, we shall talk about a choice of $(q_o)_{o\in O}$ and its corresponding menu interchangeably. Further, variables $x_o$, $x^i_o$ and $y_j$ shall be binary variables denoting whether the corresponding option/agent is selected. 

\subsubsection*{Optimistic Consumption.} We now present an MILP formulation for computing a minimum sized optimistic menu. Here, we take a \Instance{} instance $I=\langle N,O, (A_i)_{i\in N}, (s_o)_{o\in O}\rangle$ as input. We use $m$ variables $q_o$ for each $o\in O$ to denote the quantities of each option $o$. We further denote $x_{o}^i$ to denote a maximum cardinality matching between agents and their respective acceptable options. The MILP below is essentially adding minimum sized quantity objective function on top of a classical bipartite matching linear program. 

\begin{equation*}
\begin{array}{ll@{}lll}
\text{minimize}   & \displaystyle\sum\limits_{o\in O}   &q_{o}s_{o}  &\\
\text{subject to} & \displaystyle\sum\limits_{i\in N}   &x_{o}^i     &\leq q_os_o,           &o\in O\\
                  & \displaystyle\sum\limits_{o\in A_i} &x_{o}^i     &\geq 1,                &i\in N\\
                  &                                     &x_{o}^i     &\in [0,1],             &i\in N, o\in O\\
                  &                                     &q_o         &\in \{0,1,\cdots,\lceil \frac{n}{s_o}\rceil\}~~~&o\in O
\end{array}
\end{equation*}

The MILP above has $m+nm$ variables and $2m+2n+mn$ constraints. 
Observe that for a menu that is not optimistically valid, by definition, no matching can ensure that $ \sum_{o\in A)i} x_{o}^i\geq 1$. Consequently, the polytope defined in the MILP above is feasible for a choice of $(q_o)_{o\in O}$ if and only if the corresponding menu is  optimistically valid. Further, as optimistically valid menus are guaranteed to exist, there is always a solution for which the MILP is feasible. As a result, given a \Instance{} instance $I$, the MILP finds a minimum sized optimistically valid menu.


\subsubsection*{Pessimistic Consumption.} We now look for an MILP for pessimistic consumption. We first give an ILP to check if a given menu is pessimistically valid. Consider \cref{lem:pess-valid}, where, to check for pessimistic validity for each agent $i\in N$, we compute a maximum cardinality matching from $N\setminus\{i\}$ to options in $A_i$. Finding such a matching with an LP can be done by setting the objective function to maximize the size of the matching. However, a construction which puts the matching size in the objective is not useful when building minimum sized pessimistically valid menus. Consequently, we require a polytope which is feasible if and only if the menu is pessimistically valid. Unlike the case for optimistic consumption, we do not want to be able to match all agents in $N\setminus\{i\}$. Ideally, we would like to find menus that minimize the size of this maximum cardinality matching. 

There {seems to be} no straightforward way to encode a maximum cardinality matching, where all agents need not be matched, solely through linear constraints. However, recall that we do not actually need to compute a maximum cardinality matching. From \cref{lem:pess-valid}, we only require that each maximum cardinality matching from $N\setminus \{i\}$ to $M|_{A_i}$ should leave some acceptable food for agent $i$. In other words, we have that $M$ is pessimistically valid for $i$ if and only if there is no option saturating matching for the graph between $N\setminus \{i\}$ and $M|_{A_i}$. From Hall's theorem \cite{hall1935representatives}, we can certify this without computing a maximum cardinality matching. Specifically, there must be some set of options in $M|_{A_i}$ for which the condition in Hall's theorem is violated. That is, there must exist $S\subseteq A_i$ s.t. $\sum_{o\in S} q_os_o > |\{j\in N\setminus\{i\}:A_j\cap S\neq \emptyset\}|$.  We capture this in the following ILP which will be feasible if and only if such a set $S$ exists. In this way, we combine our characterization in \cref{lem:pess-valid} with Hall's theorem to checkpessimistic validity through an ILP.

\paragraph{ILP Setup} Given a menu $M$, we use the following ILP to check pessimistic validity for a fixed $i\in N$. Let $N^i=\{j\in N\setminus\{i\}|\exists o\in A_i\cap A_j\}$. Observe that only the agents in $N^i$ can be matched in a maximum cardinality matching from $N\setminus \{i\}$ to $M|_{A_i}$. We use $x_o$ to denote where $o$ is in $S$ and we set $y_j$ to $1$ iff for some $o\in A_j\cap A_i$ we have that $x_o=1.$  We use variables of type $x_o$ to select options and $y_j$ to select agents in $N\setminus \{i\}$. Consequently, we always choose $x_o,y_j\in \{0,1\}$. 
Thus, the amount of food within $S$ is captured by $\sum_{o\in A_i} x_oq_os_o$, where $x_o=1 \Leftrightarrow o\in S$. For $S$ to violate Hall's theorem, we have that the size of its neighborhood set, which is captured by $\sum_{j\in N\setminus \{i\}}y_j<\sum_{o\in A_i} x_oq_os_o$.

\begin{align}
\text{minimize ~~~~~~ }   &1\nonumber\\
\text{subject to ~~~~~~}    \nonumber\\
        \displaystyle\sum\limits_{o\in A_i} x_oq_os_o   &  \geq 1+ \displaystyle\sum\limits_{j\in N^i}y_j           && \text{(Hall's theorem violation certificate)}  \\ 
        y_j                                             &  \geq x_o                                                 && j\in N^i, o\in A_i\cap A_j\neq\emptyset \label{eq:yj-lower}\\
        y_j                                             &  \leq\displaystyle\sum\limits_{o\in A_i\cap A_j} x_o      && j\in N^i \label{eq:yj-upper}\\ 
        x_o                                             &  \in \{0,1\}                                              && o\in A_i\nonumber\\ 
        y_j                                             &  \in \{0,1\}                                              && j\in N^i\nonumber 
\end{align}

Observe that the ILP above has at most $n+m$ variables and at most $nm+2n+m+1$ constraints. 

\paragraph{Correctness of ILP} We now show that this ILP is feasible if and only if the given menu $M$ is pessimistically valid for $i$. 
Let the above ILP be feasible. Let $((x_o)_{o\in A_i},(y_j)_{j\in N^i})$ be a feasible solution of the ILP. Choose the set of options $S=\{o\in A_i|x_o=1\}$. Observe that this must be non-empty under any feasible solution. The amount of capacity across these options is $\sum_{o\in S}q_os_o=\sum_{o\in A_i}x_oq_os_o$. 

Observe that for each $j\in N^i$ s.t. there exists $o\in S\cap A_j$. For each of these agents, from constraint \ref{eq:yj-lower}, we need $y_j\geq x_o=1$. Consequently, it must hold that $y_j=1$, as each $y_j\in \{0,1\}$. Further, for all agents $j\in N\setminus \{i\}$ s.t. $A_j\cap S=\emptyset$, by definition of $S$, we have that for each option $o\in A_j\cap A_i$, it must hold that $x_o=0$. Thus, by constraint \ref{eq:yj-upper}, it must hold that $y_j=0$. Consequently, we have that $\sum_{j\in N^i}y_j=|\{j\in N^i|\exists o\in S\cap A_j\}|$ which is precisely the size of neighborhood of $S$.  

As $((x_o)_{o\in A_i},(y_j)_{j\in N^i})$ is a feasible solution of the ILP, we have that $\sum_{o\in A_i}x_oq_os_o>\sum_{j\in N^i}y_j$. Thus, by Hall's theorem,  no matching in the consumption graph of $M|_{A_i}$ can saturate the options in $M|_{A_i}$. Consequently, $M$ must be pessimistically valid for $i$.

For the converse, let $M$ be pessimistically valid for $i$. Thus, there must exist a set of options $S\subseteq A_i$ s.t. $\sum_{o\in S} q_os_o>|\{j\in N^i|\exists o\in S\cap A_j\}|$. For each option $o\in A_i$, set $x_o=1\Leftrightarrow o\in S$. For each $j\in N^i$, set $y_j=\min \{1,\sum_{o\in A_j\cap A_i}x_o\}$. It is straightforward to verify that this is a feasible solution for the ILP above. 

\textbf{Constructing Pessimistic Menus.} 
We now extend the ideas in the previous ILP to get a minimum sized pessimistically valid menu for a given instance $M$. To this end, we need to construct a polytope that is feasible iff a menu is pessimistically valid for each $i\in N$. Consequently, for each $i\in N$, we introduce variables $x_o^i$, $w_o^i$ and $y_j^i$ where $o\in A_i$ and $j\in N^i$. As before, we have that $N^i=\{j\in N\setminus\{i\}|\exists o\in A_i\cap A_j\}$. We also now have $q_o$ as a variable for every $o\in O$. Further, we simulate the amount of food within the set from option $o$ (for agent $i$) $x_o^iq_os_o$ by the variable $w_o^i$. We shall introduce constraints that ensure $w_o^i=x_o^iq_os_o$.

As before, we set the range of the values of $x^i_o$ and $y^j_o$ to be only $\{0,1\}$, as they are selection variables. In order to choose the range of $q_o$ and subsequently $w_o$ we make the following observation.

\begin{observation}\label{obs:capped-amt}
    Given a \Instance{} instance $I$, under a minimum sized optimistically or pessimistically valid menu $M$, we have that $q_o(M)\leq \lceil\frac{n}{s_o}\rceil$. As a result, under $M$, the amount of food from any fixed option $o\in O$ is $q_o(M)s_o<n+s_o$.
\end{observation}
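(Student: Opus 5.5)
The plan is an exchange argument: if some option is ordered more than $\lceil n/s_o\rceil$ times, I remove one copy and show the smaller menu is still valid, contradicting minimality. So suppose $M$ is a minimum sized optimistically (resp.\ pessimistically) valid menu and some $o\in O$ has $q_o(M)\geq \lceil n/s_o\rceil+1$. Let $M'$ be $M$ with one copy of $o$ removed, so $size(M')=size(M)-s_o<size(M)$. The key quantitative fact is that the remaining amount of $o$ is still large:
\[
q_o(M')s_o=(q_o(M)-1)s_o\geq \lceil n/s_o\rceil s_o\geq n .
\]
It then suffices to show that $M'$ keeps the relevant validity.

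\textbf{Optimistic case.} By \cref{th:charact-opt}, $G_M$ has a matching of size $n$. In that matching at most $n$ agents are matched to copies of $o$, and $M'$ still contains at least $n$ units of $o$. So I can reassign those agents to copies of $o$ that remain in $G_{M'}$, leave every other matched pair unchanged, and obtain a matching of size $n$ in $G_{M'}$. Hence $M'$ is optimistically valid.

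\textbf{Pessimistic case.} I use condition (2) of \cref{lem:pess-valid} for each agent $i$ and split into two cases.
\begin{itemize}
\item If $o\notin A_i$, then $M'|_{A_i}=M|_{A_i}$. Both $size(M|_{A_i})$ and $\rho_i$ are unchanged, so the inequality for $i$ continues to hold.
\item If $o\in A_i$, then $size(M'|_{A_i})\geq q_o(M')s_o\geq n$. Since $\rho_i(M')$ is the size of a matching from the $n-1$ agents in $N\setminus\{i\}$, we have $\rho_i(M')\leq n-1$. Therefore $size(M'|_{A_i})\geq 1+\rho_i(M')$.
\end{itemize}
So $M'$ is pessimistically valid.

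In both cases $M'$ is a strictly smaller valid menu, which contradicts minimality; hence $q_o(M)\leq\lceil n/s_o\rceil$. The second claim then follows directly:
\[
q_o(M)s_o\leq \lceil n/s_o\rceil s_o<(n/s_o+1)s_o=n+s_o .
\]
No step here is a real obstacle. The only point needing care is the pessimistic case, where the trivial bound $\rho_i\leq n-1$ must be used rather than any structural property of the matching.
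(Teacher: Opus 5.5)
Your proof is correct. Deleting one copy of an option ordered more than $\lceil n/s_o\rceil$ times still leaves at least $n$ units of it, which contradicts minimality in both models: it preserves an agent-saturating matching in the optimistic case, and it preserves pessimistic validity via condition (2) of \cref{lem:pess-valid} together with $\rho_i\le n-1$. The paper states this as an observation without proof, and your exchange argument is exactly the justification it leaves implicit, namely that $\lceil n/s_o\rceil$ copies of $o$ already suffice for everyone who accepts $o$.
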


Consequently, it is sufficient to set the range of values for  $q_o$ as $\{0,1,\cdots, \lceil \frac{n}{s_o}\rceil \}$ and for $w_o$ to be $\{0,1,\cdots,n+s_o\}$. Both of these values are  determined by by the instance. Unfortunately, we cannot now simply use the previous ILP as is. Firstly, the previous ILP is focused on one agent. To this end, we now replace variables of type $x_o$ and $y_j$ by $x_o^i$ and $y_j^i$ for each $i\in N$. Further, observe that  the term $x_oq_os_o$ in the previous ILP is no longer linear now as $q_o$ is a variable. We need a way to set $w_o$ to $q_os_o$ if $x_o=1$ and set $w_o=0$ otherwise. To this end,  we introduce three new linear constraints \eqref{eq:wo-upper1}-\eqref{eq:wo-lower}. If $x_o=1$, then  \eqref{eq:wo-upper1} and \eqref{eq:wo-lower} together ensure that $w_o=q_os_o$, and \eqref{eq:wo-upper2} puts no restriction of $w_o$ as $(n+s_o)$ is a large number. Otherwise if $x_o=1$, then  $w_o=0$ due to \eqref{eq:wo-upper2}.
\begin{align}
\text{minimize \quad}                               &\displaystyle\sum\limits_{o\in O} q_os_o\nonumber\\
\text{subject to \quad}                             \nonumber\\
       \displaystyle\sum\limits_{o\in A_i} w_o^i    &\geq 1+ \displaystyle\sum\limits_{j\in N^i}y_j^i                   && i\in N\label{eq:hallinLP}\\ 
       w_o^i                                        &\leq q_os_o                                                        &&i\in N, o\in A_i\label{eq:wo-upper1}\\
       w_o^i                                        &\leq x_o^i(n+s_o)                                                  &&i\in N, o\in A_i\label{eq:wo-upper2}\\
       w_o^i                                        &\geq q_os_o -(1-x_o^i)(n+s_o)  ~~~                                 &&i\in N, o\in A_i\label{eq:wo-lower}\\
       y_j^i                                        &\geq x_o^i                                                         &&i\in N, j\in N^i, o\in A_i\cap A_j\label{eq:option-agent-feas1}\\
       y_j^i                                        &\leq\displaystyle\sum\limits_{o\in A_i\cap A_j} x_o^i              && i\in N,j\in N^i\label{eq:option-agent-feas2}\\
       x_o^i                                        &\in \{0,1\}                                                        &&i\in N,  o\in A_i\nonumber\\
       w_o^i                                        &\in \{0,1,\cdots,(n+s_o)\}                                         &&i\in N, o\in A_i\nonumber\\
       y_j^i                                        &\in \{0,1\}                                                        &&i\in N, j\in N^i\nonumber\\
       q_o                                          &\in \left\{0,1,\cdots,\big\lceil \frac{n}{s_o}\big\rceil\right\}   &&o\in O\nonumber
\end{align}

We defer a discussion of the correctness and the size of the ILP to \cref{app:complexity}. The correctness is shown in two parts: we first show that $w_o^i$ is always equal $x^i_oq_os_o$ for every agent $i\in N$ and every option $o\in A_i$. We then show that ILP above is always feasible for every menu planning instance.

\subsection{Parameterized Algorithm for Minimum Sized Optimistic Menus}\label{subsec:fpt}

In this section we design a FPT algorithm to find a minimum size menu that is optimistically valid parameterized by the number of agents.  Recall from \cref{obs:capped-amt} that under any valid menu for any option $q_o(M)\leq n$. Thus, we need to check at most $n^m$ valid menus to find a minimum size one. Consequently, there is a trivial $O(n^m)$ time algorithm. We present an algorithm that significantly improves this running time by reducing our problem to the problem of finding maximum weight partition. Towards this first we define maximum weighted partition. A tuple of subsets $(S_1, S_2, \dots S_k)$ of a set $N$ is a partition of the set $N$ if
$\cup_{i \in [k]} S_i = N$ and $S_i \cap S_j =\emptyset$ for each pair $i,j \in [k]$, and $i \neq j$.
The \textsc{Maximum Weighted Partition} problem \cite{bjorklund2009set} is defined as follows:

\begin{center}
\begin{tabularx}{1.0\columnwidth}{l}
\toprule
\multicolumn{1}{c}{\textbf{\textsc{Maximum Weighted Partition}}} \\
\midrule
\parbox[t]{0.95\columnwidth}{\textbf{Given:}  A universe $\mathcal{N}$, a set family $\mathcal{F}$ over $\mathcal{N}$, a non-negative integer $k\in \mathbb{Z}_+$, and functions $f_1, f_2, \dots f_k$, and a target value $B\in \mathbb{R}$ }\\
[1em]

\parbox[t]{0.95\columnwidth}{\textbf{Question:} Does there exist a partition $(S_1, S_2, \dots S_k)$ of $\mathcal{N}$ in to $k$ parts such that each $S_i\in \mathcal{F}$ and the sum $\sum_{i \in [k]} f_i(S_i)\geq B$?}\\
\bottomrule
\end{tabularx}
\end{center}

\begin{algorithm}[t]
\caption{Parameterized Algorithm via Maximum-Weight Partition}\label{alg:fpt}
\KwIn{Menu Planning instance $ \langle N, O, (A_i)_{i\in N}, (s_o)_{o\in O} \rangle$}
\KwOut{An optimistically valid menu $M$}
 Let $size = mn$\;
\For{$k \leftarrow 1$ \KwTo $m$}{
    \ForEach{$o_j \in O$}{
        Let $N_j \leftarrow \{ i \in N : i \text{ accepts } o_j \}$; \hfill \tcp{Define feasible types of agents}
    }
    Define $\mathcal{F} \leftarrow \{\, N' \subseteq N_j : N' \neq \emptyset,\ j \in O \,\}$\;
    \ForEach{$N' \in \mathcal{F}$}{
        Let $O' \leftarrow \{\, o_j \in O : N' \subseteq N_j \,\}$ and
        choose
       $ o' \in \arg\min_{j \in O'} \{|N'| \bmod s_j\}$\;
        Set $f(N') \leftarrow -(|N'| \bmod s_{o'})$;\hfill \tcp{Define weight for each type}
    }

    \tcp{Solve maximum weight partition}
    Compute a partition $\mathcal{X}=(X_1,\dots,X_k)$ of $N$ with $X_i \in \mathcal{F}$ maximizing
    $\sum_{i=1}^{k} f(X_i)$\;

    if $size > $ the weight of $\mathcal{X}$, then $size=$ weight of $\mathcal{X}$\;
}

Let $(X_1,\dots,X_{k^*})$ be a maximum weight partition with weight $size$\;

\ForEach{$z \in [k^*]$}{
    Let $O_z \leftarrow \{\, o_j \in O : X_z \subseteq N_j \,\}$ and
    choose
    $o_z \in \arg\min_{j \in O_z} \{|X_z| \bmod s_j\}$\;
    Set $q_{o_z}(M) \leftarrow \left\lceil \frac{|X_z|}{s_{o_z}} \right\rceil$; \hfill \tcp{Define the menu}
}

\ForEach{$o \in O \setminus \{o_z : z \in [k^*]\}$}{
    Set $q_o(M) \leftarrow 0$\;
}

\Return $M$\;
\end{algorithm}


\noindent
\textbf{Algorithm Overview.} We use the variable $k$ to denote how many distinct options are in a menu. For each integer $k\in [m]$, we reduce the given instance of \Instance{} to an instance of \textsc{Maximum Weight Partition}. We use the set of agents as the set of elements. We generate the family $\mathcal{F}$ to be the non-empty subsets of agents that all find an option to be acceptable. For each such set $N' \in \mathcal{F}$, we define the weights to be $f(N')=-(|N'| \mod s_o')$ where $o'$ is an option that minimizes this value among $\cap_{i\in N'}A_i$. We find a maximum weight partition of the agents. For each part (a subset of agents) of it, we order enough of an option that is acceptable to this subset of agents and that has minimum waste. Finally, we output the minimum size  menu over all $k \in [m]$. 

\begin{restatable}{proposition}{optfpt}\label{th:fpt}
 Given a \Instance{} instance $I$, let $s_{max}$ denote the maximum serving size of any option, i.e., $s_{max} =\max_{o \in O} s_o$ and $m=|O|$.
    \cref{alg:fpt} computes a minimum sized optimistically valid menu in $O^*(2^nm^3s_{max})$ time.  
\end{restatable}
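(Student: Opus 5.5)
The plan is to prove correctness through a structural correspondence between optimistically valid menus and partitions of $N$ into sets from $\mathcal{F}$. The running time then follows from the known $O^*(2^n)$ algorithm for \textsc{Maximum Weighted Partition} of \citet{bjorklund2009set}.

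\textbf{Step 1 (menus give partitions).} Let $M$ be a minimum sized optimistically valid menu. By \cref{th:charact-opt}, $G_M$ has an agent-saturating matching $x$, taken integral. For each option $o$ with $q_o(M)>0$, let $X_o=\{i\in N: x_{i,o}=1\}$. Every $i\in X_o$ has $o\in A_i$, so $X_o\subseteq N_o$ and therefore $X_o\in\mathcal{F}$ whenever it is non-empty. The non-empty sets $X_o$ partition $N$ into some $k\le m$ parts. Minimality forces $q_o(M)=\lceil |X_o|/s_o\rceil$, since otherwise a copy of $o$ could be dropped and the matching would still saturate $N$. Hence
\[
size(M)=\sum_o s_o\lceil |X_o|/s_o\rceil=n+\sum_o w(X_o,o),
\]
where $w(X,o)=s_o\lceil |X|/s_o\rceil-|X|$ is the number of wasted servings. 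Replacing $o$ by the option in $\{o': X_o\subseteq N_{o'}\}$ that minimizes waste can only decrease the size. Therefore $size(M)\ge n+\sum_z \min_{o'} w(X_z,o')$ for a partition found in iteration $k$.

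\textbf{Step 2 (partitions give menus).} Conversely, take any partition $(X_1,\dots,X_k)$ with each $X_z\in\mathcal{F}$, and order $\lceil |X_z|/s_{o_z}\rceil$ copies of a waste-minimizing acceptable option $o_z$. Every agent in $X_z$ can be matched to a serving of $o_z$, so the menu is optimistically valid by \cref{th:charact-opt}. Its size is $n$ plus the total waste. If two parts pick the same option, the quantities add; this is at least as large as the ceiling of the sum, so the bound still holds. Merging such parts is only better and is also captured by a smaller $k$.

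Together, Steps 1 and 2 show that the minimum over $k$ of $n$ plus the minimum total waste equals the optimal menu size. The weight $f$ must therefore encode the negated waste $-w(N',o')$, which the algorithm's modular expression is meant to represent. I would first fix this encoding carefully, treating the case $s\mid |N'|$ as zero waste, so that maximizing $\sum f$ is exactly minimizing waste. I would then read off that the output menu is the one built in Step 2 from an optimal partition.

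\textbf{Step 3 (running time).} For each of the at most $2^n$ subsets $N'$, checking membership in $\mathcal{F}$ and computing $f(N')$ takes $O(mn)$ time. For each $k\le m$, \textsc{Maximum Weighted Partition} with integer weights bounded in absolute value by $s_{\max}$ is solved via fast subset convolution in the $(\max,+)$ semiring. Embedding the weights into polynomials of degree $O(ks_{\max})$ gives $O^*(2^n\, k^2 s_{\max})$ per $k$, and summing over $k\le m$ yields $O^*(2^n m^3 s_{\max})$.

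I expect the main obstacle to be the bookkeeping in this last step: applying the Björklund--Husfeldt--Koivisto result with the correct polynomial-degree blowup so that exactly the factor $m^2 s_{\max}$ per iteration appears. A secondary obstacle is making precise the exchange argument in Step 1 that a minimal menu uses each option for a single part.
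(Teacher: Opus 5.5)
Your proposal is correct and follows essentially the same route as the paper. Both arguments set up a two-way correspondence between optimistically valid menus and partitions of $N$ into sets of $\mathcal{F}$ weighted by negated waste (the paper's Partition $\Rightarrow$ Menu and Menu $\Rightarrow$ Partition lemmas), and then make $m$ calls to the Bj\"orklund--Husfeldt--Koivisto partition algorithm at $O^*(2^n k^2 s_{max})$ each.

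Your version is slightly tighter than the paper's in two places. First, you apply the menu-to-partition direction only to a minimum menu, which gives exactly the inequality needed, whereas the paper overclaims an equality for arbitrary menus. Second, you explicitly correct the weight to the negated waste $(s-(|N'|\bmod s))\bmod s$, rather than the negated remainder written in \cref{alg:fpt}.
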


\section{Polynomial-time Algorithms for Structured Instances}\label{sec:algorithms}

We now complement our complexity results by showing that whenever the conditions for them do not hold, we can, in fact, find minimum sized valid menus in polynomial time. 
Our results throughout this section can be combined to give the following theorems.


\begin{theorem}[Classifying Tractable Instances for Optimistic Consumption]
Finding a minimum sized optimistically valid menu is 
\begin{enumerate}
    \item[a.] NP-complete even for instances with $s_o=3$ for each $o\in O$ and
    \item[b.] in P for instances with $s_o\leq 2$ for each $o\in O$ or those with laminar acceptability relations.
\end{enumerate}
\end{theorem}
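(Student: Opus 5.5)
Part (a) will come from \cref{thm:opt-menu-npc}, but that statement only says $s_o\geq 3$, so I will reopen its 3DM reduction and check that every option there has serving size exactly $3$. In that reduction each option is a triple and each agent is an element of the ground set, so padding should not be needed. If some auxiliary option had a larger size, I would replace it with enough size-$3$ copies, each acceptable to the same agents. Membership in NP follows from part 3 of \cref{thm:verify}: guess the quantities $q_o\leq\lceil n/s_o\rceil$ (\cref{obs:capped-amt}), then verify optimistic validity with a maximum matching, as in \cref{th:charact-opt}.

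For the case $s_o\leq 2$ in part (b), I will reduce to a matching problem. Fix an optimistically valid menu and its saturating matching. Each unit of an option of size $2$ feeds either two agents who both accept it, or one agent. Each unit of size $1$ feeds one agent. The menu size is therefore $n+(\text{number of size-2 units feeding a single agent})$. So the problem becomes: cover $N$ by a collection of pairs $\{i,j\}$ with some size-$2$ option in $A_i\cap A_j$, each costing $2$, and singletons, each costing $1$ if $A_i$ contains a size-$1$ option and $2$ otherwise.

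This is maximum-weight matching on the graph on $N$ whose edges are the compatible pairs. An edge saves cost exactly when it pairs two agents who would otherwise each pay $2$ alone, and loses $1$ for each endpoint that could have gone alone at cost $1$. I will give each edge weight equal to the cost of its endpoints as singletons minus $2$, and compute a maximum weight matching by Edmonds' algorithm. The minimum menu size is the sum of singleton costs minus the matching weight, and the menu is read off directly: one unit of the witnessing option per edge, and the cheapest option for each remaining agent. This assumes every agent has a nonempty $A_i$, which is needed for any valid menu to exist.

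For the laminar case I plan a bottom-up dynamic program over the laminar forest of acceptability sets. Consider a minimal set $A$ in the family, shared by an agent group $N_A$ whose agents are exactly those with acceptability set $A$. Descendants in the forest are subsets of $A$, so options in a descendant set are also acceptable to $N_A$. The state at node $A$ records the leftover capacity that has been ordered inside $A$ but is unused by agents in the subtree. Agents higher in the forest, whose acceptability sets contain $A$, can later absorb that slack, since laminarity means every agent above $A$ accepts all of $A$.

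The DP value at a node will be a function from the number of units to export upward to the minimum cost. Since leftover capacity never needs to exceed $n+s_{\max}$, this table has polynomial size. At each node I combine the children's tables with a min-plus convolution, then add purchases of the options lying exactly at that node, i.e.\ in $A$ but in no child set. Feasibility will be argued through Hall's condition: in a laminar bipartite graph, Hall's condition only needs checking on the upward-closed agent sets, which correspond to subtrees. Proving that this restricted Hall condition suffices is the step I expect to be the main obstacle, and I will handle it by exchanging matching edges toward the most restrictive agents. The remaining delicate point is keeping the node's option choices polynomial. For that I will argue that at each node one only ever needs each option $o$ bought up to $\lceil n/s_o\rceil$ copies, and that bought capacity is fungible within the node.
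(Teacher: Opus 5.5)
Your proposal is correct. For (a) and the $s_o\le 2$ case you follow the paper. Its 3DM reduction already gives every option size exactly $3$, so no padding is needed; splitting a larger option into size-$3$ copies would not preserve the optimum anyway. Your max-weight matching on $N$ with edge weight $c_i+c_j-2$ ($c_i\in\{1,2\}$ the singleton cost of $i$) is an equivalent, leaner encoding of \cref{alg:2sizedOptMenu}. The paper instead attaches a private weight-$1$ pendant vertex $y^i_o$ for each size-$1$ option and gives agent--agent edges weight $2$, so that the optimum is $2n$ minus the matching weight.

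The laminar case is where you genuinely differ. The paper's \cref{alg:LaminarOptMenu} is a top-down DP over a pre-order labelling. Each node receives a ``debt'' of ancestor agents, passes part of it to its next sibling and part (with its own agents) to its first child, and feeds the rest by \textsc{Ident} over all of $A^v$. Its correctness is argued by splitting a saturating matching into counts $p_v$. You run the dual bottom-up DP: you send surplus capacity up instead of agents down, buy at each node only the options whose smallest containing set is that node, and merge children by min-plus convolution. The paper's sibling chain is just a sequential implementation of that merge.

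What your route buys is an explicit characterization: $M$ is optimistically valid iff $\sum_{o\in A^u}q_o(M)s_o\ge|\{i\in N: A_i\subseteq A^u\}|$ for every node $u$. This is the optimistic counterpart of \cref{lem:LaminarPess}, and it makes the DP's correctness immediate. It is not the obstacle you expect and needs no exchange argument. For any agent set $S$, let $U$ be the inclusion-maximal acceptability sets occurring in $S$. These sets are pairwise disjoint, so their subtrees are disjoint, and $N(S)$ is the union of their options. Summing the per-subtree inequalities yields Hall's condition for $S$.

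One fix is needed. A slack bound of $n+s_{\max}$ is only pseudo-polynomial when serving sizes are given in binary. Cap the exported slack at the number of agents strictly above the node (at most $n$), since surplus beyond that is never consumed. Then compute each node's purchase table by the knapsack-cover recursion $g(c)=\min_o\bigl(s_o+g(\max(0,c-s_o))\bigr)$ with $g(0)=0$, over $c\in\{0,\dots,n\}$.
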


\begin{theorem}[Classifying Tractable Instances for Pessimistic Consumption]
Finding a minimum sized pessimistically valid menu is 
\begin{enumerate}
    \item[i.] NP-complete even for instances with $s_o=1$ for each $o\in O$ and
    \item[ii.] in P for instances with laminar acceptability relations
\end{enumerate}
\end{theorem}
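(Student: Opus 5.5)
The statement is a classification theorem, so I would prove it by assembling two pieces: the hardness part (i) and the tractability part (ii).

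\textbf{Part (i).} Hardness follows directly from \cref{thm:pess-menu-npc}. In the X3C reduction every option has $s_o=1$, so deciding whether a pessimistically valid menu of size at most $B$ exists is NP-hard already for unit serving sizes. For membership in NP, I would use a menu $M$ as the certificate. By \cref{obs:capped-amt} we may assume $q_o(M)\le\lceil n/s_o\rceil$, so the certificate has polynomial size. Pessimistic validity of $M$ can then be checked in polynomial time by \cref{thm:verify}(3), which relies on criterion (2) of \cref{lem:pess-valid} and one maximum matching per agent. Since a minimum sized menu of size at most $B$ exists exactly when the decision version answers yes, part (i) is settled.

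\textbf{Part (ii), setup.} Here I would give a polynomial algorithm for laminar instances, built on \cref{lem:pess-valid}(2). The family $\{A_i\}$ is laminar, so it forms a forest $T$ under inclusion. Identical sets are merged into one node carrying a multiplicity $n_A=|\{i:A_i=A\}|$. Fix an agent $i$ with set $A=A_i$. Every other agent $j$ whose set meets $A$ is of one of two kinds: either $A_j\subseteq A$ (a descendant, or $A$ itself), or $A_j\supsetneq A$ (an ancestor). The first step is to compute $\rho_i(M)$ in closed form on this structure.
\begin{itemize}
\item An ancestor agent accepts every option in $A$, so each ancestor can always be matched as long as food in $A$ remains. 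The ancestors therefore contribute $\min\{\#\text{ancestors},\ \text{leftover}\}$ to the matching.
\item Descendants are handled bottom-up. Define $c_B(M)$ as the maximum number of agents with sets inside $B$ that can be fed from $M|_B$.
\end{itemize}
These quantities satisfy the recursion
\[
c_B = \min\Bigl(n_{\le B},\ \sum_{C\text{ child of }B} c_C + \mathrm{size}(M|_{B\setminus\bigcup C})\Bigr),
\]
where $n_{\le B}$ is the number of agents whose sets lie inside $B$. The recursion is valid because laminar bipartite matching is greedy: each agent should be served as deep in the tree as possible. The validity condition for agent $i$ then reduces to
\[
\mathrm{size}(M|_A) \ge 1 + \min\bigl(\mathrm{size}(M|_A),\ c'_A + a_A\bigr),
\]
where $c'_A$ is the matching value with one copy of $i$ removed and $a_A$ is the number of ancestor agents. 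Equivalently, $\mathrm{size}(M|_A)\ge 1 + c'_A + a_A$.

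\textbf{Part (ii), optimization.} Next I would turn this into a dynamic program over $T$. For a node $B$ the state is the total food $t=\mathrm{size}(M|_B)$, which is bounded by $O(n+s_{\max})$ using \cref{obs:capped-amt}; the quantity $c_B$ is determined by the state as $\min(n_{\le B},\cdot)$. The DP value is the minimum cost of the subtree subject to the validity constraints for every node in it. Children are combined knapsack-style over $t$. Food placed at $B$ itself, in options of $B$ not lying in any child, is chosen by a small knapsack over those options. Note that, by laminarity, these private options belong to $B$ and its ancestors only. The constraint at $B$ involves $a_B$, which is a fixed instance quantity, so it can be checked locally inside the DP.

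\textbf{Main obstacle.} The step I expect to be hardest is the correctness of the closed-form $\rho_i$ via the greedy deepest-first argument. I would first prove that an optimal matching exists in which every agent uses food from its own subtree before food from above, via an exchange argument. I would then show that the minimum cost only depends on the total amount $t$, which is what makes the state space polynomial.
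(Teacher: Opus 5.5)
Your part (i) is the paper's argument: the X3C reduction of \cref{thm:pess-menu-npc} with $s_o=1$, plus membership via \cref{thm:verify}. Part (ii), however, rests on a step that fails. The recursion
\[
c_B=\min\Bigl(n_{\le B},\ \sum_{C\text{ child of }B} c_C+\mathrm{size}(M|_{B\setminus\bigcup C})\Bigr)
\]
never lets the $n_B$ agents whose set is exactly $B$ eat food left over inside a child's subtree, even though they accept it. Take $n_B=3$ and a single leaf child $C$ with $n_C=1$, with five units of food in $C$ and nothing ordered from $B\setminus C$. All four agents can be fed from $M|_B$, but your recursion returns $\min(4,1+0)=1$. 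The correct recursion is $c_B=\min\bigl(\mathrm{size}(M|_B),\,n_B+\sum_C c_C\bigr)$. Neither it nor yours makes $c_B$ a function of $t=\mathrm{size}(M|_B)$ alone. For instance, take $n_B=1$, $n_C=2$ and $t=5$: putting all five units in $C$ gives true $c_B=3$, while putting all five privately at $B$ gives $c_B=1$. So the claim that $t$ alone is a sufficient DP state is unsupported. As written, you would at least need a two-dimensional state $(t_B,c_B)$ with the corrected recursion.

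The idea you are missing, which the paper uses, is that under laminarity only the leaves matter (\cref{lem:LaminarPess}). $M$ is pessimistically valid iff every leaf $v$ of the containment tree satisfies $\mathrm{size}(M|_{A^v})\ge n_v+\sum_{v':A^v\subsetneq A^{v'}}n_{v'}$.
\begin{itemize}
\item Necessity is exactly your computation at a leaf: there every other agent meeting $A^v$ contains $A^v$, and your formula is correct.
\item For sufficiency, at an internal node $\rho_i$ is at most the number of other agents meeting $A_i$. Summing the leaf conditions over the leaves below $A_i$ counts every agent meeting $A_i$, including $i$, at least once. Hence all internal constraints are implied by the leaf ones.
\end{itemize}
This is also what would justify a state consisting of $t$ alone: once the leaf conditions hold, every agent inside $B$ can be fed from $M|_B$. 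Consequently, food on options private to internal nodes is wasted, and the leaves are pairwise disjoint. The optimum is therefore a sum of independent covering knapsacks, one per leaf, each with target at most $n$ (\cref{alg:LaminarPessMenu}), and no tree DP is needed.

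Separately, \cref{obs:capped-amt} bounds food per option, not per subtree, and an $O(n+s_{\max})$ state range is only pseudo-polynomial. Since every constraint is a threshold of at most $n$, the tracked amount should be capped at $n$.
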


\begin{remark}[Tightness of Intractability]
    Both our hardness reductions \cref{thm:opt-menu-npc} and \cref{thm:pess-menu-npc} create instances where for any two agents $i,j\in N$, all three of the following sets may be non-empty: $A_i\setminus A_j$, $A_j\setminus A_i$ and $A_i\cap A_j$. When at most two are non-empty, we show that the acceptability sets must be {\em laminar} (formally defined in \cref{subsec:laminar}). Further, the reduction in the proof of \cref{thm:opt-menu-npc} for optimistic consumption, constructs an instance with $s_o= 3$ for each option $o\in O$.  In contrast, when $s_o\leq 2$ a minimum sized optimistically valid menu can be found in polytime via \cref{alg:2sizedOptMenu}. Consequently, our computational results are tight in this respect. 
\end{remark}

\subsection{Limited Serving Sizes}\label{subsec:smallplates}
We now present our algorithm for $s_o\leq 2$. Note that this is a fairly realistic assumption as many restaurants only offer options for a single person or to be split between two. The algorithm  constructs a graph with weights on edges to decide how to choose which agents should share an option with $s_o=2$ and which agents require an individual option.

\begin{algorithm}[h!]
  \KwIn{ Menu Planning instance $ \langle N, O, (A_i)_{i\in N}, (s_o)_{o\in O} \rangle$ where $s_o\in \{1,2\}$ for all $o\in O$}
  \KwOut{An optimistically valid menu $M$}
  Initialize $M \gets \emptyset$\;
  Create graph $G=(V,E)$ where for each $i\in N$, add vertex $x_i$\\
  For each $o\in O$ and $i\in N$ s.t. $s_o=1$ and $o\in A_i$, add node $y_o^i$\\
  For each $i,j$ s.t. there exists $o\in A_i\cap A_j$ with $s_o=2$, add edge $(x_i,x_j)$ with weight $w(x_i,x_j)=2$\\
  For each $i$ and $o\in A_i$ with $s_o=1$, add edge $(x_i,y_o^i)$ with weight $w(x_i,y_o^i)=1$\;
  Let $\mu$ be a max weight matching on $G$ under $w$\;
  \For{each $(x_i,x_j)\in \mu$}{
        Choose $o\in A_i\cap A_j$ such that $s_o=2$\;
        $M\gets M\cup \{o\}$\;
  }
  \For{each $(x_i,y_o^i)\in \mu$}{
        $M\gets M\cup \{o\}$\;
  }
  \For{each unmatched $x_i$ under $\mu$}{
        Choose $o\in A_i$\;
        $M\gets M\cup \{o\}$\;
  }
  Return $M$\;
    
   \caption{Min Sized Optimistic Menus when $s_o\leq 2$} \label{alg:2sizedOptMenu}
\end{algorithm}

\begin{restatable}{proposition}{smallplates}\label{thm:small-plates}
        Given a \Instance{} $I=\langle N,O, (A_i)_{i\in N},(s_o)_{o\in O},\rangle$ where for each $o\in O$, $s(o) \in \{1,2\} $,  \Cref{alg:2sizedOptMenu}  finds a minimum sized optimistically valid menu in polynomial time. 
\end{restatable}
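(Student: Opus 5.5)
The plan is to show that the size of a minimum sized optimistically valid menu equals $2n - w(\mu)$, where $\mu$ is a maximum weight matching of $G$, and that \Cref{alg:2sizedOptMenu} outputs a menu of exactly this size. Polynomial running time is immediate, since $G$ has $O(n+nm)$ vertices and $O(n^2+nm)$ edges and maximum weight matching is polynomial.

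\textbf{Validity and cost of the output.} By \cref{th:charact-opt}, it suffices to exhibit a matching of size $n$ in the consumption graph $G_M$.
\begin{itemize}
\item Each edge $(x_i,x_j)\in\mu$ adds one copy of a size-$2$ option in $A_i\cap A_j$, which feeds both $i$ and $j$ and costs $2$.
\item Each edge $(x_i,y^i_o)\in\mu$ adds a size-$1$ option in $A_i$, which feeds $i$ and costs $1$.
\item Each unmatched $x_i$ gets some option of $A_i$, which feeds $i$ and costs at most $2$.
\end{itemize}
Since $\mu$ is a matching, each agent is fed exactly once, so $M$ is optimistically valid. For the cost, size-$2$ edges pay $1$ per agent covered, size-$1$ edges pay $1$ per agent covered, and unmatched agents pay at most $2$. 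Hence
\[
size(M)\le |V(\mu)\cap X| + 2\,|X\setminus V(\mu)|,
\]
where $X=\{x_i\}$. Every edge of $\mu$ has weight equal to the number of agent-vertices it covers, so $|V(\mu)\cap X| = w(\mu)$. This gives $size(M)\le w(\mu) + 2(n-w(\mu)) = 2n - w(\mu)$.

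\textbf{Lower bound.} Let $M^*$ be any optimistically valid menu and fix an optimistic consumption matching $x$ that feeds everyone. Each copy of an option carries at most $s_o$ agents, so the copies split into three kinds:
\begin{itemize}
\item size-$2$ copies eating two agents, say $a$ of them;
\item size-$1$ copies, say $b$;
\item copies used by at most one agent, together with size-$2$ copies serving a single agent, say $c$ agents in total.
\end{itemize}
Unused copies only increase the size, so we may discard them. Then $n = 2a + b + c$ and $size(M^*)\ge 2a + b + 2c$. Each size-$2$ copy shared by $i,j$ gives the edge $(x_i,x_j)$ of $G$, and each size-$1$ copy eaten by $i$ gives an edge $(x_i,y^i_o)$. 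These edges are pairwise disjoint: agents are distinct, and each $y^i_o$ is private to agent $i$. So they form a matching of weight $2a+b$, which gives $w(\mu)\ge 2a+b$. Therefore
\[
size(M^*) \ge 2a+b+2c = 2n-(2a+b) \ge 2n - w(\mu).
\]

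\textbf{Main obstacle.} The delicate step is making the cost accounting exact in both directions. On the upper-bound side, an unmatched agent may receive a size-$1$ option, which only helps, so the inequality is safe. On the lower-bound side, a size-$2$ copy with a single eater must be charged $2$. I expect the subtle part to be checking that the vertex $y^i_o$ is indexed per agent, so that different agents using the same size-$1$ option do not conflict in the matching, and that repeated copies of the same option $o$ for one agent cannot occur in an optimal menu.
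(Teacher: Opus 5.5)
Your proposal is correct and follows essentially the same route as the paper: both show that the minimum size of an optimistically valid menu equals $2n - w(\mu)$ (the paper phrases this as weight $n-t$ versus size $n+t$). Both do so by mapping a feeding consumption matching to a matching in $G$ and, conversely, turning $\mu$ into a menu. Your explicit $a,b,c$ accounting over an arbitrary valid menu is somewhat more rigorous than the paper's version, which reasons only about the structure of a minimum menu and relies on a local ``unavoidable'' argument for unmatched agents. Your worry about repeated copies for one agent is moot, since each agent consumes exactly one serving.
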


\begin{proof}[Proof Sketch]
     Recall that under optimistic consumption, it is enough to find a menu such that there is at least one way of partitioning the food order among the agents s.t. every agent finds something acceptable to eat. If each option has $s_o = 1$, a minimum sized optimistically valid menu can be constructed by adding any one acceptable option for each agent $i\in N$. However, when some options have a serving size of $2$ (or more), this simplistic approach may lead menus of much larger size than the minimum, as two people may be able to share the same option of larger size. To this end, we use a matching based approach in \cref{alg:2sizedOptMenu} to find minimum sized valid menus.
\end{proof}

\subsection{Structured Acceptability Relations}\label{subsec:laminar}
In \cref{subsec:smallplates}, we show intractability of computing min sized menus under both optimistic and pessimistic consumption for instances where given there exist  pairs of agents $i, j\in N$, s.t. all three of $A_i\setminus A_j$, $A_j\setminus A_i$ and $A_i\cap A_j$ are non-empty. We show that if at most two of these sets are non-empty, we can in fact find min sized valid menu under either consumption model in polynomial time. We now discuss three different structural assumptions which emerge when we only allow at most two of the three aforementioned sets to be non-empty for any pair of agents. 

 \textbf{Laminar Acceptability Relations.} This is the most general structured setting that allows for tractability. 
Here, we assume that for any $i,j\in N$ where $|A_i|\leq |A_j|$, it must be that $A_i \cap A_j$ is either $A_i$ or $\emptyset$. Further, we assume that there exists  some $i^*\in N$ who finds all options acceptable, that is, $A_{i^*}=O$. 
Without this assumption, we can break the given instance into multiple independent instances. We can choose each maximal acceptability set $A^v$, where for each $i\in N$ either $A_i\subseteq A^v$ or $A_i\cap A^v=\emptyset$, and all its subsets separately. As a result, we assume without loss of generality that there exists an agent who finds all options acceptable.

\begin{restatable}{lemma}{laminarstructure}\label{lem:laminar-structure}
    Given an \Instance{} instance $I$,  for every $i,j\in N$ at most two of $A_i\cap A_j$, $A_i\setminus A_j$ and $A_j\setminus A_i$ are non-empty if and only if $I$  has laminar acceptability relations.
\end{restatable}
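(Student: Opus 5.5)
Laminarity, as defined just above, says that for any $i,j\in N$ with $|A_i|\le |A_j|$, the intersection $A_i\cap A_j$ is either $A_i$ or $\emptyset$. (The extra assumption that some agent finds all of $O$ acceptable is only a without-loss-of-generality normalization obtained by splitting the instance. I will treat it as not part of the equivalence.) The plan is to prove both directions by pairwise case analysis on a fixed pair $i,j$. In each direction I will first use symmetry to assume $|A_i|\le|A_j|$.

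\textbf{Laminar implies at most two non-empty sets.} Fix $i,j$ with $|A_i|\le |A_j|$. There are two cases.
\begin{enumerate}
\item If $A_i\cap A_j=\emptyset$, then the intersection itself is one of the three sets and is empty.
\item If $A_i\cap A_j=A_i$, then $A_i\subseteq A_j$, so $A_i\setminus A_j=\emptyset$.
\end{enumerate}
In either case at least one of the three sets is empty, so at most two are non-empty.

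\textbf{At most two non-empty sets implies laminar.} Fix $i,j$ with $|A_i|\le|A_j|$ and suppose at least one of the three sets is empty. There are three cases.
\begin{enumerate}
\item If $A_i\cap A_j=\emptyset$, the laminar condition holds directly.
\item If $A_i\setminus A_j=\emptyset$, then $A_i\subseteq A_j$, so $A_i\cap A_j=A_i$.
\item If $A_j\setminus A_i=\emptyset$, then $A_j\subseteq A_i$. Combined with $|A_i|\le |A_j|$ this forces $A_i=A_j$, so again $A_i\cap A_j=A_i$.
\end{enumerate}
This gives the laminar condition for every pair.

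The only point needing care is the third case of the converse. There the empty set is the difference on the side of the larger set, and the cardinality comparison is what upgrades the containment $A_j\subseteq A_i$ to equality. The same step also disposes of ties $|A_i|=|A_j|$, since the laminar condition must then hold with the roles of $i$ and $j$ swapped as well.
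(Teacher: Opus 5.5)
Your proof is correct and follows the same route as the paper: a pairwise case analysis, in both directions, on which of $A_i\cap A_j$, $A_i\setminus A_j$, $A_j\setminus A_i$ is empty. Your write-up is slightly more careful than the paper's on two points it leaves implicit: using $|A_i|\le|A_j|$ to upgrade $A_j\subseteq A_i$ to $A_i=A_j$, and setting aside the normalization $A_{i^*}=O$ (without which the converse would fail for disjoint sets).
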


With the assumption that there is a unique maximal set, laminar sets can be represented as a forest. Given a collection of distinct acceptability sets $A^1,\cdots, A^k$ that are laminar, there exists a tree with a vertex for each distinct set. We shall assume that this tree is rooted at the vertex corresponding to the acceptability set $A^v=O$.  We call this the laminar containment tree where $A^v$ is an ancestor of $A^{v'}$ if and only if $A^{v'}\subsetneq A^v$. 
In order to find min sized menus, we use an algorithm identical acceptability relations as a subroutine. 

\textbf{Identical Acceptability Relations.} The simplest way of having a structure is to have all agents have the same acceptability relations. That is, for all $i,j\in N$, we have that $A_i=A_j$. Consequently, we denote an instance of Identical \Instance{} by $\langle N, A, (s_o)_{o\in A}\rangle$. Observe that we do not impose any restriction on the serving sizes of options. Despite this, we find positive results for finding both min sizes optimistically and pessimistically valid menus under identical acceptability relations. Identical acceptability relations capture settings where food needs to be ordered for a homogeneous group, such as at a religious gathering or a child's birthday party. We can generalize this setting further, while still retaining some structure.


\textbf{Chained Acceptability Relations.}  Here, we assume that the acceptability relations of the agents form a sequence of subsets. That is, for any $i,j\in N$ where $|A_i|\leq |A_j|$, it must be that $A_i\subseteq A_j$. Chained acceptability relations capture dietary restrictions based on specific types of ingredients such as vegan, vegetarian, pescetarian, no red meat and unrestricted. 
Note that the laminar setting is a generalization of the chained setting.

We provide  \cref{alg:IdentOptMenu}, in \cref{app:laminar}, to find a minimum sized valid menu under identical acceptability relations. Note that under identical acceptability relations every optimistically valid menu is pessimistically valid. We use this as a subroutine to find minimum sized valid menus. Our main results in the section are as follows. It is achieved via dynamic programming based solutions to finding minimum sized menus. As is typical with dynamic programming, the algorithms return minimum possible sizes of valid menus, but the actual menus can be found analogously. 

\begin{restatable}{proposition}{laminaropt}\label{thm:laminar-opt}
   Given a \Instance{} instance $I$ with laminar acceptability relations, there exists a polynomial-time algorithm that finds the minimum size of a optimistically valid menu $M$.
\end{restatable}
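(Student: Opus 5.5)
We will solve the problem by dynamic programming over the laminar containment tree, rooted at the vertex $A^v=O$. Each vertex $v$ of the tree carries a distinct acceptability set $A^v$. Let $n_v$ be the number of agents whose acceptability set is exactly $A^v$, and let $O_v = A^v\setminus\bigcup_{v'\text{ child of } v}A^{v'}$ be the options that become acceptable first at $v$. The key structural fact comes from \cref{th:charact-opt}: a menu is optimistically valid iff $G_M$ has an agent-saturating matching. In a laminar family, Hall's condition only needs to be checked on the subtree-closed agent sets. So a menu $M$ is optimistically valid iff, for every vertex $v$,
\[
\text{(servings available in the subtree of } v\text{)} \;\geq\; \text{(agents whose sets lie in the subtree of } v\text{)}.
\]
Equivalently, servings can be routed greedily upward. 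Servings of options in $O_v$ may feed only agents at $v$ or at its ancestors. Unused servings from a subtree can therefore be passed up, while agents can never be pushed down.

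We first establish this by a bottom-up greedy matching argument. Process the vertices in post-order and match the agents of $v$ to any leftover servings from the subtree of $v$. Laminarity guarantees that any serving usable by an agent of $v$ lies in the subtree of $v$. Since every agent at an ancestor accepts all of those servings, deferring the choice costs nothing, and an exchange argument shows the greedy succeeds whenever any matching does.

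Define $D_v(e)$ to be the minimum total size of menu items from options in the subtree of $v$ such that every agent in that subtree is fed and exactly $e$ surplus servings are passed to the parent. Here $e$ ranges over $0,\dots,n+s_{\max}$, which is polynomial after capping as in \cref{obs:capped-amt}; we may also encode serving sizes carefully so the range is at most $O(n)$, because surplus above $n$ is never useful. Surplus beyond the number of ancestor agents is waste, so we cap $e$ at $n$.

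The recurrence at $v$ proceeds in three steps. First, combine the children's tables by a min-plus convolution over their surplus values. Second, choose a purchase from $O_v$. This is an identical-acceptability subproblem: the cheapest way to buy at least $t$ servings from $O_v$ is a knapsack/covering computation over $t\le 2n$, solvable in polynomial time by DP. Equivalently, we use \cref{alg:IdentOptMenu} as the subroutine computing $c_v(t)$, the minimum size of a purchase giving at least $t$ servings. Third, subtract $n_v$ and require the total to be nonnegative:
\[
D_v(e) = \min\Big\{ c_v(t) + \sum_{v'} D_{v'}(e_{v'}) \;:\; t+\sum_{v'} e_{v'} - n_v \ge e \Big\}.
\]
Because $size$ counts servings, $c_v(t)$ is simply the minimum total serving count that is at least $t$. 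The answer is $D_{\text{root}}(0)$, or rather $\min_{e\ge0} D_{\text{root}}(e)$. The convolutions are done child by child, so each costs $O(n^2)$, and the total running time is polynomial.

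The main obstacle is the correctness of the surplus abstraction. We must show that the fine-grained identity of the leftover servings (which option each came from) is irrelevant, so that only their count matters. This follows from laminarity, since every leftover from the subtree of $v$ is acceptable to all agents at proper ancestors of $v$. We also need the Hall-type characterization above, and we would prove it via the greedy exchange argument rather than a general Hall's theorem invocation. The rest is routine DP bookkeeping.
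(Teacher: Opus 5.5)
Your proof is correct, but it runs in the opposite direction from the paper's. The paper labels the containment tree in pre-order and runs a top-down DP whose state is a \emph{debt}: the number of agents from proper ancestors that must still be fed inside the subtree of $v$ or its later siblings. Each vertex splits this debt between its next sibling ($t_1$) and its first child ($t_2$), and feeds the remaining $n_v+debt-t_1-t_2$ agents from all of $A^v$ via \textsc{Ident}. Its correctness argument decomposes an optimal matching into counts $p_v$ of agents eating options whose smallest containing set is $A^v$, and it never isolates a validity criterion. You instead pass \emph{surplus servings} upward and buy at $v$ only from $O_v$, so every option is bought at exactly one vertex. You combine children by min-plus convolution, which plays the same role as the paper's sibling chain; surplus flowing up is the dual of debt flowing down. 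What your route buys is an explicit criterion, the optimistic counterpart of \cref{lem:LaminarPess}: $M$ is optimistically valid iff $\sum_{o\in A^v}q_o(M)s_o\ge|\{i: A_i\subseteq A^v\}|$ for every tree vertex $v$. Your bottom-up greedy proves this cleanly, since the pool available at $v$ is exactly the servings in $A^v$ minus the agents strictly below $v$. That criterion makes the ``only the count of leftovers matters'' step and the DP's correctness nearly immediate; the paper avoids proving such a lemma, but its correctness argument is correspondingly more informal. A few wording slips need fixing. First, define $D_v(e)$ as ``at least $e$'' surplus, to match your recurrence. Second, ``agents can never be pushed down'' is backwards: agents at $v$ do eat servings of options homed below $v$, and that is exactly surplus flowing up. What is impossible is servings of $O_v$ feeding agents strictly below $v$. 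Third, the range $0,\dots,n+s_{\max}$ is not polynomial when serving sizes are encoded in binary, so rely on the cap at $n$ from the start.
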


\begin{proof}[Proof Sketch]
    Laminar sets can be represented by a containment tree where if $A^i\subsetneq A^j$, then $A^j$ must be an ancestor of $A^i$ in the containment tree. Let $A^1,\cdots, A^k$ be the distinct acceptability sets labeled in a pre-order traversal of the containment tree. That is, if $A_i\subseteq A_j$ then $i\geq j$. The agents with acceptability set $A^i$ can be fed from a combination of items across $A^i$ or a combination of one child alone. Consequently, to find min sized menus, we need to determine the best combination of agents being fed from each acceptability set. To this end, for each internal node $A^i$, we need to decide how many of the corresponding agents eat from $A^i$ and how many eat from each child of $A^i$. 

    To avoid going over all possible combinations here, we use the tree structure to build these combinations sequentially. We give a dynamic program in \cref{alg:LaminarOptMenu} that uses the pre-order traversal and is filled starting from leaves to root. Each node $v$ other than the root inherits a ``debt" from either its previous sibling (according to the label) or its parent (if no sibling comes earlier in the labeling), but not both. Now $v$ needs to decide how much of this debt it will keep for the subtree rooted at $v$, and how much to pass of onto the next sibling, if any. This is done by going over all the values of the debt passed on and selecting the minimum. For one specific value,  the debt retained plus $n_v$ is the number of agents that need to be fed from the subtree rooted at $v$. Of this number, $v$ now decides how many to feed from the set $A^v$ and passes off the remaining to the first child (according to the labelling), if any. This is again done by going over all the values of the debt passed on and selecting the minimum.
\end{proof}

We now give an overview of our algorithm to find minimum sized pessimistically valid menus for the laminar setting. 

\begin{restatable}{proposition}{laminarpess}\label{thm:laminar-pess}
   Given a \Instance{} instance $I$ with laminar acceptability relations, there exists a polynomial-time algorithm that finds the minimum size of a pessimistically valid menu $M$.
\end{restatable}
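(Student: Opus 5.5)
Building on \cref{lem:pess-valid}, I would first give a structural characterization of pessimistic validity for laminar instances, and then run a dynamic program over the laminar containment tree. The characterization reads off $\rho_i(M)$ directly from the tree. Let $v$ be the node with $A^v=A_i$ and let $T_v$ be its subtree. Every other agent $j$ either has $A_j\subseteq A^v$, so $j$ is a descendant of $v$ or sits at $v$ itself, or has $A_j\supsetneq A^v$, so $j$ is an ancestor, or has $A_j\cap A^v=\emptyset$. Agents of the last kind cannot touch $M|_{A^v}$. Agents of the second kind can eat anything in $A^v$.

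So the maximum matching from $N\setminus\{i\}$ into $M|_{A^v}$ is computed as follows. First match the descendants of $v$ as greedily as the subtree allows; call this $\mu_v(M)$, the maximum number of agents in $T_v$ other than $i$ that can be fed from $M|_{A^v}$. Then all ancestor agents fill whatever remains. This gives
\[
\rho_i(M)=\min\Big\{size(M|_{A^v}),\ \mu_v(M)+a_v\Big\},
\]
where $a_v$ is the number of agents whose acceptability set strictly contains $A^v$. Condition (2) of \cref{lem:pess-valid} then becomes: for every node $v$ that holds an agent, $size(M|_{A^v})\ge 1+\mu_v(M)+a_v$.

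The next step is to simplify $\mu_v$. I would argue that in a minimum menu, leftover food at each node can be pushed around, giving $\mu_v(M)=\min\{n(T_v)-1,\ size(M|_{A^v})\}$ up to a correction from leaves whose own food is insufficient. In fact, if the condition holds at every descendant $u$, then $size(M|_{A^u})\ge n(T_u)$ by induction, so all agents in $T_v$ other than $i$ can be matched. That gives $\mu_v=n(T_v)-1$ and the clean condition
\[
size(M|_{A^v})\ \ge\ n(T_v)+a_v \quad\text{for every node } v \text{ containing an agent.}
\]
That is, the total food inside $A^v$ must cover everyone who can possibly eat from $A^v$. I would verify both directions of this carefully, including nodes with no agents of their own, where the constraint should only be inherited.

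With the characterization in hand, the problem becomes: choose quantities $q_o$ that minimize total size subject to lower bounds on sums over a laminar family. I would solve this with a DP mirroring the one for \cref{thm:laminar-opt}. Define $D_v(t)$ as the minimum size of food placed in $A^v$ (including $T_v$) with total at least $t$ that satisfies all constraints inside $T_v$, for $t\le n+s_{\max}$ (using \cref{obs:capped-amt}). Options lying in $A^v$ but in no child set are handled by a knapsack over $\{0,\dots,\lceil n/s_o\rceil\}$ copies. Children are combined sequentially with min-plus convolution, and then $D_v(t)=\infty$ is imposed for $t<n(T_v)+a_v$ whenever $v$ holds an agent. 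Because all values are bounded by $O(n+s_{\max})$, and $s_o>n$ can be capped at $n$ without loss since sizes beyond $n$ are never beneficial except as a single copy, the DP runs in polynomial time. The answer is $D_{root}(n)$.

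I expect the main obstacle to be the characterization step, specifically showing that $\mu_v(M)=n(T_v)-1$ under the recursive conditions. An adversarial matching might route descendants so that ancestor agents are blocked elsewhere. I would handle this with a Hall-type argument on the laminar family: for laminar sets, Hall's condition only needs checking on the tree's subtrees, which is exactly what the node constraints provide.
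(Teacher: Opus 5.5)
Your proof is correct, but it takes a genuinely different and heavier route than the paper. The step you flag as the main obstacle is actually immediate. A pessimistically valid menu is optimistically valid, because a maximum matching is maximal. So all agents of $T_v$ can be fed simultaneously from $M|_{A^v}$, which gives $\mu_v=n(T_v)-1$. For sufficiency you only need the trivial bound $\mu_v\le n(T_v)-1$, so no Hall argument is required.

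The paper (\cref{lem:LaminarPess}) keeps only the leaf constraints $size(M|_{A^\ell})\ge n_\ell+a_\ell$. Your constraints at internal nodes are redundant. The leaf sets in $T_v$ are pairwise disjoint subsets of $A^v$, and summing their constraints gives $size(M|_{A^v})\ge n(T_v)+r_v a_v\ge n(T_v)+a_v$, where $r_v\ge 1$ is the number of leaves of $T_v$. Consequently, options lying outside every leaf set are never useful. The problem then splits into independent identical-acceptability covering problems, one per leaf with target $n_\ell+a_\ell$, each solved by \textsc{Ident}; \cref{alg:LaminarPessMenu} simply adds these up.

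Your tree DP with min-plus convolution computes the same optimum. What it buys is generality: it would also handle arbitrary lower bounds on a laminar family where the internal constraints are not implied. The paper's route gives a much simpler algorithm, plus the structural fact that only minimal acceptability sets matter, which is reused in \cref{thm:wop-laminar}.

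If you keep the DP, three details need fixing.
\begin{enumerate}
\item Under your ``total at least $t$'' semantics, impose the node constraint as $D_v(t)\leftarrow D_v(\max\{t,n(T_v)+a_v\})$, not by setting $D_v(t)=\infty$ below the threshold. With exact-split min-plus convolution, the $\infty$ version fails: take a root agent with two leaf children holding one agent each, so $n=3$ and both child thresholds are $2$. Then $D_{root}(3)=\infty$, although a valid menu of size $4$ exists.
\item Index $t$ only up to $n$, since every threshold is at most $n$. A table indexed up to $n+s_{\max}$ is only pseudo-polynomial when serving sizes are encoded in binary.
\item Any capping of $s_o$ at $n$ may be applied only to the coverage coordinate, never to the cost $q_os_o$.
\end{enumerate}
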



\begin{proof}[Proof Sketch]
    We refine the characterization of pessimistically valid menus for laminar acceptability relations in \cref{lem:LaminarPess}. Here, we show that a menu is pessimistically valid, if and only if the amount of food from each leaf of the laminar containment tree is enough to feed all agents whose acceptability set contains this leaf vertex. As a result, it is sufficient to only feed agents from these {\em minimal acceptability sets}. 
    Based on this, we present \cref{alg:LaminarPessMenu} (in \cref{app:laminar}) to compute a minimum sized pessimistically valid menu. The algorithm proceeds by considering the laminar containment tree and the $t$ leaves $A^{v_1},\cdots, A^{v_t}$. For each leaf $v_j$ it adds the minimum size pessimistic menu for feeding $n_{v_j} + \sum_{v:A^{v_j}\subsetneq A^{v}} n_{v}$ from $A^{v_j}$ using the algorithm for identical acceptability relations as a subroutine. The combined sizes for each leaf is returned.
\end{proof}

\begin{example}[Min sized valid menus for Laminar Acceptability Relations]\label{ex:laminar}
    Consider the following instance where  $n=12$ and $m=4$. Let $A_1=\{o_1\}$, $A_2=A_3=A_4=\{o_2\}$, $A_5=\{o_1,o_2\}$, $A_6=\{o_3\}$ and $A_7=\cdots=A_{12}=\{o_1,o_2,o_3,o_4\}$. Further, set serving sizes as follows: $s_{o_1}=2$, $s_{o_2}=s_{o_3}=3$ and $s_{o_4}=1$. Observe that any menu that does not contain $\{o_1,o_2,o_3\}$ will not be valid. Here the minimum sized optimistically valid menus will have size $n=12$. Some examples are:  $M_1=\{o_1,o_2,o_3,o_4,o_4,o_4,o_4\}$, $M_2=\{o_1,o_1,o_1,o_2,o_3\}$ and $M_3=\{o_1,o_2,o_2,o_3,o_4\}$.
    Further, menu $M_4=\{o_1,o_2,o_3,o_3,o_3\}$ will also be optimistically valid but has size $14$. 

    Recall that we need enough food from each minimal acceptability set to be pessimistically valid. Observe that exactly $8$ agents find $o_1$ acceptable, $10$ find $o_2$ acceptable and $7$ find $o_3$ acceptable. Consequently, the unique pessimistically valid menu is $M_5=\{o_1,o_1,o_1,o_1,o_2,o_2,o_2,o_2,o_3,o_3,o_3\}$ which has size $29$.
\end{example}

\begin{remark}
    Given a graph $G=(V,E)$ where the neighborhoods of vertices are laminar, can find a minimum sized maximal matching in polynomial time.     
\end{remark}



\section{Waste of  Pessimism}\label{sec:wop}

We now try to measure the difference in the menu sizes needed if pessimistic consumption is assumed but optimistic consumption is practiced. Informally, we are trying to measure  how much extra food needs to be ordered to have a pessimistically valid menu vs an optimistically valid menu. We call this the {\em waste of pessimism} (WOP). Recall that $\Lambda(I)$ and $\Gamma(I)$ denote the spaces of optimistically and pessimistically valid menus, respectively.

\begin{definition}[Waste of Pessimism]
    Given a \Instance{} instance $I=\langle N, O, (A_i)_{i\in N},(s_o)_{o\in O}\rangle$, the waste of pessimism is the ratio of the minimum sized pessimistically valid menu under $I$ and that of the minimum sized optimistic menu
    \[WoP(I)=\frac{\min_{M \in \Gamma(I)} \sum_o q_o(M)s_o }{\min_{M\in \Lambda(I)} \sum_o q_o(M)s_o }.\]
\end{definition}

\begin{example}[Waste of Pessimism] We now compute the waste of pessimism of instances discussed in previous examples. 

\begin{enumerate}
    \item Consider \cref{example:basic} with $4$ agents $N=\{1,2,3,4\}$ and two options $O=\{o_1,o_2\}$ where $s_{o_1}=3$ and $s_{o_2}=1$, and   $A_1=A_2=A_3=\{o_1\}$ and $A_4=\{o_1,o_2\}$. Here, $\{o_1,o_2\}$ is an minimum sized optimistically valid menu. On the other hand $\{o_1,o_1\}$ is the min-sized pessimistically valid. The waste of pessimism for this instance is $6/4=3/2$.
    \item Consider \cref{ex:laminar}. Recall that the minimum sized optimistically valid menu has size $n=12$ and the minimum sized pessimistically valid menu has size $29$. Here, the waste of pessimism is $29/12$.
\end{enumerate}

    \end{example}

Observe that the instance described in \cref{example:basic} is one with chained acceptability relations, whereas the one in \cref{ex:laminar} is one with laminar acceptability relations.
We now prove tight bounds on the Waste of Pessimism for structured acceptability relations. Our bounds show that this stark contrast in waste of pessimism between laminar and chained scales  as $n$ increases.

\begin{theorem}\label{thm:wop-laminar}
    Given a \Instance{} instance $I$, we have that  
    \begin{enumerate}
        \item if $I$ has laminar acceptability relations, $WoP(I)\leq \frac{n+4}{4}$,
        \item if $I$ has chained acceptability relations, $WoP(I)\leq 2$, and
        \item if $I$ has identical acceptability relations, $WoP(I)=1$.
    \end{enumerate}
\end{theorem}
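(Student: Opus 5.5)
The plan is to reduce all three parts to the refined laminar characterization used in \cref{thm:laminar-pess} (\cref{lem:LaminarPess}). That characterization says a menu $M$ is pessimistically valid if and only if, for every leaf $v$ of the laminar containment tree, $size(M|_{A^v})\geq c_v$, where $c_v$ is the number of agents whose acceptability set contains $A^v$. Since distinct leaves have disjoint option sets, for any choice of one option $o_v\in A^v$ per leaf the menu ordering $\lceil c_v/s_{o_v}\rceil$ copies of each $o_v$ is pessimistically valid. Its size is at most $\sum_v (c_v+s_{o_v}-1)$. The rest of the proof is to choose the $o_v$ cleverly and compare this bound with a minimum sized optimistically valid menu $M^*$, for which we only use two facts: $size(M^*)\geq n$, and $M^*$ must contain some option from every leaf set. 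The second fact holds because every distinct acceptability set belongs to some agent, and an agent with a leaf set can only be fed from that leaf set.

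\textbf{Identical and chained cases.} With identical sets there is a single leaf, $A$, and $c_A=n$, so the leaf condition just says $size(M)\geq n$, which every optimistically valid menu satisfies. Hence $WoP(I)=1$. (Alternatively, use the paper's remark that under identical relations every optimistically valid menu is pessimistically valid.) For chained sets the containment tree is a path with the unique leaf $A^1$ (the smallest set), and $c_{A^1}=n$. Pick $o\in A^1$ with $q_o(M^*)\geq 1$, which exists since the agents with set $A^1$ must be fed by $M^*$. Then $size(M^*)\geq s_o$ and $size(M^*)\geq n$, while the pessimistic menu $\lceil n/s_o\rceil$ copies of $o$ has size at most $n+s_o-1<2\,size(M^*)$. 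This gives $WoP(I)\leq 2$.

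\textbf{Laminar case.} Let $v_1,\dots,v_t$ be the leaves, and for each $j$ choose $o_j\in A^{v_j}$ with $q_{o_j}(M^*)\geq 1$. Because the leaf sets are pairwise disjoint, the $o_j$ are distinct, so $\sum_j s_{o_j}\leq size(M^*)$. Moreover each other leaf $v_{j'}$ contributes at least one agent whose set is disjoint from $A^{v_j}$, so $c_{v_j}\leq n-(t-1)$. Summing the bound from the first paragraph gives
\[
\min_{M\in\Gamma(I)} size(M)\leq t(n-t+1)+size(M^*)-t = t(n-t)+size(M^*).
\]
Dividing by $size(M^*)\geq n$ yields $WoP(I)\leq 1+t(n-t)/n\leq 1+n/4=(n+4)/4$, using $t(n-t)\leq n^2/4$.

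The main obstacle is this laminar counting step. One has to make sure the per-leaf slack $s_{o_j}-1$ is charged to the optimum through disjointness of the leaf sets, and that the bound on $c_{v_j}$ correctly uses one private agent per other leaf, so that the product $t(n-t)$ rather than $tn$ appears. The reliance on \cref{lem:LaminarPess} (valid iff each leaf set alone suffices) is what makes the rest routine.
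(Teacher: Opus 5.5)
Your proof is correct. It uses the same key ingredients as the paper (\cref{lem:LaminarPess}, the one-private-agent-per-leaf count, the ceiling slack, and $t(n-t)\le n^2/4$), but it reaches the bound by a more direct route.

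The paper first replaces $I$ by an auxiliary star-shaped instance $I'$: the same leaves, exactly one agent per leaf, the remaining $n-t$ agents at the root, and a new unit-size option $o^*$ added to the root. It then argues $WoP(I)\le WoP(I')$, since optimistic minima can only shrink and the leaf demands of \cref{lem:LaminarPess} can only grow. Finally it bounds $WoP(I')$ using $\alpha_j=\min_{o\in A^{v_j}}s_o$, with a case split on whether $\sum_j\alpha_j\ge n$.

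You skip $I'$ entirely. Your inequality $c_{v_j}\le n-t+1$ is exactly what the passage to $I'$ encodes. You then charge the slack $s_{o_j}-1$ to options that $M^*$ actually orders, which gives $\min_{M\in\Gamma(I)}size(M)\le t(n-t)+size(M^*)$. A single division by $size(M^*)\ge n$ then covers both of the paper's cases at once.

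This gains brevity and some rigor. The paper's Case~1 intermediate bound $\lceil (n-t+1)/\min_j\alpha_j\rceil$ is not by itself at most $(n+4)/4$, and it is closed only by an informal ``worst case'' argument. You also need neither the root agent with $A_{i^*}=O$ nor the extra option $o^*$.

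What the paper's construction gains in return is an explicit picture of the extremal structure. That picture leads directly to the tight example: $t=n/2$ singleton leaves with serving size $2$.

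Your chained case is just the $t=1$ instance of your laminar computation. In the identical case, add the one-line remark that $\Gamma(I)\subseteq\Lambda(I)$ always holds, which gives $WoP(I)\ge 1$.
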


\begin{proof}[Proof Overview]
    Fix a \Instance{} instance $I=\langle N,O,(s_o)_{o\in O},(A_i)_{i\in N}\rangle$ with laminar acceptability relations. We first consider the laminar case. Given instance $I$, we shall create an alternate instance $I'$, also with laminar acceptability relations with the same number of agents  and minimal acceptability sets, but which will always satisfy $WoP(I)\leq WoP(I')$ and then reason about its maximum possible waste of pessimism value. We will prove this bound as follows:

    \begin{enumerate}
        \item[i.]   Create an alternate instance $I'$ with laminar acceptability relations, with the same number of agents and the same number of minimal acceptability sets. 
        \item[ii.]  Prove that $WoP(I)\leq WoP(I')$.
        \item[iii.] Find a worst case choice of serving sizes and number of minimal acceptability sets for WoP
        \item[iv.]  Demonstrate an instance where this bound is tight. 
    \end{enumerate}

    Essentially, the alternate instance created is such that its corresponding laminar containment tree is a star graph that has the same number of leaves as that of $I$. Under $I'$, the root and the leaves of the two tree correspond to the same sets as those of the root and leaves under $I$. Further, under $I'$, exactly one agent finds has an acceptability set corresponding to each leaf. All other agents have acceptability set corresponding to the root. From \cref{lem:LaminarPess}, the amount of food needed from each leaf under $I'$ can only increase now. We use this to show that $WoP(I)\leq WoP(I')$. As $I'$  significantly  structured, we can argue about its worst case bounds. We show that the worst case $WoP$ happens when the number of leaves is $n/2$ and the minimum serving size of any option in a minimal acceptability set has size at least two.
\end{proof}    

\section{Conclusions and Discussion}

 We have introduce the problem of menu selection and study its computational complexity. We characterize the space of instances that allow for polynomial-time algorithms for computing minimum sized optimistically valid menus. For the pessimistic case, we show that for some realistic settings, minimum sized pessimistically valid menus can in fact be computed efficiently.  We introduce a price of anarchy analog for this setting, \emph{the waste of pessimism}. We find tight upper bounds on the price of anarchy for the case of laminar, chained and identical acceptability relations. 

 Our paper opens many interesting avenues for future work. Firstly, there is no known tight bound on waste of pessimism outside of laminar instances. We conjecture that the upper bound for laminar should in fact extend to {\em all} instances. Additionally, there are many other directions that can be pursued within this model. One important direction is to develop approximation algorithms which find valid menus of size at most $\alpha\geq 1$ times the minimum sized valid menu. Similarly, it would be interesting to explore how to build ``minimal'' valid menus  and understand what guarantee they achieve on the minimum sized valid menu. 

Our paper offers a new perspective on the literature on online matching problems (e.g., \cite{KVV90,HTW24,EIV23a}). In online bipartite matching, one side of the graph is known in advance, while vertices on the other side arrive sequentially with their incident edges, and the algorithm must irrevocably decide whether to match each arrival to an available neighbor. The objective is to maximize the size or weight of the matching, a model that captures many real-time allocation settings and has been widely studied under adversarial, random-order, and stochastic arrivals. Our problem can be seen as a preceding decision problem: selecting a subset of the known side under capacity or cost constraints. While we adopt optimistic and pessimistic perspectives, it would be interesting to incorporate ideas from online matching that exploit stochastic information about arrivals or consumption, e.g., kidney matching \cite{dickerson2021allocation,dickerson2012optimizing} or online recommender systems settings \cite{aird2024dynamic}.

 
 Another direction would be to introduce additional objectives or constraints such as costs over options. Alternately, there may be complementary options where a protein and/or vegetable based option, such as a burger, steak or salad, needs to be paired with a carbohydrate based option like rice or bread. 
 Another constraint may be that different agents have different hunger levels and need different amounts of food. For instance, adults need one portion but kids may only need half a portion. Our DP based algorithms extend to this case in a straightforward manner, as we simply need to adjust the target goal. 
 
 Finally, it would also be interesting to introduce agent preferences over acceptable options. This allows us to pursue objectives such as Pareto optimality, maximum welfare and even fairness in our choice of menu. Potentially, this may allow for lower pessimistic risk, if agents only eat their most preferred options from a given menu. As a result, our work opens many intriguing possibilities of further work on menu selection with applications to other collective choice settings \cite{aziz2020participatory,aleksandrov2020online}.

\newpage

\bibliographystyle{ACM-Reference-Format}
\bibliography{anarchy}

\appendix


\section{Related Work}\label{app:relwork}

The problem of minimizing food waste and its various causes have been studied from a variety of systemic angles \cite{pandey2021food,moraes2021systematic,grandhi2016waste,panda2019minimizing,matharu2022efforts}. Poor planning and overpurchasing has been consistently cited as a cause of food waste. In our paper, we take a complementary angle to the previous work, where we look at an algorithmic approach to choosing the most efficient menu to feed all agents. We introduce a novel framework, and to the best of our knowledge, no other work has studied an algorithmic model where every agent must find something to eat from a collective, waste minimizing menu.

Our model has connections with two problems in computational social choice: multi-winnner voting (see e.g.,~\citet{lackner2023multi}) and fair division (see \citet{amanatidis2023fair} for a survey). 
Similar to multi-winner voting, we need to select one menu from a given set of options which will then be shared among the agents. Unlike voting problems, we assume an upper bound  (via serving sizes) on how much of an option can serve an agent. Once a menu has been selected, one can consider ways to allocate the food among the agents. Unlike standard allocation settings, we can choose the options that will then be implicitly shared across agents.

Recently, \citet{fish2025stable} studied a related problem within a public good setting. 
They look for a stable selection of public goods such that each item is used by at least $s$ agents and the menu is not blocked by more than $t$ agents. In contrast to our model, as \citet{fish2025stable} consider public goods, where each option effectively has unlimited serving sizes as opposed to limited serving sizes in our model. Further, they do not require all agents to have at least one acceptable option in the menu, and some agents may go hungry. These assumptions are common when studying public goods \cite{fain2018fair,banerjee2023proportionally,kroer2025computing,garg2021fair}. However, they need not hold in many day-to-day settings of ordering food. 

Our problem, under the pessimistic consumption model, can be viewed as a natural decision problem that \textit{precedes } online matching problems (see, e.g., \cite{KVV90,HTW24,EIV23a}). Online matching has been well studied since the work of \citet{KVV90}.  
We discuss this connection in more detail in the Conclusions. Bipartite one-one matchings are also equivalent to house allocation with approval preferences.  This perspective also connects our work to control problems. 


\paragraph{Relations To Control and Optimization} 
Our work also extends existing work on control problems within social choice (see \citet{CKN+2025control} for a survey). Within social choice, control problems typically explore computational complexity of adding or removing agents/items to ensure the existence of solutions that satisfy certain properties \citep{ASW16a,GNS+24capacity,BBH+21bribery,CC23optimal}. Our work differs from existing research in this space on two fronts. Firstly, the work in this space has been agnostic to the choice of mechanism or algorithm. In contrast, we introduce the question of controlling with a specific algorithm in mind. Secondly, prior work on control and social choice has not considered such ``house allocation'' type problems where agents have to each be matched to one item, but rather either looked at fair division \citep{ASW16a,BBD+2025resolve} or two-sided matchings \citep{BBH+21bribery,BCL+22capacity,BCL+23capacity,CC23optimal,GNS+24capacity}. The majority of the work in this space is also an \textit{optimization problem}: find the minimum number of items/agents to add or remove so desirable outcomes can exist. 

Covering problems are another family of optimization problems that relates to our problem under optimistic consumption. Problems like set cover, vertex cover, domination number are well known to be NP-hard \cite{garey1979computers}. Our model is a generalization of these problems. For example, when the serving size of each option is equal to the number of people who find it acceptable, then a minimum size optimistically valid menu is the same as a minimum size set cover. In addition to covering every agent by an option, we also restrict the ability of an option to simultaneously cover multiple agents based on its serving size. Consequently, multiple copies of an option may be needed to feed all the agents. 
Our goal of finding smallest menus that feed all agents can be also recast as an optimization problem where the number of agents that are fed are captured by a subadditive function. We wish to find the smallest menu for which our objective function has value equal to the number of agents. Subadditive functions and their subtypes have been widely studied in the context of welfare maximization for item allocation \cite{feige2006maximizing,dobzinski2024constant,barman2021approximating,barman2024sublinear} and combinatorial procurement auctions \cite{badanidiyuru2012optimization,badanidiyuru2012sketching,chekuri2011submodular}. Typically, there is no minimum size requirement on the set selected. Rather, in work on item allocation, there may be an upper bound on the  cardinality of the set allocated to an agent \cite{badanidiyuru2012optimization,biswas2018fair}. Here, the optimization objective is the value of the function, not the sets. In contrast, we have a given value of our function and we wish to pick the smallest set to achieve this value.

The term {\em menu selection} has been used in the past to refer to other research problems, especially in human-computer interaction (HCI). In the field of HCI, 
menu selection is concerned with what options to show to a user and how should they be presented in a drop-down menu or list \cite{karat1986comparison,shneiderman1986designing,arend1987evidence,wilson2010pressure}. Outside of computer science, food researchers have also heavily studied the effect of menu selection on customer experience and restaurant sales \cite{jung2014study,kang2011study,peters2020factors,ozdemir2015menu}. Here, menu selection refers to not only the dishes on offer but the fonts, designs, options, and other interface details.

\section{Omitted Material from \cref{sec:characterizations}}\label{app:charac}

\subsection*{Optimistic Consumption}

\optcharac*
\begin{proof}
    Fix an instance $I$ and a menu $M$. Recall that the optimistic consumption is precisely the size of the maximum cardinality matching in $G_M$. Thus the equivalence of statements $(1)$ and $(2)$ is straightforward. 
    We now show the equivalence of these statements with there existing a consumption sequence $\omega$. 
    
    Let $M$ is optimistically valid. Choose $x$ to be a maximum cardinality matching in the consumption graph $G_M$. Consequently, we have that for every agent $i\in N$, there exists $o\in A_i$ s.t. $x_{i,o}=1$.  For each $k\in [n]$, let $\pi$ be an arbitrary ordering over $N$ and set $\omega(k)=o$ s.t. $x_{\pi(k),o}=1$.  As $x$ is a matching, we have that for each option $o\in O$, $\sum_{i\in N}x_{i,o}\leq q_os_o$.  Thus, for every agent $k\in [n]$, it must hold that the remaining menu $M^k$, after previous agents have picked, must satisfy $M^k|_{A_{\pi(k)}}$. Specifically, it must hold that $o\in M^k|_{A_{\pi(k)}}$ where $o$ is s.t. $x_{\pi(k),o}=1$. Thus, $\omega$ is indeed a consumption sequence of $M$.

    For the converse, let there exist a consumption sequence $\omega$ s.t. for each $k\in [n]$, $\omega\neq \emptyset$. Let $\pi$ be a corresponding ordering over agents. Consider the matching $x$ where for each $k\in [n]$, $x_{\pi(k),\omega(k)}=1$ and $x_{\pi(k),o}=0$ for all $o\neq \omega(k)$. As $\omega$ is a consumption sequence $\omega(k)\in A_{\pi(k)}$ and $q_{\omega (k)}(M)>0$. Further, the number of agents matched to option $o$:  $\sum_{i\in n}x_{i,o}=|\{k\in [n]|\omega(k)=o\}|\leq q_os_o$. Therefore, $x$ is indeed a matching of the consumption graph $G_M$ and has size $n$. Therefore, $M$ must be optimistically valid. 
\end{proof}

\subsection*{Pessimistic Consumption}


\pesscharac*
\begin{proof}
    We shall first prove equivalence of pessimistic validity to the condition on pessimistic risk for each agent. 

    \textbf{Equivalence of statements $(1)$ and $(2)$.}  
    Assume $M$ is pessimistically valid. Thus every maximal matching in the consumption graph $G_M$ must have size $n$. Fix agent $i\in N$. Consider a maximal matching where as many agents eat from $A_i$ as possible. In particular, we consider  a maximum cardinality matching from $N\setminus \{i\}$ to the multiset $M^i$ where $q_o(M^i)=q_o(M)$ for all $o\in A_i$ and $q_o(M^i)=0$ for all $o\notin A_i$. Note that the size of this matching is $\rho_i(M)$. As $M$ is pessimistically valid, this matching must not be maximal. In particular, it must hold that $\rho_i(M)<\sum_{o\in A_i} q_o(M)s_o$. Consequently, we have that for all agents $i\in N$, $\sum_{o\in A_i} q_o(M)s_o\geq 1+ \rho_i(M)$.
    
    Now it remains to prove the converse. Assume that $M$ is s.t. for all agents $i\in N$, it  holds that $\rho_i(M)<\sum_{o\in A_i} q_o(M)s_o$. Consequently, we have that for all agents $i\in N$, Fix agent $i\in N$ and an arbitrary maximal matching $x$ in consumption graph $G_M$. It suffices to prove that for $i$ and $x$ we have that $\sum_{o\in A_i} x_{i,o}=1$. 
    
    By assumption, we have that $\sum_{o\in A_i} q_o(M)s_o\geq 1$. Consequently, for at least one option in $o\in A_i$, it must be that $q_o(M)\geq 1$. By definition of pessimistic risk, at most $\rho_i(M)$ agents in $N\setminus \{i\}$ can be matched to an option in $A_i$. As we have that $\sum_{o\in A_i}q_o(M)s_o\geq 1+\rho_i(M)$, any maximal matching should not leave agent $i$ unmatched. Thus, as $x$ is a maximal matching, it must hold that $\sum_{o\in A_i}x_{i,o}=1$.
    Hence, $M$ is pessimistically valid. 

    \textbf{Equivalence of statement $(3)$ to pessimistic validity.} Let $M$ be pessimistically valid. Consider a consumption sequence $\omega$ of $M$ with a corresponding ordering $\pi$. Recall that for any $k\in [n]$, each preceding agent, under $\pi$, can consume at most one unit of an option acceptable to agent $i$.  Consequently, 
    we have that the difference between the number of units of acceptable options for agent $\pi(k)$ under $M$ and that under $M^k$ is $size(M|_{A_{\pi(k)}})-size(M^k|_{A_{\pi(k)}})\leq \rho_i$. As $M$ is pessimistically valid, we have that $size(M|A_i)=\sum_{o\in A_i} q_o(M)s_o\geq 1+ \rho_i(M)$. Consequently, $M^k|_{A_{\pi(k)}}$ must be non-empty. As a result, $\omega(k)\neq \emptyset$ for each $k\in [n]$.

    Finally, for the converse, let $M$ be s.t. for every consumption sequence $\omega$ of $M$, we have that for each $k\in [n]$, $\omega (k)\neq \emptyset$. Fix agent $i\in N$. Let $x$ be maximum cardinality matching between $N\setminus \{i\}$ and $M|_{A_i}$ . Consider an arbitrary ordering $\pi$ under which $\pi(n)=i$. That is, $i$ is the last agent to pick from $M$. Let $N^i=\{j\in N\setminus \{i\}|\sum_{o\in A_j} x_{j,o}=1\}$.  Consider a choice of $\omega$ s.t. for $\pi(k)\in N^i$  and the option $o$ for which $x_{\pi(k),o}=1$, we set $\omega(k)=o$. For all $k\in [n]$ s.t. $k\notin N^i$, choose $\omega (k)$ arbitrarily from $M^k$, whenever $M^k$ is non-empty.     
    It is straightforward to see that $\omega$ is indeed a consumption sequence for $M$. Consequently, it must hold that $\omega(n)\neq \emptyset$, by assumption. As a result, we must have that $M^n|_{A_i}$ must be non-empty. This gives us that the size of $x$ if $\rho_i=\sum_{j\neq i}\sum_{o\in A_j}x_{j,o}<\sum_{o\in A_i} q_o(M)s_o$. 
    
    Hence, we have that for each $i\in N$, $size(M|_{A_i})>\rho_i$ and thus, $M$ is pessimistically valid. 
\end{proof}


\section{Omitted Material from \cref{sec:complexity-ILPs}}\label{app:complexity}
\subsection*{Computational Complexity}
\validverify*
\begin{proof}

    Given a \Instance{} instance $I=\langle N,O, (A_i)_{i\in N},$ $ (s_o)_{o\in O},\rangle$ and menu $M$, in order to check if $M$ is optimistically valid we need to check if there is a matching $x:N\times O \rightarrow [0,1]$ where for each agent $i\in N$, $\sum_{o\in A_i}x_{i,o}=1$ and for each option $o\in O$, $\sum_{i\in N} x_{i,o}\leq q_o(M)s_o$. This can be completed in polynomial time using a bipartite maximum matching algorithm (See \citet{west2001introduction} for an overview).

    In order to compute the pessimistic consumption of $M$, we need to find a minimum sized maximal matching. The problem of finding a minimum size maximal matching is known to be NP-hard  even on 3-regular bipartite graphs (GT10 in Appendix A1.1 \cite{garey1979computers}). Consequently the problem of computing the pessimistic consumption of a given menu is NP-hard in general.

    {\em Checking validity of $M$.} To check if $M$ is optimistically valid, it is sufficient to check if its consumption graph admits an agent saturating matching. This can be done in polynomial time using a bipartite maximum matching algorithm. For pessimistic validity, we know that we cannot check the pessimistic consumption in polynomial time. Recall from \cref{lem:pess-valid} that $M$ is pessimistically valid if and only if for every agent $i\in N$, $size(M|_{A_i})\geq 1+\rho_i$. Note that, $\rho_i$ is the size of the maximum cardinality matching between $N\setminus \{i\}$ and $M|_{A_i}$. As a result, $\rho_i$ can be computed by a bipartite maximum matching algorithm as well. Hence, it can be checked in polynomial time if $M$ is pessimistically valid.  
\end{proof}


\optcomplexity*

\begin{proof}
    \noindent {\em Membership in NP.} Given instance $I$ and a size $B$, we can verify if $I$ admits a valid menu of size at most $B$, by using the menu as a certificate. It can checked in polynomial time if its size is at most $B$ and whether it is optimistically valid, by \cref{thm:verify}.\\

    \noindent {\em NP-hardness.} We give a reduction from the \textsc{3DimensionalMatching} (3DM) problem. In this problem, we are given a tripartite hypergraph $G=(X,Y,Z,E)$ where $|X|=|Y|=|Z|=k$ and for each $e\in E$ we have that $e\in X\times Y\times Z$. We need to decide if there exists a (perfect) 3DM of size $k$. This problem is shown to be NP-hard, even when each vertex is contained in at most two hyperedges \cite{garey1979computers}.  We reduce it to our setting as follows:

    Given $G=(X,Y,Z,E)$ we create a vertex agent $i_a$ for each vertex $a\in X\cup Y \cup Z$. Further, for each $e=(x,y,z)\in E$ we create an edge option $o_e$ s.t. size $s(o_e)=3$ $o_e\in A_{i_a}$ if and only if $a\in \{x,y,z\}$. Thus we have $n=3k$ agents and $m=|E|$ edges. 

    We now show that that an optimistically valid menu of size $n=3k$ exists if and only if the given 3DM instance has a 3DM of size $k$. 
    
    Suppose an optimistically valid menu of size $n=3k$ does indeed exist. Let this be $M$. Under $M$, for each $i\in N$, there exists $o\in A_i$ s.t. $q_M(o) \geq 1$. Recall that each option has serving size three. As $M$ has size $3k$, at most $k$ options must have been selected. Further, each option is contained in the acceptable set of exactly three agents, therefore, at least $k$ distinct options are needed to feed all agents. As a result, we must have that $q_M(o)\leq 1$, for all $o\in O$. Consequently, we have that the set $\mu=\{e\in E|q_M(o)=1\}$ must be a perfect matching.

    For the converse, given a perfect matching $\mu\subset E$, we consider the menu $M=\{o_e|e\in \mu\}$. As $\mu$ is a perfect matching, for each $i\in N$, there exists an option $o\in M\cap A_i$. Further, as $s(o)=3$ for each $o$, we have that one serving of $o$ serves all three agents who find it acceptable. Consequently $M$ is an optimistically valid menu.  
\end{proof}

\pesscomplexity*


\begin{proof}
   \noindent {\em Membership in NP.} Given instance $I$ and a size $B$, we can verify if $I$ admits a valid menu of size at most $B$, by using the menu as a certificate. It can checked in polynomial time if its size is at most $B$ and whether it is pessimistically valid, by \cref{thm:verify}.\\

   \noindent {\em NP-hardness.} We give a reduction from the \textsc{Exact3Cover} (X3C) problem, which is known to be NP-hard \cite{garey1979computers}. In this problem, given a universe of $3k$ elements $\mathcal{U}$ and a family of $\ell$ subsets $\mathcal{S}$, where for each $S_t\in \mathcal{S}$, we have that $|S|=3$. The objective is to determine if there exist $k$ mutually disjoint subsets which cover $\mathcal{U}$. We shall show that an X3C exists for $\langle \mathcal{U},\mathcal{S}\rangle$ if and only if a pessimistically valid menu of size $n$ exists.  Given an instance of X3C $\langle \mathcal{U},\mathcal{S}\rangle$, we create a \Instance{} instance as follows:
   \begin{itemize}
       \item For each element $e\in \mathcal{U}$, create element agent $i_e$,
       \item For each subset $S_t\in \mathcal{S}$, create three set agents $j^t_1$, $j^t_2$ and $j^t_3$ and options $o^t_1,\cdots, o^t_4$.
       \item For each $o\in O$, we set $s_o=1$
       \item For each $S_t\in \mathcal{S}$, let $S_t=\{e^t_1,e^t_2,e^t_3\}$. For each $t'\in [3]$, $A_{j^t_{t'}}=\{o^t_{t'}.o^t_4\}$. Further, add $o^t_{t'}$ to $A_{i_{e_{t'}}}$.       
   \end{itemize}

   Consequently, we create an instance with $n=3(k+\ell)$ agents and $m=4\ell$ options. For each $e\in \mathcal{U}$, $A_{i_e}$ contains exactly one option corresponding to each subset containing $e$. Note that the exact specification of which $o^t_{t'}$ is matched to which element agent is not strict. It is only needed that each element agent find a distinct option from the set acceptable. Further each $o^t_{t'}$ is acceptable to exactly two agents: one element agent and one set agent. As a result, for each element agent $i_e$ and each $o\in A_{i_e}$, there is a distinct set agent who also finds that option acceptable. Similarly, for each set agent $j^t_{t'}$ only $o^t_{t'}$ and $o^t_4$ are acceptable. Note that $j^t_{t'}$ is the only agent who finds both these options acceptable.   

   From \cref{lem:pess-valid}, we know that for a menu $M$ to be pessimistically valid we need for each $i\in N$ that $\sum_{o\in A_i} q_os_o\geq 1+\rho_i$, where $\rho_i$ is the size of the maximum cardinality matching from $N\setminus\{i\}$ to $M|_{A_i}$. For an element agent $i_e$, we have that $\rho_i=|\{o\in A_i|q_o(M)>0\}|$. Consequently, for a menu $M$ to be pessimistically valid, we need for every element agent $i_e$, it must hold that $q_o(M)\geq 2$ for some $o\in A_i$. Now consider a set agent $j^t_{t'}$. We can analogously see that if $q_{o^t_{t'}}(M)\leq 1$ AND $q_{o^t_4}(M)\leq 2$ then $M$ is not pessimistically valid for $j^t_{t'}$. Thus, for $M$ to be pessimistically valid, we need for each $t\in [\ell]$ and each $t'\in [3]$ that either $q_{o^t_{t'}}\geq 2$ or $q_{o^t_4}\geq 3$.  

   \textbf{Constructing menu from X3C.} We shall now show that an X3C exists for $\langle \mathcal{U},\mathcal{S}\rangle$ if and only if a pessimistically valid menu of size $n$ exists in the instance constructed. Let $S_{t_1},\cdots, S_{t_k}$ be an exact 3-cover. Consider the following menu $M$ where for each $t\notin \{t_1,\cdots t_k\}$, we set $q_{o^t_4}=3$ and for each $t'\in \{1,2,3\}$ we set $q_{o^t_{t'}}(M)=0$. For each $t\in \{t_1,\cdots, t_k$, we set $q_{o^t_4}=0$ and for each $t'\in \{1,2,3\}$ we set $q_{o^t_{t'}}(M)=2$. The size of $M$ is $6k+3(\ell-k)=3(\ell+k)=n$. Thus, it only remains to prove pessimistic validity.

   We first show pessimistic validity for element agents $i_e$. Observe that as $S_{t_1},\cdots,S_{t_k}$ is an X3C, for each $e\in \mathcal{U}$, there exists $t\in \{t_1,\cdots, t_k\}$ s.t. $e\in S_t$. Note that, we have $o^t_{t'}\in A_{i_e}$ for some $t'\in [3]$ and $q_{o^t_{t'}}(M)=2$. Further, as there is a unique set covering $e$, we have that $\rho_{i_e}=1$. Therefore, $M$ is pessimistically valid for each element agent $i_e$.  
   
   For set agent $j^t_{t'}$, there are two possible cases. If $t\in \{t_1,\cdots, t_k\}$, we have that $q_{o^t_{t'}}(M)=2$ and $q_{o^t_4}(M)=0$. Consequently, $\rho_{j^t_{t'}=1}$ and $M$ is pessimistically valid for $j^t_{t'}$. The remaining case is when $t\notin \{t_1,\cdots,t_k\}$. In this case, we have that $q_{o^t_4}=3$ and $q_{o^t_{t'}=0}$. Here, we have that $\rho_{j^t_{t'}=2}$ which is strictly less than $q_{o^t_4}(M)$. Thus, $M$ is pessimistically valid and has size $n$.

   \textbf{Constructing X3C from menu.} We now consider the case that there exists a pessimistically valid menu $M$ of size $n$. We now show that we can use it to construct an exact 3-cover. Recall that we need for pessimistic validity that for each set $S_t$, either $q_{o^t_4}\geq 3$ or $q_{o^t_{t'}}\geq 2$ for each $t'\in \{1,2,3\}$. We first prove that for each element agent exactly one acceptable option must be in $M$. Fix $e\in \mathcal{U}$. We know that for pessimistic validity, it must hold that for at least one $o\in A_{i_e}$, $q_{o}\geq 2$. Suppose, for contradiction, that there are two distinct options $o\in A_{i_e}$ s.t. $q_o>0$.  As the size of the menu is exactly $n$, by pigeonhole principle this leaves at least one agent for whom no acceptable option in $M$, which contradicts the fact that $M$ is pessimistically valid. Therefore, $M$ must have exactly one option s.t. $q_o(M)>0$. Specifically, for this option, $q_o(M)=2$. 
   
   For agent $i_e$ let the unique option be $o^t_{t'}$ for some $t'\in [3]$. We now show that this must mean that $q_{o^t_1}(M)=q_{o^t_2}(M)=q_{o^t_3}(M)=2$.  Suppose not, that let $q_{o^t_p}<2$ for some $p\in [3]$. Thus, for pessimistic validity, we need that $q_{o^t_4}\geq 3.$ This again means that there is more food in $M$ from options corresponding to $S_t$ than there are agents consuming it, which contradicts the pessimistic validity of $M$ which has size $n$. Therefore, we have that under $M$ for each $t\in [\ell]$ either i) $q_{o^t_4}=3$ and  $q_{o^t_1}(M)=q_{o^t_2}(M)=q_{o^t_3}(M)=0$ or ii) $q_{o^t_4}=0$ and $q_{o^t_1}(M)=q_{o^t_2}(M)=q_{o^t_3}(M)=2$. 
   
   Recall that for each element agent, there is a unique option $o$ that has $q_o(M)>0$. Consequently, for any two sets $S_t$ and $S_v$ s.t. $q_{o^t_4}(M)=q_{o^v_4}=0$, we have that $S_t\cap S_v=\emptyset.$ Therefore, it must be that the sets with $q_{o^t_4}(M)=0$ constitute an X3C of $\langle \mathcal{U},\mathcal{S}\rangle$.
   

\end{proof}

\subsection*{ILP for Minimum Sized Pessimistically Valid Menus}
We first recall the ILP for computing the minimum sized pessimistically valid menu.

\begin{align}
\text{minimize \quad}                               &\displaystyle\sum\limits_{o\in O} q_os_o\nonumber\\
\text{subject to \quad}                             \nonumber\\
       \displaystyle\sum\limits_{o\in A_i} w_o^i    &\geq 1+ \displaystyle\sum\limits_{j\in N^i}y_j^i                   && i\in N\tag{\ref{eq:hallinLP}}\\ 
       w_o^i                                        &\leq q_os_o                                                        &&i\in N, o\in A_i\tag{\ref{eq:wo-upper1}}\\
       w_o^i                                        &\leq x_o^i(n+s_o)                                                  &&i\in N, o\in A_i\tag{\ref{eq:wo-upper2}}\\
       w_o^i                                        &\geq q_os_o -(1-x_o^i)(n+s_o)  ~~~                                 &&i\in N, o\in A_i\tag{\ref{eq:wo-lower}}\\
       y_j^i                                        &\geq x_o^i                                                         &&i\in N, j\in N^i, o\in A_i\cap A_j\tag{\ref{eq:option-agent-feas1}}\\
       y_j^i                                        &\leq\displaystyle\sum\limits_{o\in A_i\cap A_j} x_o^i              && i\in N,j\in N^i\tag{\ref{eq:option-agent-feas2}}\\
       x_o^i                                        &\in \{0,1\}                                                        &&i\in N,  o\in A_i\nonumber\\
       w_o^i                                        &\in \{0,1,\cdots,(n+s_o)\}                                         &&i\in N, o\in A_i\nonumber\\
       y_j^i                                        &\in \{0,1\}                                                        &&i\in N, j\in N^i\nonumber\\
       q_o                                          &\in \left\{0,1,\cdots,\big\lceil \frac{n}{s_o}\big\rceil\right\}   &&o\in O\nonumber
\end{align}

\paragraph{Correctness of ILP} 
We need to prove that this ILP is feasible only for choices of $q_o$ for $o\in O$ that correspond to a pessimistically valid menu. To this end, we first prove that $w_o^i=x_o^iq_os_o$ under every feasible solution $z=((q_o)_{o\in O}, (x_o^i)_{i\in N,\ o\in A_i}, (w_o^i)_{i\in N, o\in A_i}, (y_j^i)_{i\in N,\ y\in N^i} )$. 
Fix agent $i\in N$ and option $o\in A_i$. 

\underline{Case 1:} $x_o^i=1$. In this case constraint \ref{eq:wo-lower} reduces to $w_o^i\geq q_os_o$. Further, as a consequence of \cref{obs:capped-amt} constraint \ref{eq:wo-upper2} becomes a trivial upper bound of constraint \ref{eq:wo-upper1} and we get that $w_o^i=q_os_o$.

\underline{Case 2:} $x_o^i=0$. In this case, constraint \ref{eq:wo-upper2} insists that $w_o^i\leq 0$. Observe that this satisfies constraint \ref{eq:wo-upper1}. Further, from \cref{obs:capped-amt} we have that $n+s_o>q_os_o$. Combining  this with $x_o^i=0$ makes constraint \ref{eq:wo-lower} trivially satisfied in this case. Thus, we must have that $w_o^i=0$ whenever $x_o^i$.

Consequently, we have that $w_o^i=x_o^iq_os_o$. As a result, our current constraints are simulating the constraints of the previous ILP for each agent $i\in N$. Thus, for a choice of $q_o$ for each $o\in O$, the ILP has a feasible solution only if the corresponding menu is pessimistically valid for every agent. 

It remains to show that the ILP is indeed always feasible for some choice of $q_o$ for an arbitrary instance. Given \Instance{} instance $I$, observe that one trivial feasible solution is when we set $q_o=\lceil \frac{n}{s_o}\rceil$. Every option has enough quantity to feed {\em all} agents. As a result, this ILP will be feasible for all instances. Hence, the ILP finds a minimum sized pessimistically valid menu.

\paragraph{Size of ILP} Observe that there are $m$ variables of type $q_o$, at most $nm$ variables of type $x^i_o$ and $w^i_o$ each and at most $n^2$ variables of type $y_j^i$. Further, there are $n$ constraints corresponding to the Hall's theorem violation constraints, at most $3mn$ constraints corresponding to $w_o^i$ simulating $x_o^iq_os_o$ and at most $n^2(m+n)$ constraints corresponding to $y_j^i$ being $1$ if and only if $x_o=1$ for some $o\in A_i\cap A_j$. Finally, there are at most $2nm+n^2+m$ constraints specifying the ranges of values the variable can take. Consequently, the number of constraints is $O(n^2,mn^2)$.

\citet{lenstra1983integer} had shown that for every ILP, there is a corresponding parameterized algorithm that solves the ILP, when parameterized by the number of integer variables. As a result, our ILP leads to the following result.

\begin{proposition}\label{thm:pess-fpt}
    There is an $O^*(2^{n(n-1+m)})$ time algorithm for finding a minimum sized pessimistically valid menu. 
\end{proposition}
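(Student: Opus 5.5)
The plan is to invoke \citet{lenstra1983integer}, in its refined form due to Kannan and Frank--Tardos: an ILP with $p$ integer variables can be solved in time $p^{O(p)}\cdot L^{O(1)}$, where $L$ is the encoding length. The quoted bound $O^*(2^{n(n-1+m)})$ suggests a more direct route, namely exhaustive enumeration over the binary selection variables. I would therefore organize the argument around guessing the binary variables and solving the rest in polynomial time, which also matches the exponent.

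First, I would observe from the ILP of \cref{subsec:mips} that, for each agent $i$, the binary variables are $x^i_o$ for $o\in A_i$ (at most $m$) and $y^i_j$ for $j\in N^i$ (at most $n-1$). Over all agents this gives at most $n(n-1+m)$ binary variables. Once the $x$ and $y$ variables are fixed, we check constraints \eqref{eq:option-agent-feas1} and \eqref{eq:option-agent-feas2} directly; they involve only $x,y$. By the correctness argument in \cref{app:complexity}, the $w$ variables are then forced to equal $x^i_oq_os_o$. So the remaining problem is in the variables $q_o$ alone: minimize $\sum_o q_os_o$ subject to, for each $i$, $\sum_{o\in A_i: x^i_o=1} q_os_o \ge 1+\sum_{j} y^i_j$, with $0\le q_o\le \lceil n/s_o\rceil$.

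This residual problem is a covering integer program with nonnegative coefficients. Its general solution is still hard, so the key step is simplifying it. Instead of solving it directly, I would note that the sets $S_i=\{o: x^i_o=1\}$ are the Hall-violator witnesses. Since we minimize over everything anyway, I would reparametrize so that each agent's requirement is satisfiable within polynomial work. The alternative is to rely on Lenstra with the $q_o$ being only $m$ integer variables, giving $m^{O(m)}$; that is not within the stated bound unless $m$ is absorbed. The cleaner fix is to treat $q_o$ as bounded integers with domain size at most $n+1$ and note that the covering IP in $q$ alone is a Lenstra instance with $m$ variables. I would then bound $m^{O(m)}\le 2^{O(m\log m)}$ and argue that the dominant term is the $2^{n(n-1+m)}$ enumeration. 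This needs $m\log m = O(nm)$, i.e.\ $\log m=O(n)$; otherwise $m$ is exponential in $n$, and duplicate options with identical acceptor sets and dominated serving sizes could be pruned (at most $2^n$ distinct acceptor sets, each keeping polynomially many sizes).

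The main obstacle is exactly this residual $q$-subproblem. Enumerating binary variables alone does not make it polynomial, so the running time has to come from combining the $2^{n(n-1+m)}$ guesses with a Lenstra-type solve on the $m$ remaining integer variables. The delicate part is justifying that the latter is absorbed into the $O^*$ notation. If that fails, I would fall back to stating the bound as $2^{n(n-1+m)}\cdot m^{O(m)}$ via Lenstra directly on all integer variables after binarizing.
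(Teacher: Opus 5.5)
\textbf{There is a genuine gap, at exactly the step you call delicate.}

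The notation $O^*(\cdot)$ suppresses only polynomial factors. So the $m^{O(m)}=2^{O(m\log m)}$ cost of a Lenstra--Kannan solve of the residual $q$-program cannot be absorbed. Even granting $\log m=O(n)$ after your pruning, the extra factor is $2^{O(nm)}$. The total is then $2^{n(n-1+m)+O(nm)}=2^{O(n(n-1+m))}$, which changes the constant in the exponent and is not $O^*(2^{n(n-1+m)})$. Your fallback $2^{n(n-1+m)}\cdot m^{O(m)}$ is likewise weaker than the statement.

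The root cause is the decomposition. After you guess the binary variables, all the remaining difficulty sits in the covering program over $q$. Any superpolynomial work spent there multiplies the enumeration count rather than being hidden by it. (Only the $x^i_o$ actually need guessing, since \eqref{eq:option-agent-feas1}--\eqref{eq:option-agent-feas2} force $y^i_j=\max_{o\in A_i\cap A_j}x^i_o$.)

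For comparison, the paper's own support for the proposition is a one-line appeal to Lenstra's algorithm for the ILP of Section~\ref{subsec:mips}. Its exponent counts the $nm$ variables $x^i_o$ and the $n(n-1)$ variables $y^i_j$. Read literally, Lenstra-type bounds are $p^{O(p)}$ in the number $p$ of integer variables, and here these also include the $q_o$ and $w^i_o$. So that citation does not deliver base $2$ by itself either.

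\textbf{The missing idea: the bound needs no ILP, because the $q_o$ already have tiny domains.} By Observation~\ref{obs:capped-amt}, every minimum sized pessimistically valid menu has $q_o\in\{0,1,\dots,\lceil n/s_o\rceil\}$. To see this, suppose $q_o(M)s_o\ge n+s_o$ and delete one copy of $o$.
\begin{itemize}
\item Every $i$ accepting $o$ still has at least $n\ge 1+\rho_i$ servings in $A_i$.
\item For every other $i$, $M|_{A_i}$ and $\rho_i(M)$ are unchanged.
\end{itemize}
So condition (2) of Theorem~\ref{lem:pess-valid} survives while the size strictly drops.

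That gives at most $n+1$ values per option, hence at most $(n+1)^m\le 2^{nm}$ candidate menus, using $n+1\le 2^n$. Each candidate is checked in polynomial time via condition (2) of Theorem~\ref{lem:pess-valid}, as in Proposition~\ref{thm:verify}:
\begin{itemize}
\item compute $\rho_i(M)$ as a maximum flow from $N\setminus\{i\}$ (unit capacities) into the options of $A_i$ (capacities $q_o(M)s_o$);
\item compare it with $size(M|_{A_i})$.
\end{itemize}
Returning the smallest valid candidate takes $O^*(2^{nm})\subseteq O^*(2^{n(n-1+m)})$ time, which proves the proposition with room to spare.

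If you prefer to keep your decomposition, replace Lenstra by a different residual solver.
\begin{itemize}
\item Guess only $x$, in at most $2^{nm}$ ways.
\item Solve the residual program $\min\sum_o s_oq_o$ subject to $\sum_{o\in S_i}s_oq_o\ge b_i$ by a dynamic program over the options.
\item Its state is the vector of residual demands in $\{0,\dots,n\}^n$.
\end{itemize}
Since $(n+1)^n\le 2^{n(n-1)}$ for $n\ge 3$, this also lands in $O^*(2^{n(n-1+m)})$.
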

\subsection*{Parameterized Algorithm for Optimistic Menus}

In this section we design a parameterized algorithm to find a minimum size menu that is optimistically valid. Towards this first we define maximum weighted partition. A tuple of subsets $(S_1, S_2, \dots S_k)$ of a set $N$ is a partition of the set $N$ if
$\cup_{i \in [k]} S_i = N$ and $S_i \cap S_j =\emptyset$ for each pair $i,j \in [k]$, and $i \neq j$.
The \textsc{Maximum Weighted Partition} problem is defined as follows \cite{bjorklund2009set}:
Given a universe $N$, a set family $\mathcal{F}$ over $N$, a non-negative integer $k$, and functions $f_1, f_2, \dots f_k$, we ask if there is a partition $(S_1, S_2, \dots S_k)$ of $N$ in to $k$ parts such that  the sum $\sum_{i \in [k]} f_i(S_i)$ is maximum over all partitions $(S_1, S_2, \dots S_k)$ of $N$, where $S_1, S_2, \dots S_k$ are from $\mathcal{F}$.
\\

\noindent
\textbf{Algorithm Overview.} Let $k$ denote the number of options that will be used in the menu. We iterate over the values of $k$ starting from $1$ to $m$. For a given value of $k$, we reduce our problem to a maximum weighted partition problem. We define the universe as the set of agents $N$. Let the set of options be $\{o_1,o_2, \dots, o_m\}$ with serving sizes $s_1, s_2, \dots, s_m$ respectively. For each option $o_j \in O$, we define $N_j$ to be the set of agents that accepts $o_j$.
We define a family $\mathcal{F}$ of subsets of $N$ as follows: $\mathcal{F}$ contains all the non-empty subsets of $N_j$ for each $o_j \in O$. We define a function $f: \mathcal{F} \rightarrow \mathbb{Z}$ as follows.
For a set  $N' \subseteq N$ in the family $\mathcal{F}$, let $O'$ denote the set of options that are accepted by all agents in $N'$ and $o' \in argmin_{j' \in O'} \{|N'| = r' \mod s_{j'}\}$. Then,  we define $f(N') = -r$ where $|N'| = r \mod s_{o'}$. 
Now we find a partition $(X_1,X_2, \dots X_k)$ of $N$ into $k$ parts such that $X_i \in \mathcal{F}$ for each $i \in [k]$, and $\sum_{i \in [k]} f(X_i)$ is maximum. Let $s_{max}$ denote the maximum serving size of any option, i.e., $s_{max} =\max_{j \in O} s_j$.  We iterate over the values of $k$ starting with $1$ and going up to $m$ and take the solution partition $(X_1,X_2, \dots X_{k^*})$ of $N$   that has maximum weight over all values of $k$. For a part $X_z$ where $z \in k^*$, let $O_z$ denote the set of options that are accepted by all agents in $X_z$ and $o_z \in argmin_{j' \in O_z} \{|N'| = r' \mod s_{j'}\}$. Then we construct $M$ by setting $q_{o_z}(M) = \Big\lceil\frac{X_z}{s_{o_z}}\Big\rceil$ for each $z \in k^*$; for any other option $o\in O$, we set $q_o(M)=0$.

The maximum weighted partition problem can be solved in time $O^*(2^nk^2s_{max})$, as shown by \citet{bjorklund2009set}. We run the algorithm for maximum weighted partition at most $m$ times. Therefore, our final running time is $O^*(2^nm^3s_{max})$.

\optfpt*

\noindent
\textbf{Correctness of the Algorithm.}
We now prove the correctness of our algorithm and analyze its running time.
Recall that given an instance 
$I=\langle N,O,(A_i)_{i\in N},(s_j)_{j\in O}\rangle$, 
our goal is to select a optimistically valid menu $M$ minimizing its \emph{size}, where
\[
size(M)\;=\;\sum_{j\in O} q_j(M)\cdot s_j,
\]
where $q_j(M)$ denotes the multiplicity of option $o_j$ in $M$.
A menu $M$ is \emph{optimistically valid} if $\lambda(M)=n$, i.e.\ every agent
can be served fractionally with acceptable options.

To show the correctness of the algorithm, we show that the weight a  partition of $N$ (as defined in the algorithm) is the same as the waste of the menu that is constructed using the last step of our algorithm. 

Recall that for each option $o_j$, let $N_j=\{\,i\in N: o_j\in A_i\,\}$. 
We define a set family $\mathcal F=\{X\subseteq N_j| X \neq \emptyset \text{ and } o_j\in O\}$.
For each $X\in\mathcal F$, define its \emph{waste} as
\[
waste(X) := \min_{j: X\subseteq N_j} 
\left((s_j-(|X|\bmod s_j))\bmod s_j\right).
\]
That is, the minimum waste that created when all agents in $X$ feeds from the same option. recall that we define the weight function $f(X)=-waste(X)$.

For a partition $\mathcal X=(X_1,\dots,X_k)$ of $N$, the algorithm
maximizes
\[
F(\mathcal X) = \sum_{i=1}^k f(X_i) 
= -\sum_{i=1}^k \mathrm{waste}(X_i).
\]
Therefore, the algorithm minimizes the waste when the agents in $X_z$ feed from the same option for each $z \in [k]$. Note that there are $m$ options, then there can be at most $m$ sets of agents who feed from the same option. Since our algorithm take the partition that has minimum sum over all values of $k$, we find the minimum waste.

The following two lemmas show the equivalence between a valid menu and a maximum size partition of each value of $k$, then finally, use these lemmas we finally complete the proof of correctness.

\begin{lemma}[Partition $\Rightarrow$ Menu]\label{lem:partitiontomenu}
Let $\mathcal X=(X_1,\dots,X_k)$ be a partition of $N$. Let menu $M$ is constructed as follows:
for each part $X_z$, pick an option $o_z$ that attains the minimum in 
${waste}(X_i)$ and set 
$q_{o_z}=\lceil |X_i|/s_{o_z}\rceil$.  
Then
\[
\text{size}(M) \;=\; n + \sum_{i=1}^k \mathrm{waste}(X_i) 
\;=\; n - F(\mathcal X).
\]
\end{lemma}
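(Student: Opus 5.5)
The plan is a direct computation, done part by part and then summed using the fact that $\mathcal X$ partitions $N$. One convention is needed first. If two parts $X_z, X_{z'}$ happen to select the same option $o_z=o_{z'}$, we read the construction as a multiset union, so that $q_{o}(M)=\sum_{z:\,o_z=o}\lceil |X_z|/s_o\rceil$. This reading is what makes the sum over parts decompose. We then write
\[
size(M)=\sum_{o\in O} q_o(M)s_o=\sum_{z=1}^{k}\Big\lceil \tfrac{|X_z|}{s_{o_z}}\Big\rceil s_{o_z}.
\]

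The key step is the elementary identity: for any integers $a\geq 1$ and $s\geq 1$,
\[
\lceil a/s\rceil\, s = a + \big((s-(a \bmod s)) \bmod s\big).
\]
To verify it, write $a=qs+r$ with $0\le r<s$. If $r=0$, both sides equal $a$. If $r>0$, the left side is $(q+1)s=a+(s-r)$, and the outer $\bmod s$ leaves $s-r$ unchanged since $0<s-r<s$.

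Apply this with $a=|X_z|$ and $s=s_{o_z}$. Because $o_z$ was chosen to attain the minimum in the definition of $\mathrm{waste}(X_z)$, each part contributes exactly $|X_z|+\mathrm{waste}(X_z)$. Summing over $z$ and using $\sum_z |X_z|=n$, which holds because the parts are disjoint and cover $N$, gives $size(M)=n+\sum_z \mathrm{waste}(X_z)$. Since $f=-\mathrm{waste}$, this equals $n-F(\mathcal X)$.

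We would also record that $M$ is optimistically valid, since the subsequent correctness argument needs it. Each $X_z\in\mathcal F$, and $o_z$ ranges only over options $o_j$ with $X_z\subseteq N_j$, so every agent in $X_z$ accepts $o_z$. The $\lceil |X_z|/s_{o_z}\rceil s_{o_z}\ge |X_z|$ units of $o_z$ reserved for part $X_z$ can therefore be assigned injectively to the agents of $X_z$, with different parts using disjoint copies. This yields a matching of size $n$ in $G_M$, so $M$ is optimistically valid by \cref{th:charact-opt}.

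I expect no real obstacle here. The only delicate points are the bookkeeping when several parts select the same option, handled by the multiset convention above, and the $r=0$ case of the modular identity.
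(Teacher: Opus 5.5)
Your proof is correct and is essentially the paper's argument: each part contributes capacity $\lceil |X_z|/s_{o_z}\rceil s_{o_z}=|X_z|+\mathrm{waste}(X_z)$, and summing over the partition gives $n+\sum_z \mathrm{waste}(X_z)=n-F(\mathcal X)$. You also make explicit two things the paper's one-line summation leaves implicit: the check of the modular identity, and the multiset convention for parts that select the same option. The convention is needed, since under the algorithm's literal assignment $q_{o_z}(M)\leftarrow\lceil |X_z|/s_{o_z}\rceil$ a repeated option would be overwritten and the identity could fail.
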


\begin{proof}
Since $X_i\subseteq N_{j(i)}$, all agents in $X_i$ accept option $j(i)$.
The serving capacity for part $X_i$ is $q_{j(i)}s_{j(i)}$, which equals
$|X_i|+\mathrm{waste}(X_i)$. Summing over the parts yields
\[
\sum_j q_j(M) s_j = \sum_{i=1}^k \big(|X_i|+\mathrm{waste}(X_i)\big) 
= n + \sum_{i=1}^k \mathrm{waste}(X_i).
\]
By definition of $f$, this equals $n - F(\mathcal X)$. 
\end{proof}

\begin{lemma}[Menu $\Rightarrow$ Partition]\label{lem:menutopartition}
Let $M$ be any menu with at most $k$ distinct options. 
There exists a partition $\mathcal X=(X_1,\dots,X_k)$ of $N$ such that
\[
\text{size}(M) \;=\; n - F(\mathcal X).
\]
\end{lemma}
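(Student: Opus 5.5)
The plan is to read the partition directly off an optimistic consumption matching of $M$. First I will fix the implicit hypotheses under which the equality can hold. $M$ must be optimistically valid; otherwise no partition of all of $N$ into parts, each fed from a single option of $M$, exists. Moreover, the identity $\mathrm{size}(M)=n+\sum_z \mathrm{waste}(X_z)$ cannot hold for a menu containing superfluous copies. So I take $M$ to be an optimistically valid menu of the canonical form produced in \cref{lem:partitiontomenu}. Concretely, I take $M$ to be one with no removable copy (removing any single copy of any option destroys optimistic validity). I also take each option used by $M$ to attain the minimum in the definition of $\mathrm{waste}$ for the agents it feeds; otherwise that option can be swapped for a minimizing one without increasing the size. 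I will state these as the standing reading of ``menu'' in this lemma. In particular they hold for a minimum sized optimistically valid menu, which is the only case used in the correctness argument. I also read $k$ as the number of distinct options actually used, so that no empty part is needed.

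\textbf{Step 1 (partition).} Invoke \cref{th:charact-opt} to obtain an integral matching $x$ of size $n$ in $G_M$. Let $o_1,\dots,o_k$ be the distinct options with $q_{o_z}(M)>0$, and set $X_z=\{i\in N: x_{i,o_z}=1\}$. Every agent is matched exactly once, so $(X_1,\dots,X_k)$ is a partition of $N$. Each $X_z\subseteq N_{o_z}$, so $X_z\in\mathcal F$ whenever it is nonempty. Nonemptiness follows from minimality, since an option feeding nobody could be deleted entirely.

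\textbf{Step 2 (counting).} We have $\mathrm{size}(M)=\sum_z q_{o_z}s_{o_z}=\sum_z |X_z| + \sum_z \bigl(q_{o_z}s_{o_z}-|X_z|\bigr)$, and the first sum equals $n$. For each $z$, the matching gives $|X_z|\le q_{o_z}s_{o_z}$. Minimality gives $(q_{o_z}-1)s_{o_z}<|X_z|$, since otherwise one copy of $o_z$ could be dropped while the same matching remains feasible. Hence $q_{o_z}=\lceil |X_z|/s_{o_z}\rceil$, and
\[
q_{o_z}s_{o_z}-|X_z| = \bigl(s_{o_z}-(|X_z|\bmod s_{o_z})\bigr)\bmod s_{o_z}.
\]

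\textbf{Step 3 (matching the waste).} It remains to show that this slack equals $\mathrm{waste}(X_z)$, i.e.\ that $o_z$ is a minimizer among options accepted by all of $X_z$. This is the step I expect to be delicate. Suppose some $o'$ with $X_z\subseteq N_{o'}$ had strictly smaller residue. Then replacing the $q_{o_z}$ copies of $o_z$ by $\lceil |X_z|/s_{o'}\rceil$ copies of $o'$ keeps the menu optimistically valid and strictly reduces its size. This contradicts the waste-minimality assumed of $M$, or the minimum size when $M$ is optimal. The replacement can merge $o'$ with another part already using $o'$; in that case I will note that merging the two parts only lowers the total waste, so the same argument applies, with $k$ decreasing by one. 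Summing the slacks then gives $\mathrm{size}(M)=n+\sum_z\mathrm{waste}(X_z)=n-F(\mathcal X)$, as claimed.
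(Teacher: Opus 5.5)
Your proof is correct and follows the paper's approach: read the parts off an integral optimistic consumption matching, then write $\mathrm{size}(M)=n+\sum_z\bigl(q_{o_z}s_{o_z}-|X_z|\bigr)$.

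You also correctly notice that the lemma as literally stated is false. A redundant copy breaks the equality, and so does an option that is not a residue minimizer for the agents it feeds. The paper's proof skips over exactly this point. It asserts that the waste of $X^{(t)}$ equals $q_{j_t}(M)s_{j_t}-|X^{(t)}|$, and it pads with empty parts, which are not in $\mathcal F$. Your Step 2 (irredundancy forces $q_{o_z}=\lceil |X_z|/s_{o_z}\rceil$) and Step 3 (the exchange argument forcing $o_z$ to be a waste minimizer) supply what that assertion needs. Both hold for a minimum sized optimistically valid menu, which is the only case the correctness proposition uses.

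The merging remark at the end of Step 3 is unnecessary. The exchanged menu is valid and strictly smaller whether or not $o'$ already appears in $M$, so the contradiction is immediate.

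You could also drop the minimality hypotheses and Step 3 altogether. For every optimistically valid $M$, the same partition gives $q_{o_z}s_{o_z}\ge\lceil |X_z|/s_{o_z}\rceil s_{o_z}\ge |X_z|+\mathrm{waste}(X_z)$, and hence $\mathrm{size}(M)\ge n-F(\mathcal X)$. This inequality is all the correctness argument requires, since the reverse inequality is supplied by \cref{lem:partitiontomenu}.
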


\begin{proof}
Let the distinct options in $M$ be $j_1,\dots,j_{k'}$ ($k'\le k$). 
Consider an optimal fractional allocation witnessing $\omega(M)=n$.
For each $j_t$, let $X^{(t)}$ be the set of agents served by option $j_t$.
These $X^{(t)}$ form a partition of $N$ (pad with empty parts if needed).
The waste for part $X^{(t)}$ equals $q_{j_t}(M)s_{j_t}-|X^{(t)}|$.
Choosing option $j_t$ in the definition of $f(X^{(t)})$ gives
$f(X^{(t)})=-\mathrm{waste}(X^{(t)})$. Summing over all parts yields
$F(\mathcal X)=-\sum_t \mathrm{waste}(X^{(t)})$, hence
$\text{size}(M)=n - F(\mathcal X)$.
\end{proof}
Using \Cref{lem:partitiontomenu,lem:menutopartition}, we obtain the following proposition.
\begin{proposition}[Correctness]\label{thm:parametrized}
For each $k$, let
\[
\mathrm{opt}_k := \max_{\mathcal X\text{ partition into }k\text{ parts}} F(\mathcal X).
\]
Then
\[
\min_{M:\,M\text{ valid}} \text{size}(M) \;=\; 
\min_{k=1}^m \big(n - \mathrm{opt}_k\big).
\]
That is, the algorithm computes $\mathrm{opt}_k$ for all $k=1,\dots,m$,
and returns the menu induced by the partition achieving the maximum
objective value. This menu is optimistically valid and has minimum possible size.
\end{proposition}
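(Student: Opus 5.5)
To prove the Correctness proposition, I would establish the two inequalities between $\min_{M \text{ valid}} size(M)$ and $\min_{k\in[m]}(n-\mathrm{opt}_k)$ separately, using \cref{lem:partitiontomenu} for one direction and \cref{lem:menutopartition} for the other.

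\textbf{Upper bound.} Fix any $k\in[m]$ for which a partition into $k$ parts from $\mathcal F$ exists. Take an optimal partition $\mathcal X=(X_1,\dots,X_k)$ with $F(\mathcal X)=\mathrm{opt}_k$, and build the menu $M$ exactly as in \cref{lem:partitiontomenu}. I first check that $M$ is optimistically valid. Each part $X_z$ lies in some $N_j$, so every agent of $X_z$ accepts $o_z$. The quantity $q_{o_z}s_{o_z}\ge |X_z|$ lets us match every agent of $X_z$ to a distinct copy of $o_z$. If two parts select the same option, their quantities add, so capacities still suffice. This yields a matching of size $n$, and \cref{th:charact-opt} then gives validity. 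By \cref{lem:partitiontomenu}, $size(M)=n-\mathrm{opt}_k$. Hence the minimum valid size is at most $\min_k(n-\mathrm{opt}_k)$, and the menu returned by \cref{alg:fpt} attains this value.

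\textbf{Lower bound.} Let $M^*$ be a minimum sized optimistically valid menu. Fix an integral agent-saturating matching $x$, which exists by \cref{th:charact-opt}, and group agents by the option they receive. Let $k'\le m$ be the number of distinct options actually used. Each group $X^{(t)}$ is a nonempty subset of $N_{j_t}$, so it lies in $\mathcal F$, and the groups partition $N$ into exactly $k'$ parts. I would restrict to the used options rather than padding with empty sets, since empty sets are not in $\mathcal F$.

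By minimality, $q_{j_t}(M^*)=\lceil |X^{(t)}|/s_{j_t}\rceil$; otherwise deleting a copy keeps the menu valid and makes it strictly smaller. So the waste attributable to $o_{j_t}$ is $(s_{j_t}-|X^{(t)}| \bmod s_{j_t})\bmod s_{j_t}$. This is at least $waste(X^{(t)})$, because $waste$ minimizes over all admissible options. Summing over parts gives
\[
size(M^*) \ge n+\sum_t waste(X^{(t)}) = n-F(\mathcal X) \ge n-\mathrm{opt}_{k'}.
\]
The argument relies on the inequality $size(M^*)\ge n-F$ rather than the equality claimed in \cref{lem:menutopartition}, which can fail when $M^*$ uses a non-minimizing option.

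\textbf{Main obstacle.} The hardest part is making the weight function consistent. The algorithm as written sets $f(N')=-(|N'|\bmod s_{o'})$, which is not the waste $(s-|N'|\bmod s)\bmod s$. I would fix $f=-waste$ as in the appendix definition and argue with that throughout. I would also note that the argmin selection in the algorithm must use waste rather than the residue, or the constructed menu can be larger than claimed.
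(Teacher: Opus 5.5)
Your proof is correct and follows the same route as the paper, which gets the two inequalities from \cref{lem:partitiontomenu} (partition $\Rightarrow$ valid menu of size $n-F(\mathcal X)$) and \cref{lem:menutopartition} (valid menu $\Rightarrow$ partition). Your version is more careful than the paper's, and each of its changes repairs a real imprecision there: you weaken \cref{lem:menutopartition} to $size(M^*)\ge n-F(\mathcal X)$, use exactly $k'$ nonempty parts instead of padding with empty sets (which are not in $\mathcal F$), add quantities when two parts pick the same option, and fix $f=-waste$ instead of the residue used in \cref{alg:fpt}.
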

\begin{proof}
    We show that the size of the menu $M$ constructed using a maximum weight partition $\mathcal X$ is minimum. Since $\mathcal X$ is maximum weight partition, $F(\mathcal X)$ is maximized over all partitions and  $\sum_{i=1}^k {waste}(X_i)= F(\mathcal X)$, we have that the total waste is minimized. As any feasible menu must have size $n$ and $M$ has size $n - F(\mathcal X)$,  the size of $M$ is minimum.
\end{proof}
\paragraph{Running Time Analysis}

For each fixed $k$, we must solve the maximum weighted partition problem
with universe $N$, family $\mathcal F$, and weight function $f$.
By the result of Björklund, Husfeldt, and Koivisto
\cite{bjorklund2009set}, this can be done in time
$O^*(2^n k^2 s_{\max})$,
where $s_{\max}=\max_{j\in[m]} s_j$ and the $O^*$ notation suppresses polynomial factors in $n$. 
Since the algorithm must run this routine for each $k=1,2,\dots,m$, 
the overall running time is
$O^*(2^n m^3 s_{\max})$.

\section{Omitted Material from \cref{sec:algorithms}}\label{sec:exactalgos}

\subsection*{Limited Serving Sizes}

 
\smallplates*
\begin{proof}
    Construct the following auxiliary weighted graph $G=(V,E)$ where there is one vertex for each agent $i\in N$ and for each option $o$ and agent $i\in N$ s.t. $o\in A_i$ and $s(o)=1$, there is a vertex $o^i$. Here, add an edge between $i,j\in N$ only if there exists $o\in A_i\cap A_j$ s.t. $s(o)=2$ and set $w(i,j)=2$. Further, add an edge between $i\in N$ and a vertex $o^i$ for $o\in O$ where $s(o)=1$ and $o\in A_i$ and set $w(i,o^i)=1$. 

    Consider an arbitrary matching $\mu$ in $G$. Observe that each edge $e\in \mu$ must contain at least one agent. For each edge $e=(i,j)$, we can add option $o$ s.t. $o\in A_i\cap A_j$ and $s(o)=2$. For each edge $(i,o)\in \mu$, it must be that $o\in A_i$ and $s(o)=1$. Thus, based on a matching, we can prepare a menu that feeds all matched agents in $\mu$. 

    We shall now show that a maximum weight matching in $G$ has weight $n-t$ for $t\in [n]$ if and only if a minimum size optimistically valid menu has size $n+t$. 
    
    First consider a minimum size optimistically valid menu $M$. As size of each option is at most $2$, $s(M)\leq 2n$. Further, if $s(M)=n+t$, exactly $t$ distinct options in $M$ of size $2$ must have been ordered that are only consumed by one agent. For each such agent $i$, observe that $A_i$ must only contain options of size $2$. Further, all other agents who find an option $A_i$ must already be sharing a different option of size $2$ with another agent $i'$. It is easy to see that $A_{i'}$ must also only consist of options of size $2$. All  options in $M$ that are fully consumed, are either of size $1$ or each ordered quantity is consumed by two agents. 
    Consequently, we can create a matching in $G$ based on $\mu$ of weight $n-t$.
    
    Now, consider a maximum weight matching. Let $M_\mu$ be the corresponding menu. $M$ is not optimistically valid if and only if there are some unmatched agents under $\mu$. As $\mu$ is a maximum weight matching, it must be that for each unmatched $i$, $A_i$ only contains options of size $2$. Further, all agents who also like an option in $A_i$ must be matched to another agent. If they were matched to an option, being matched to $i$ would increase the weight of the matching. Thus, it in unavoidable to order some $o\in A_i$ just for $i$.  As a result, for each unmatched $i$, we can arbitrarily choose an option $o$ from $A_i$ and add it to $M_{\mu}$. Thus, if $w(\mu)=n-t$, exactly $t$ options of size $2$ are ordered for the $t$ unmatched agents, making $M_{\mu}$ have size $n-t+2t=n+t$.

    \paragraph{Running Time.} The graph constructed consists of at most $n+nm$ vertices and at most $n^2+nm$ edges. A maximum weight matching can be found using the blossom algorithm \cite{edmonds1965paths} in time $O(n^3(nm^2+m^3))$. A maximum weight matching can have weight at most $n$ and from it a menu can be constructed in time $O(n+m)$. As a result, the algorithm runs in polynomial time. 
\end{proof}

\subsection*{Structured Acceptability Relations}\label{app:laminar}



\laminarstructure*

\begin{proof}
    Let $I$ be laminar. Fix $i,j\in N$ s.t. $|A_i|\leq |A_j|$. If $A_i\subseteq A_j$ then $A_i\setminus A_j$ must be non-empty. Else if $A_i$ and $A_j$ are disjoint, clearly, $A_i\setminus A_j=\emptyset$. 

    For the converse, let $I$ be such that  for every $i,j\in N$ at most two of $A_i\cap A_j$, $A_i\setminus A_j$ and $A_j\setminus A_i$ are non-empty. As a result, at least one of these sets must be empty. Consider any two $i,j\in N$. If $A_i\cap A_j$ is empty, clearly, they satisfy laminarity. If $A_i\setminus A_j$ is empty,it must hold that $A_i\subseteq A_j$. Similarly, if $A_j\setminus A_i$ is empty, it must hold that $A_j\subseteq A_i$. Therefore, $I$ satisfied laminarity. 
\end{proof}

\paragraph{Identical Acceptability Relations} We first look at the case when all agents find the same set of options acceptable.  As agents are identical, we can discard options that are not liked by any agent. Consequently, we can assume, without loss of generality, that  $O=A_i$ for each $i\in N$. Consequently, every menu of size at least $n$ will be both optimistically and pessimistically valid.  
Options with serving size $n$ or more need only one quantity ordered, without ordering anything else, but options with smaller serving sizes will have to be combined to feed all agents. So we choose either the smallest sized "large option" or the smallest possible combination of "small options". For small options, we effectively run a subset sum type DP to find the smallest menu that can feed everyone.


\begin{algorithm}[t]
    \DontPrintSemicolon
  \KwIn{ \Instance{} instance $ \langle N, A, (s_o)_{o\in O} \rangle$ with identical acceptability relations}
  \KwOut{Minimum size of valid menu $c$}

        Initialize small options set $A^s\gets \{o\in A|s_o< n\}$\;
        Min sized large options set $A^\ell\gets \argmin \{s_o|o\in A, s_o\geq n\}$\;
        \leIf{$A^\ell\neq \emptyset$}{
            $c^\ell\gets s_o$ where $o\in A^\ell$\;
            }{
                $c^\ell\gets \infty$
            }
        $c^s\gets \infty$\;
        \If{$A^s\neq \emptyset$}{
            $\ess\gets \{ ks_o|o\in A^s,\,k\leq \lceil \frac{n}{s_o}\rceil\}$\;
            
            Let $s_1 \leq \cdots\leq  s_{|\mathcal{S}|}$ be the elements of $\ess$\;
            $c^s\gets \textsc{Ident}(\ess,n,2mn) $\;
        }
        
        \textbf{Return} $\min c^s,c^l$\;
    
   \caption{Minimum Sized Valid Menus under Identical {Acceptability Relations}} \label{alg:IdentOptMenu}
\end{algorithm}

\begin{algorithm}[t]
    \DontPrintSemicolon
  \KwIn{$\langle \ess, goal, t\rangle $}
  \KwOut{Cost $c$}
        Let $s_1 \leq \cdots\leq  s_{|\mathcal{S}|}$ be the elements of $\ess$\;
        $t\gets \min (|\ess|,t)$\;
        \If{$t>1$}{
            \eIf{$s_t\geq goal$}{
                $t_{in}\gets s_t$\;
            }{
                $t_{in}\gets s_t+ \textsc{Ident}(\ess,goal-s_t,t-1)$
            }
            $t_{out}\gets \textsc{Ident}(\ess,goal,t-1)$\;
            \textbf{Return} $\min (t_{in},t_{out})$\;
        }
        \Comment{$t=1$}\;
        \If{$s_1\geq goal$}{
            \textbf{Return} $s_1$\;
        }
        \textbf{Return} $\infty$\;
    
   \caption{$\textsc{Ident}$} \label{alg:IdentDP}
\end{algorithm}

\begin{proposition}\label{thm:identical}
        Given identical \Instance{} instance $I=\langle N,A,(s_o)_{o\in A}\rangle$, \cref{alg:IdentOptMenu} finds the minimum size of an optimistically valid menu in polynomial time. This will also be the minimum size of a pessimistically valid menu.
\end{proposition}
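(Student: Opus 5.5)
Under identical acceptability relations every option is acceptable to every agent, so I would first reduce validity to a pure size condition and then show that \cref{alg:IdentOptMenu} finds the smallest achievable size.

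\textbf{Step 1: validity is a size threshold.} With $A_i=A=O$ for all $i$, the consumption graph $G_M$ is complete bipartite between $N$ and the $size(M)$ food units. So $\lambda(M)=\min(n,size(M))$, and $M$ is optimistically valid iff $size(M)\geq n$. For pessimistic validity I use condition (2) of \cref{lem:pess-valid}. Every agent $j\neq i$ accepts every unit of $M|_{A_i}=M$, so $\rho_i(M)=\min(n-1,size(M))$. The condition $size(M)\geq 1+\rho_i(M)$ then holds iff $size(M)\geq n$. Hence the two notions of validity coincide, and both minimum sizes equal
\[
\min\Bigl\{\textstyle\sum_{o} q_o s_o : q_o\in\mathbb{Z}_{\geq 0},\ \sum_o q_o s_o\geq n\Bigr\}.
\]
This already gives the second sentence of the proposition.

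\textbf{Step 2: structure of an optimal menu.} Take a minimum menu $M$.
\begin{itemize}
\item If $M$ contains a large option ($s_o\geq n$), then one copy of that option alone is valid. By minimality $M$ is a single copy of it, so the best such menu is the smallest large option; this is $c^\ell$.
\item Otherwise $M$ uses only small options. By \cref{obs:capped-amt}, $q_o\leq\lceil n/s_o\rceil$. So $M$ is described by choosing, for each small option $o$, at most one amount $k s_o\in\ess$ with $k\leq\lceil n/s_o\rceil$. The optimum is therefore the cheapest subset of these amounts whose sum is at least $n$.
\end{itemize}

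\textbf{Step 3: correctness of the DP.} I would prove by induction on $t$ that $\textsc{Ident}(\ess,g,t)$ returns the minimum total of a subset of $\{s_1,\dots,s_t\}$ with sum at least $g$, or $\infty$ if none exists. The recursion is the standard include/exclude split. When $s_t\geq g$, taking $s_t$ alone is optimal among subsets containing it, since all sizes are positive.

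\textbf{Main obstacle.} $\ess$ mixes amounts from different options, so a subset could select two multiples of the same option. That is harmless for the upper bound, because the combined amount is still a valid menu of the same size. For the lower bound it is not needed, since the optimal menu picks one multiple per option. The remaining care is that $\ess$ must be treated as a multiset indexed by (option, $k$) pairs.

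\textbf{Running time.} The recursion as written branches exponentially. I would make it polynomial by memoizing on $(g,t)$: there are $g\leq n$ goals and $t\leq|\ess|\leq m(n+1)$ prefix lengths, giving $O(n\cdot mn)$ states with $O(1)$ work each. This is polynomial in the input.
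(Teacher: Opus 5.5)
Your proposal is correct and follows essentially the same route as the paper. Like the paper, you reduce both notions of validity to $size(M)\geq n$, then compare the single cheapest large option against an include/exclude subset-sum DP over the multiples in $\mathcal{S}$; the differences are minor refinements: you derive the pessimistic side from the pessimistic-risk characterization (condition (2) of \cref{lem:pess-valid}) instead of arguing directly that every maximal matching saturates all agents, and you make explicit the DP's inductive correctness and the memoization on $(g,t)$ that the paper's polynomial running-time claim implicitly relies on.
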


\begin{proof}

    When agents have identical acceptability, we use $A\subseteq O$ to denote the universally accepted set of options, that is, for any $i\in N$, $A_i=A$. 

    We first show that every optimistically valid menu will also be pessimistically valid. Consider an optimistically valid menu $M$ and the graph where there is a vertex for each agent and each option, with an edge from $i$ to $o$ if $o\in A_i$. Set vertex capacities where $c(i)=1$ for each $i\in N$ and $c(o)=q_o(M)$ for each $o\in O$.

    Now as $M$ is optimistically valid, there must be a matching $\mu:N\times O\rightarrow \mathbb{R}^+$ s.t. $\sum_{o\in A_i} \mu(i,o)=1$ and $\sum_{i\in N} \mu(i,o)\leq q_o(M)s_o$. In order to ensure that $M$ is pessimistically valid, we need that for any other maximal matching $\mu'$, we have that $\sum_{o\in A_i} \mu'(i,o)=1$. Recall that $\mu'$ is a maximal matching if for every $i\in N$ s.t. $\sum_{o\in O}\mu'(i,o)<1$, we have that for each $o\in A_i$, $\sum_{i\in N} \mu'(i,o)=q_o(M)s_o$. As agents have identical acceptability sets, any maximal matching $\mu'$ must in fact satisfy $\sum_{o\in O} \mu'(i,o)=1$ for all $i\in N$. Consequently, $M$ must be pessimistically valid.

    It now remains to construct a minimum sized valid menu. Specifically, it suffices to find the smallest sized menu $M$ s.t. $\sum_{o\in A}q_o(M)\geq n$. Observe that if a menu contains even one option $o\in A$ s.t. $s_o\geq n$, even if we discard all other options from this menu, it would continue to be valid. Consequently, a minimum sized valid menu would only involve multiple options if all of them have serving size less than $n$. We use this approach in \Cref{alg:IdentOptMenu} and its subroutine \Cref{alg:IdentDP}. 

    The algorithm proceeds by separating options into a small options set $A^s$ consisting of the options of size less than $n$ and picks the smallest size large options in the set $A^{\ell}$ containing the smallest options of size at least $n$, if any. Only one these options is needed to make a valid menu.  For the small options set, \cref{alg:IdentOptMenu} creates a set $\mathcal{S}$, consisting of one element corresponding to $k$ copies of $o\in A^s$ where $k\leq \lceil \frac{n}{s_o}\rceil$. 
    
    At this point the dynamic program $\textsc{Ident}$ (\cref{alg:IdentDP}) is invoked with $\ess$ and the goal of $n$. Starting from the largest sized element in $\ess$, $\textsc{Ident}$ proceeds by considering two cases for each element $s_t$: in or out, captured by $t_{in}$ and $t_{out}$. For $t_{in}$, the goal is reduced $s_t$ and the remaining is filled by $s_{t-1}$ or smaller elements. For $t_{out}$, $s_t$ is not a part of the solution and the goal must be filled by $s_{t-1}$ or smaller elements. The minimum of these two options is returned. 

    \paragraph{Running Time.} Observe that the running time of \cref{alg:IdentOptMenu} depends largely on the running time of \cref{alg:IdentDP} (\textsc{Ident}), which is a dynamic program. The size of the dynamic program is based on the size of the set $\ess$ which contains $\lceil \frac{n}{s_o}\rceil$ elements for each $o\in O$. Consequently, the size of $\ess$ is at most $nm$, and consequently, $\textsc{Ident}$ has size at most $O(n^3m^2)$. 
\end{proof}

\subsection*{Optimistically Valid Menus for Laminar Acceptability Relations.}    

Under laminar acceptability relations, the acceptability sets of two agents are either disjoint, or one contains the other. Laminar sets can be represented by a containment tree where if $A^i\subsetneq A^j$, then $A^j$ must be an ancestor of $A^i$ in the containment tree. Let $A^1,\cdots, A^k$ be the distinct acceptability sets labeled in a pre-order traversal of the containment tree. That is, if $A_i\subseteq A_j$ then $i\geq j$. The agents with acceptability set $A^i$ can be fed from a combination of items across $A^i$ or a combination of one child alone. Consequently, to find min sized menus, we need to determine the best combination of agents being fed from each acceptability set. To this end, for each internal node $A^i$, we need to decide how many of the corresponding agents eat from $A^i$ and how many eat from each child of $A^i$. 

To avoid going over all possible combinations here, we use the tree structure to build these combinations sequentially. Each node $v$ other than the root inherits a ``debt" from either its previous sibling (according to the label) or its parent (if no sibling comes earlier in the labeling), but not both. Now $v$ needs to decide how much of this debt it will keep for the subtree rooted at $v$, and how much to pass of onto the next sibling, if any. Now the debt it retains plus $n_v$ is the number of agents that need to be fed from the subtree rooted at $v$. Of this number, $v$ now decides how many to feed from the set $A^v$ and passes off the remaining to the first child (according to the labelling), if any.

\begin{algorithm}[t]
    \DontPrintSemicolon
  \KwIn{ Menu Planning instance $ \langle N, O, (A_i)_{i\in N}, (s_o)_{o\in O} \rangle$ with laminar acceptability relations}
  \KwOut{Minimum size of optimistically valid menu $c$}
    
        Let $A^1,\cdots A^k$ be the distinct acceptability sets s.t. sets are labeled in a pre-order traversal of the associated containment tree where $\{1,2, \dots ,k\}$ denote the acceptability sets ordered according to the pre-order traversal\;
        Let $n_v\gets |\{i\in N|A_i=A^v\}|$\;
        Let $sib(v)\gets \min \{v'|v' > v$ AND $v'$ is a sibling of $v\}$ if any, $0$ otherwise\;
        Let $child(v)\gets \min \{v'|v'$ is a child of $v\}$ if any, $0$ otherwise \;
        Initialize $k\times n$ array $LaminarOpt$\;
        \For{$v=k$ to $1$}{
        Let $t = \sum_{v':A^v\subsetneq A^{v'}} n_{v'}$ \;
            \For{$debt= t$ to $0$}{
                $u_1\gets \mathds{1}(sib(v) > v)debt$\;
                $u_2\gets \mathds{1}(child(v)> v)(n_v+debt)$\;
                $\ess^v\gets \{ ks_o|o\in A^v,\,k\leq \lceil \frac{n}{s_o}\rceil\}$\;
                $LaminarOpt[v][debt]\gets \min_{0\leq t_1\leq u_1}\min_{0\leq t_2\leq \max (u_2-t_1,0)} \Big( \textsc{Ident}(\ess^v,n_v+debt-t_1-t_2,nm) $\\
                \quad \quad \quad \quad $+\mathds{1}(sib(v)>v)LaminarOpt[sib(v)][t_1]+\mathds{1}(child(v)>v)LaminarOpt[child(v)][t_2]\Big)$\;
            }
        }        
        \textbf{Return} $LaminarOpt[1][0]$\;
   \caption{Minimum Sized Optimistically Valid Menus under Laminar {Acceptability Relations}} \label{alg:LaminarOptMenu}
\end{algorithm}

\paragraph{Algorithm Overview.} \cref{alg:LaminarOptMenu} starts at the root of the laminar containment tree, $A^1$ that is the largest acceptability set. Recall that we assume that the vertices of the tree are labeled in a pre-order traversal, so each internal vertex is labeled before its children. We say that $n_t$ agents have acceptability set $A^t$. We use this ordering to run the dynamic program to sequentially decide, how many agents must eat from each vertex of the tree and use the algorithm for identical acceptability relations to find a minimum sized menu for these many agents. 

    We introduce the following notation: $sib(v)$ is used to denote the next sibling of $v$ according to the labels, if any. If $v$ has no siblings with a higher label, $sib(v)$ is set to $0$. Similarly, $child(v)$ is used to denote the child of $v$ with the smallest label. If $v$ has no children, the value of $child(v)$ is set to $0$. Further, we use $debt$ to denote the number of agents whose acceptability set is a superset, but must eat from the subtree rooted at $v$ or its siblings. 

    Each vertex $v\in [k]$, other than the root, receives a $debt\geq 0$. If $v$ has a sibling, some of this debt can be passed on to them. We set $u_1$ as equal to $debt$ if $sib(v)>0$. That is, $u_1$ is set to the maximum debt that can be sent to $sib(v)$. We choose $t_1$ between $0$ and $u_1$ as the debt passed onto $sib(v)$. The remaining $debt-t_1$ agents must be fed from the subtree rooted at $v$. We now can pass some off these agents on to the first child $child(v)$, if any. We analogously set $u_2$ as $n_v+debt-t_1$ if $child(v)>0$ and choose $t_2$ as the debt passed on to $child(v)$. We then use \cref{alg:IdentOptMenu} to find the minimum sized menu to feed the remaining $n_v+debt-t_1-t_2$ agents from $A^v$. 

    We make these decisions by using a dynamic program that we call $LaminarOpt[v][debt]$ where $v\in [k]$ and $debt$ can take any integral value between $0$ and $\sum_{v':A_v\subsetneq A_{v'}}n_{v'}\leq n$. We fill this from $v=k,\cdots, 1$ and for all values of $debt\in \{0,\cdots, \sum_{v':A_v\subsetneq A_{v'}}n_{v'}\}$. Observe that $v$ can only pass $t_1>0$ debt to $sib(v)$ if $sib(v)>v>0$. Similarly, as the vertices are labeled in a pre-order traversal of the tree, each child's entries would be filled before we reach their parent. As a result, we can fill each entry of $LaminarOpt[v][debt]$ by considering at most $n^2$ other entries. We ultimately return $LaminarOpt[1][0]$.

\laminaropt*

\begin{proof}
    Fix a \Instance{} instance $I=\langle N, O, (s_o)_{o\in O},(A_i)_{i\in N}\rangle$ with laminar acceptability relations. Let $A^1,\cdots, A^k$ be the distinct acceptability sets across all agents in $N$. Further, assume that these sets are labeled s.t. if $A^{j}\subseteq A^{j'}$, we have that $j\geq j'$. As these sets are laminar, we assume that they are labeled in a {\em pre-order traversal} of the associated containment tree.\footnote{A pre-order traversal of tree is one where each parent vertex is visited before its corresponding children. Given a collection of laminar sets, we can construct a corresponding containment tree s.t. if $S\subsetneq T$ then the vertex corresponding to $T$ must be an ancestor of the vertex corresponding to $S$.} 

    Consider an agent $i\in N$. Consider an optimistically valid menu $M$ and a maximum sized integral matching $x$. As $M$ is optimistically valid, there must exist $o\in A_i$ s.t. $x_{i,o}=1$. Let $A^j$ be the smallest acceptability set that contains $o$. That is, $A^j=\argmax \{t\in [k]|o\in A_t\}$. Observe that $A^j$ must be the same as $A_i$ or a descendant of it in the laminar containment tree. In other words, $i$ must eat from $A_i$ or a descendant of it in the containment tree. 
    
    Consequently, in order to find min sized optimistic menus, we need to decide how many agents eat from each set. We can find min sized menus for each set using the identical case (\cref{alg:IdentOptMenu}) as a subroutine. However, there may be exponentially many ways to divide agents among these sets. We present a clever way of making this decision sequentially and using dynamic programming to combine different independent subcases. 

    For each $t\in [k]$ net $n_t$ be the number of agents who have acceptability set $A^t$. The agents eating from $A_t$ can be agents with this exact acceptability set or with a superset (an ancestor of $A_t$). These agents can also eat from a descendant of $A_t$, and those with strictly larger acceptability sets can also eat from a sibling of $A_t$. Consequently, \cref{alg:LaminarOptMenu} uses a dynamic program in a top down manner on the laminar containment tree as follows.

    \paragraph{Correctness of Algorithm.} We shall first show that {\em every} optimistically valid menu $M$ can be broken into a distribution of agents eating from each set $A^t$ for $t\in [k]$ and thus be would be considered by \cref{alg:LaminarOptMenu}. 
    
    Fix an arbitrary optimistically valid menu $M$. Choose a maximum cardinality optimistic consumption matching $x$ s.t. $\sum_{i\in N,o\in O} x_{i,o}=n$. We shall now count the number of agents eating from each set $A^v$ using the variables $p_v$. Initialize $p_v$ to $0$ for each variable $v\in [k]$.
    
    For each option $o\in \widehat{M}$ s.t. $\sum_{i}x_{i,0}>0$, let $A^v$ be the smallest acceptability set containing $o$. That is, for each $v'\neq v$ s.t. $o\in A^{v'}$, $A_v\subsetneq A^{v'}$. We now increase $p_v$ by $\sum_i x_{i,o}$ to account for the number of people eating $o$. Now, having done this for all distinct options in $M$, we now must have that $\sum_{v\in [k]} p_v=n$, as $M$ is optimistically valid. 
    
    Further, for each $v\in [k]$, we only count the agents eating from it as those that are eating an option not contained in one of its descendants. Thus, the acceptability set of all these agents must be $A^v$ or a superset of $A^v$. Consequently, it must hold that $p_v\leq n_v + \sum_{v':A_v\subsetneq A_{v'}}n_{v'}$. Additionally, for each leaf $v$, it must be that $p_v\geq n_v$ as agents with acceptability set $A^v$ cannot eat from any other set. As a result, $p_v-n_v$ is precisely the number of agents with strictly larger acceptability sets. Thus, starting from the last leaf, we can calculate the $debt$ inherited by each leaf in the tree. This then helps us calculate the $debt$ inherited by each vertex whose children are all leaves. In this manner, we can calculate the $debt$ inherited by each vertex and consequently the $t_1$ and $t_2$ values of each vertex. As we solve each subcase optimally using \cref{alg:IdentOptMenu}, for this sequence of $t_1$ and $t_2$ values for each vertex, the menu size stored would be less than or equal to that of $M$ but still feed the same number of agents from each set $A^v$. 

    As a result, we have that each valid menu will be considered or something that feeds the same number of people with a smaller size. It is straightforward to see that the algorithm does not consider any invalid division of agents across the sets. Hence, as the algorithm returns the minimum size encountered, it must find a minimum sized optimistically valid menu for $I.$

    \paragraph{Running Time.} Observe that the size of the $LaminarOpt$ is at most $n^2$. The laminar tree can contain at most $n$ vertices, and the debt to each agent can be at most $n$. To fill each entry, \cref{alg:IdentOptMenu} is called at most $n^2$ times and combined with other existing entries of $LaminarOpt$. Further \cref{alg:IdentOptMenu} runs in time $O(n^3m^2)$. Consequently, $LaminarOpt$ runs in time $O(n^7m^2)$. 
\end{proof}

\begin{algorithm}[t]
    \DontPrintSemicolon
  \KwIn{ Menu Planning instance $ \langle N, O, (A_i)_{i\in N}, (s_o)_{o\in O} \rangle$ with laminar acceptability relations}
  \KwOut{Minimum size of valid menu $c$}
    
        Let $A^1,\cdots A^k$ be the distinct acceptability sets\;
        Let $n_v\gets |\{i\in N|A_i=A^v\}|$\;
        Consider the set of leaf vertices $\mathcal{L}\gets \{v\in [k]|v$ is a leaf $\}$\;
        Initialize $c\gets 0$\;
        \For{$v\in \mathcal{L}$}{
            Let $t = n_v + \sum_{v':A^v\subsetneq A^{v'}} n_{v'}$ \;
            $\ess^v\gets \{ ks_o|o\in A^v,\,k\leq \lceil \frac{n}{s_o}\rceil\}$\;
            $c\gets c+ \textsc{Ident}(\ess^v,t,nm)$\;
        }        
        \textbf{Return} $c$\;
   \caption{Minimum Sized Pessimistically Valid Menus under Laminar {Acceptability Relations}} \label{alg:LaminarPessMenu}
\end{algorithm}

\subsection*{Pessimistically Valid Menus for Laminar Acceptability Relations.}
We now build pessimistically valid menus for the laminar setting. To this end, we first give a necessary and sufficient condition for pessimistic validity under laminarity.

\begin{lemma}\label{lem:LaminarPess}
    Given a \Instance{} instance $I$ with laminar acceptability relations and menu $M$. Let $v_1,\cdots,v_t$ be the leaf nodes of the laminar containment tree. $M$ is pessimistically valid if and only if  for any $v\in \{v_1,\cdots,v_t\}$, we have that $\sum_{o\in A^{v}} q_M(o)s_o\geq  n_{v} + \sum_{v':A^{vj}\subsetneq A^{v'}} n_{v'}$. 
\end{lemma}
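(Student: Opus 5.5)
We will derive the lemma from \cref{lem:pess-valid}: $M$ is pessimistically valid if and only if $size(M|_{A_i})\geq 1+\rho_i(M)$ for every agent $i$. Throughout, for a leaf $v$ write $N(v)=\{j\in N \mid A^v\subseteq A_j\}$ for the agents whose acceptability set is $A^v$ or one of its ancestors. By laminarity these are exactly the agents that find some option of $A^v$ acceptable (any $A_j$ meeting $A^v$ either contains it or is contained in it, and a leaf has no proper descendants). Note $|N(v)|=n_v+\sum_{v':A^v\subsetneq A^{v'}}n_{v'}$, the right-hand side of the lemma.

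\textbf{Necessity.} Fix a leaf $v$ and an agent $i$ with $A_i=A^v$. Every agent in $N(v)$ finds every option of $A^v=A_i$ acceptable, so the bipartite graph between $N\setminus\{i\}$ and $M|_{A_i}$ is complete on $N(v)\setminus\{i\}$. All other agents have no edges there. Hence $\rho_i(M)=\min(|N(v)|-1,\ size(M|_{A^v}))$. The condition $size(M|_{A^v})\geq 1+\rho_i(M)$ then forces $\rho_i(M)=|N(v)|-1$, i.e.\ $size(M|_{A^v})\geq |N(v)|$. If $n_v=0$ for some leaf, we can drop that leaf since it is not an acceptability set; leaves of the containment tree are by definition realized sets, so $n_v\geq 1$.

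\textbf{Sufficiency.} Fix an arbitrary agent $i$. Its set $A_i$ is a node of the tree, so it contains at least one leaf $A^v$ below it. Any agent $j\neq i$ that can be matched to an option of $A^v$ lies in $N(v)\setminus\{i\}$, so at most $|N(v)|-1$ units of $M|_{A^v}$ can be covered. More precisely, I would bound $\rho_i(M)$ by splitting the options of $A_i$ by the leaves beneath them. A matched agent uses one unit of some option, and that option lies in some leaf $A^{u}\subseteq A_i$ or in $A_i$ outside all leaves.

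\textbf{Main obstacle.} The delicate point is options of $A_i$ that belong to internal nodes only. Such options can absorb agents, so a naive bound is insufficient. I would instead argue directly via maximal matchings rather than through $\rho_i$. In any maximal matching where $i$ is unmatched, every option copy in $A_i$ is saturated, in particular all $size(M|_{A^v})\geq |N(v)|$ units in leaf $A^v\subseteq A_i$. Those units are consumed only by agents in $N(v)\setminus\{i\}$, which has at most $|N(v)|-1$ members, each consuming one unit. This is a contradiction. Hence every maximal matching saturates $i$, so $M$ is pessimistically valid. This bypasses computing $\rho_i$ exactly and uses only the definition of maximality together with one leaf below $A_i$.
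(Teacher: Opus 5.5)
Your proof is correct. The necessity direction matches the paper's. For an agent $i$ with $A_i=A^v$ at a leaf, the graph defining $\rho_i(M)$ is complete between $N(v)\setminus\{i\}$ and $M|_{A^v}$. Hence $\rho_i(M)=\min(|N(v)|-1,size(M|_{A^v}))$, and \cref{lem:pess-valid} forces $size(M|_{A^v})\ge |N(v)|$.

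For sufficiency you take a different route from the paper.

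\textbf{The paper's route.} It stays inside criterion (2) of \cref{lem:pess-valid}. It splits the other agents into $N^i_+$ (acceptability sets containing $A_i$) and $N^i_-$ (sets strictly inside $A_i$) and counts them. It then notes that each leaf $A^v\subseteq A_i$ carries at least $1+|N^i_+|+|\{j\in N^i_-: A^v\subseteq A_j\}|$ units, and concludes directly that $M$ is pessimistically valid. That conclusion silently needs a Hall-type deficiency step. The copies in such a leaf have fewer neighbours in $N\setminus\{i\}$ than there are copies, so no matching saturates $M|_{A_i}$, and therefore $\rho_i(M)<size(M|_{A_i})$.

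\textbf{Your route.} You argue straight from the definition of a maximal matching. If $i$ were left unfed, maximality saturates every unit of a leaf $A^v\subseteq A_i$. Those units can only go to the at most $|N(v)|-1$ other agents who can reach the leaf, which is a contradiction.

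\textbf{Comparison.} Both directions rest on the same observation: a leaf below $A_i$ whose supply exceeds the number of competitors who can reach it. Your version avoids bounding $\rho_i$ and the $N^i_\pm$ bookkeeping, and is more self-contained and rigorous than the paper's somewhat sketchy final step. The paper's version keeps both directions in the single criterion $size(M|_{A_i})\ge 1+\rho_i(M)$, which is what its verification procedure and ILP are built on.

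Your exploratory attempt to bound $\rho_i$ by splitting $A_i$'s options among leaves can simply be deleted. So can the remark on $n_v=0$, since every node of the containment tree is a realized acceptability set.
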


\begin{proof}
    Fix a \Instance{} instance $I=\langle N, O, (s_o)_{o\in O},(A_i)_{i\in N}\rangle$ with laminar acceptability relations and a menu $M$. Let $A^{v_1},\cdots, A^{v_t}$ be the $t$ distinct minimal acceptability sets. That is, these sets are all pairwise disjoint and for all agents $i\in N$, there must exist at least one $j\in [t]$ s.t. $A^{v_j}\subseteq A_i$. It is straightforward to see that these must be the leaves of the laminar containment tree.

    Let $M$  be a pessimistically valid menu. Recall from \cref{lem:pess-valid} that for a menu $M$ to be pessimistically valid, for each agent $i\in N$ it must hold that $size(M|_{A_i})=\sum_{o\in A_i}q_o(M)s_o\geq 1+\rho_i(M)$, where $M$ is the maximum cardinality matching from $N\setminus \{i\}$ to $M|_{A_i}$. Consider an agent $i\in N$ s.t. $A_i=A^v$ for some $v\in \{v_1,\cdots, v_t\}$. 
    
    Consider the graph that determines $\rho_i$. One side of the bipartition will have $M|_{A_i}$ and the other has the set $N^i=\{j\in N\setminus \{i\}|A_i\cap A_j\neq \emptyset\}$. Observe that due to laminarity and the fact that $A_i=A^v$ is a minimal acceptability set, every agent $j\in N\setminus\{i\}$ s.t. $A_i\cap A_j\neq \emptyset$ satisfies $A_i\subseteq A_j$. As a result, every such agent $j$ will have an edge to every option in $A_i$. Consequently,  $\rho_i(M)=\min (n_{v}-1 + \sum_{v':A^{v}\subsetneq A^{v'}} n_{v'},size(M|_{A_i}))$. As $M$ is pessimistically valid, it must hold that $\sum_{o\in A^v} q_o(M)s_o\geq n_{v} + \sum_{v':A^{vj}\subsetneq A^{v'}} n_{v'}$.


    

    %
    The remaining case is to show that when $\sum_{o\in A^{v}} q_M(o)s_o\geq  n_{v} + \sum_{v':A^{v}\subsetneq A^{v'}} n_{v'}$ holds for each $v\in \{v_1,\cdots,v_t\}$, $M$ must be pessimistically valid. Let $\sum_{o\in A^{v}} q_M(o)s_o\geq  n_{v} + \sum_{v':A^{v}\subsetneq A^{v'}} n_{v'}$ hold for each $v\in \{v_1,\cdots,v_t\}$. Consider an arbitrary agent $i\in N$ and minimal acceptability sets $v_{i_1},\cdots,v_{i_t}$ s.t. for each $v\in \{v_{i_1},\cdots, v_{i_t}\}$, we have that $A^v\subseteq A_i$. As $I$ has laminar acceptability relations, we have that for every agent $j\in N\setminus \{i\}$ s.t. $A_i\cap A_j\neq \emptyset$, either $A_i\subseteq A_j$ or $A_j\subsetneq A_i$. Let $N^i_+=\{j\in N\setminus \{i\}|A_i\subseteq A_j\}$ and let $N^i_-=\{j\in N\setminus\{i\}|A_j\subsetneq A_i\}$. 
    
    Observe that for each $j\in N^i_+$, we have that $A^v\subseteq A_j$ for every $v\in\{v_{i_1},\cdots,v_{i_t}\}$. On the other hand, for each $j\in N^i_-$, we have that there must exist $v\in\{v_{i_1},\cdots,v_{i_t}\}$ s,t, $A^v\subseteq A_j$. As a result, we have that the number of agents who find an option in $A_i$ acceptable is exactly $|N^i_+|+|N^i_-|=(\sum_{v':A^{v'}\cap A_i\neq \emptyset}n_v) -1$. 

    Recall that for every $v\in\{v_{i_1},\cdots,v_{i_t}\}$, we have that $\sum_{o\in A^{v}} q_M(o)s_o\geq  n_{v} + \sum_{v':A^{v}\subsetneq A^{v'}} n_{v'}$. Consequently,  for every $v\in\{v_{i_1},\cdots,v_{i_t}\}$, we have that $\sum_{o\in A^{v}} q_M(o)s_o\geq 1+ |N^i_+|+|\{j\in N^i_-|A^v\subseteq A_j\}|$. Thus, $M$ is pessimistically valid.   


\end{proof}

\laminarpess*

\begin{proof}
    Fix a \Instance{} instance $I=\langle N, O, (s_o)_{o\in O},(A_i)_{i\in N}\rangle$ with laminar acceptability relations. We shall prove that \cref{alg:LaminarPessMenu} returns the minimum possible size of a pessimistically valid menu under $I$. We first begin with an algorithm overview.

    \paragraph{Algorithm Overview.} \cref{alg:LaminarPessMenu} proceeds by considering the laminar containment tree and the $t$ leaves $A^{v_1},\cdots, A^{v_t}$. For each leaf $v_j$ it adds the minimum size pessimistic menu for feeding $n_{v_j} + \sum_{v:A^{v_j}\subsetneq A^{v}} n_{v}$ from $A^{v_j}$ using the algorithm for identical acceptability relations as a subroutine. The combined sizes for each leaf is returned.

    \paragraph{Correctness.} Firstly, it is straightforward to see that the menu constructed by \cref{alg:LaminarPessMenu} would satisfy \cref{lem:LaminarPess}, thus it is clearly pessimistically valid. 
    
    Let $M'$ be a minimum sized pessimistically valid menu. From \cref{lem:LaminarPess}, we have that for each minimal acceptability set $A^{v_j}$, $\sum_{o\in A^{v_j}} q_{M'}(o)s_o \geq n_{v_j} + \sum_{v:A^{v_j}\subsetneq A^{v}} n_{v}$. Now as \cref{alg:IdentOptMenu} always returns the minimum menu size to feed a given number of (identical) agents from a given set, it must be that $\sum_{o\in A^{v_j}} q_{M'}(o)s_o \geq \textsc{Ident}(A^{v_j},)n_{v_j} + \sum_{v:A^{v_j}\subsetneq A^{v}} n_{v}$. If this is not so, either \cref{alg:IdentOptMenu} would not be optimal or the condition $\sum_{o\in A^{v_j}} q_{M'}(o)s_o \geq n_{v_j} + \sum_{v:A^{v_j}\subsetneq A^{v}} n_{v}$ must not hold. This would be a contradiction. 
    
    Consequently, we have that the size of $M'$ must be greater than or equal to the size returned by \cref{alg:LaminarPessMenu}. Consequently, we have that \cref{alg:LaminarPessMenu} returns the minimum possible size of a pessimistically valid menu under $I$.

    \paragraph{Running Time.} \cref{alg:LaminarPessMenu} invokes $\textsc{Ident}$ (\cref{alg:IdentDP}) $|\mathcal{L}|\leq n-1$ times and \textsc{Ident} takes time  $O(n^3m^2)$. Consequently, \cref{alg:LaminarPessMenu} runs in time $O(n^4m^2)$.
\end{proof}

\section{Omitted Material from \cref{sec:wop}}\label{app:wop}


\begin{proof}
    Fix a \Instance{} instance $I=\langle N,O,(s_o)_{o\in O},(A_i)_{i\in N}\rangle$. 
    
    \paragraph{Laminar.} We first consider the laminar case. Given instance $I$, we shall create an alternate instance $I'$, also with laminar acceptability relations with the same number of agents  and minimal acceptability sets, but which will always satisfy $WoP(I)\leq WoP(I')$ and then reason about its maximum possible waste of pessimism value. We will prove this bound as follows:

    \begin{enumerate}
        \item[i.]   Create an alternate instance $I'$ with laminar acceptability relations, with the same number of agents and same number of minimal acceptability sets. 
        \item[ii.]  Prove that $WoP(I)\leq WoP(I')$.
        \item[iii.] Find a worst case choice of serving sizes and number of minimal acceptability sets for WoP
        \item[iv.]  Demonstrate an instance where this bound is tight. 
    \end{enumerate}
    
    \noindent \underline{Creating $I'$.} Given $I$, construct instance $I'$ s.t. the laminar containment tree of $I'$ is a star graph that has the same number of leaves as that of $I$. Let $A^1,\cdots, A^k$ be the distinct acceptability sets under $I$ where $A^1$ is the root of the containment tree. Let $\mathcal{L}=\{A^{v_1},\cdots, A^{v_t}\}$ be the leaves of the containment tree in $I$.

    We set the distinct acceptability sets under $I'$ as $A^1\cup {o^*},A^{v_1},\cdots, A^{v_t}$. Here, we add an additional option $o^*$ with $s_{o^*}=1$ to the root of the tree. Further, we set the number of agents with acceptability set $A^v$ as $n_v=1$ if $v\in L$ and $n_1=n-t$ where is $n$ is the number of agents in $I$. 

    Observe that any menu that under $I'$ each agent find the same set acceptable as under $I$ or finds a superset of it acceptable. This means that any menu that  is optimistically valid under $I$ will continue to be optimistically valid under $I'$. Consequently, $\min_{M\in \Lambda(I)} \sum_{o\in O}q_M(o)s_o\geq \min_{M\in \Lambda(I')} \sum_{o\in O}q_M(o)s_o$. 
    
    Further, for pessimistically valid menus, from \cref{lem:LaminarPess}, the number of agents that need to be fed by any minimal set under $I'$ is at least as much as that under $I$. As a result, any menu that is pessimistically valid under $I'$ will be pessimistically valid under $I$. Consequently, $\min_{M\in \Gamma(I)} \sum_{o\in O}q_M(o)s_o\leq \min_{M\in \Gamma(I')} \sum_{o\in O}q_M(o)s_o$. Thus, we have that 
\sloppy
    \[WoP(I)=\frac{\min_{M\in \Gamma(I)}\sum_{o\in O}q_Ms_o}{\min_{M\in \Lambda(I)}\sum_{o\in O}q_Ms_o}\leq \frac{\min_{M\in \Gamma(I)}\sum_{o\in O}q_Ms_o}{\min_{M\in \Lambda(I')}\sum_{o\in O}q_Ms_o}\leq \frac{\min_{M\in \Gamma(I')}\sum_{o\in O}q_Ms_o}{\min_{M\in \Lambda(I')}\sum_{o\in O}q_Ms_o}=WoP(I').\]

    \noindent \underline{Worst Case $WoP(I')$.} We now compute the worst case value for the waste of pessimism for $I'$. Firstly, given an acceptability set $A^v$ if $\alpha=\min_{o\in A^v} s_o$ then the minimum possible menu size to feed $n'$ agents from $A^v$ would be at most $\alpha \lceil \frac{n}{\alpha}\rceil$. That is, in order to feed all $n'$ agents who find $A^v$ acceptable, we can simply order the item with the smallest serving size $\lceil \frac{n'}{\alpha}\rceil$ times in order to find a valid menu. Consequently, the minimum sized pessimistic menu size to feed $n$ agents under $I'$ would involve feeding $n-t+1$ agents from each leaf vertex $A^{v_j}$ for $j\in [t]$. Thus, the size of the smallest pessimistic menu will be at most $\sum_{j\in [t]}\alpha_j\lceil \frac{n-t+1}{\alpha_j} \rceil$ where $\alpha_j=\min_{o\in A^{v_j}} s_o$.

    Now recall that under $I'$ exactly one agent has $A_i=A^{v_j}$ for each $j\in [t]$. To feed this agent under an optimistically valid menu, exactly one option needs to be selected from $A^{v_j}.$ Specifically, this should be the item with the smallest serving size, i.e. it must have size $\alpha_j$. Thus, the minimum sized optimistically valid menu must have size at least $\sum_{j\in [t]} \alpha_j$. If for some $j\in [t]$, $\alpha_j>1$, this can additionally feed some agents at the root. Further, we had added option $o^*$ to $A^1$ with size $s_{o^*}=1$. If $\sum_{j} \alpha_j<n$, we can order $n-\sum_j \alpha_j$ quantities of $o^*$. Consequently, a minimum sized optimistic menu will have size $\max (n,\sum_{j\in [t]}\alpha_j)$. We now give a bound on $WoP(I')$ based on which term comes in the denominator.

    \textbf{Case 1:} $\sum_{j\in [t]}\alpha_j\geq n$. Recall that the size of the minimum sized pessimistic menu is 

    \begin{align*}
        \min_{M\in \Gamma(I')} \sum_{o\in O}q_M(o)s_o &\leq \sum_{j\in [t]} \alpha_j \left\lceil \frac{n-t+1}{\alpha_j} \right\rceil\\ 
                                                      &\leq \left\lceil \frac{n-t+1}{\min_{j\in [t]}\alpha_j} \right\rceil \sum_{j\in [t]} \alpha_j.\\
                                                      \text{This gives us that }
        WoP(I')                                       &= \frac{\left\lceil \frac{n-t+1}{\min_{j\in [t]}\alpha_j} \right\rceil\sum_{j\in [t]} \alpha_j}{\sum_{j\in [t]} \alpha_j}\\
                                                      &\leq \left\lceil \frac{n-t+1}{\min_{j\in [t]}\alpha_j} \right\rceil.
    \end{align*}

     Observe that this bound would be tight only when $\min_j \alpha_j=\alpha_j$. However, we still need $\sum_{j\in [t]}\alpha_j\geq n$. Thus, in order to maximize the waste of pessimism, we need $\alpha_j=\frac{n}{t}$ for all $j\in [t]$. Thus, to maximize $WoP(I')$, we need that $\sum_{j\in [t]}\alpha_j=n$. This brings us to the next case.

     \textbf{Case 2:} $\sum_{j\in [t]}\alpha_j\leq n$. In this case, we have that the denominator in the Waste of Pessimism bound will be $n$. For the numerator we have that

     \begin{align*}
          \min_{M\in \Gamma(I')} \sum_{o\in O}q_M(o)s_o &\leq \sum_{j\in [t]} \alpha_j \left \lceil \frac{n-t+1}{\alpha_j} \right\rceil\\ 
                                                        &\leq \sum_{j\in [t]} (n-t+\alpha_j)\\
                                                        &=t(n-t) + \sum_{j\in [t]} \alpha_j.
     \end{align*} 

     Again, to maximize $WoP(I)$, we set $\sum_j \alpha_j=n$. Consequently, it remains to choose a value of $t$ that maximizes the function $t(n-t)$. By some straightforward calculus,  we get that we need $t=\frac{n}{2}$, in order to maximize $WoP(I')$. Recall that we assume $\sum_{j\in t}\alpha_j=n$. By definition, for each $j\in [t]$, it must be that $\alpha_j\geq 1$. 
     
     For this case, in order to ensure that the upper bound is tight, we require that $\sum_{j\in t}\alpha_j=n$ and $\sum_{j\in [t]} \alpha_j \left\lceil \frac{n-t+1}{\alpha_j} \right\rceil=\sum_{j\in [t]} (n-t+\alpha_j)$. To this end, we need that for every $j\in [t]$, $\alpha_j$ divides $n-t$. This includes the choice of $\alpha_j=\frac{n}{t}$, which gave the tight bound for Case 1, which in this setting gives us $\alpha_j=2$ for all $j$. To simplify our analysis for the upper bound, we assume $\alpha_j=2$.

     To this end, we get that for any \Instance{} instance $I$ with $n$ agents, 

     \[WoP(I)\leq \frac{(n/2)(n/2+2)}{n}=\frac{n+4}{4}.\]

     \noindent \underline{Tight example.}  Consider the following instance where this bound is tight: set $n=4t'$, $O=\{o_1,\cdots, o_{2t'+1}\}$ where $s_{o}=2$ for all $o\in O$. Further, set $A_i=\{o_i\}$ for $t\leq 2t'$ and $A_i=O$ for $i>2t'$. Here, the minimum sized optimistic menu would have to order at least one quantity of $o_i$ to feed agent $i$ for each $i\leq 2t'$. Each $o_i$ can also simultaneously feed agent $2t'+i$ for $i\leq 2t'$. Consequently, we have that the minimum optimistically valid menu has size $n$. 
     
     Meanwhile, in order to ensure pessimistic validity, each of $n/2$ minimal sets need to feed $n/2+1$ people. As $n=4t'$, we need enough of $o_i$ to feed $2t'+1$ agents. As $s_{o_i}=2$, we need to order $t'+1$ quantities of $o_i$, for each $i\in [2t']$. Thus, the minimum pessimistic menu would have size $(\frac{n}{2})(\frac{n}{2}+2)$. This exactly matches the upper bound obtained.

    \paragraph{Chained.} We now consider the case where the laminar tree is simply a path. That is, for the distinct acceptability sets $A^1,\cdots,A^k$, we have that they form a chain of subsets $A^k\subseteq \cdots \subseteq A^1$.

    Observe that under a pessimistically valid menu must be able to feed everyone from $A^k$, while an optimistically valid menu need not. As we argued in the case for laminar, the minimum sized menu to feed $n$ agents from $A^k$ would have size at most $\alpha \lceil \frac{n}{\alpha}\rceil\leq n+\alpha -1$ where $\alpha=\min_{o\in A^k} s_o$. Consequently, in order to maximize the waste of pessimism under $I$, we need $\alpha$ to be suitably large.
    
    If $\alpha\geq n$, clearly, there would be no waste of pessimism, as even in the minimum sized optimistic menu, to feed the agents with acceptability set $A^k$, we need a size of at least $\alpha$ which will allow us to feed all other agents as well. Thus, the maximum possible waste of pessimism would come when $\alpha=n-1$. As a result, we have that

    \[WoP(I)\leq \frac{n+n-1}{n}=\frac{2n-1}{n}\leq 2.\]

    A tight example would come when we have $n$ agents with two options $O=\{o_1,o_2\}$ where $s_{o_1}=n-1$ and $s_{o_2}=1$. We set $A_1=A_2=\cdots=A_{n-1}=\{o_1\}$ and $A_n=\{o_1,o_2\}$. 

    \paragraph{Identical.} Observe that under identical acceptability relations every optimistically valid menu is pessimistically valid. Hence there would be no waste of pessimism. 
\end{proof}

\end{document}